\definecolor{ForestGreen}{rgb}{0.1333,0.5451,0.1333}
\definecolor{DarkRed}{rgb}{0.8,0,0}
\definecolor{Red}{rgb}{1,0,0}
\declaretheorem[numberwithin=section]{theorem}
\declaretheorem[numberlike=theorem]{lemma}
\declaretheorem[numberlike=theorem,name=Lemma]{lem}
\declaretheorem[numberlike=theorem,name=Proposition]{prop}
\declaretheorem[numberlike=theorem]{corollary}
\declaretheorem[numberlike=theorem,name=Corollary]{cor}
\declaretheorem[numberlike=theorem,style=definition]{definition}
\declaretheorem[numberlike=theorem,style=definition,name=Definition]{defn}
\declaretheorem[numberlike=theorem]{claim}
\declaretheorem[numberlike=theorem,style=remark]{remark}
\theoremstyle{remark}
\newtheorem*{rem*}{Remark}
\newcommand{\alert}[1]{\textcolor{red}{#1}}
\newcommand{\eat}[1]{}
\def\ShowComment{True} 
\def\thatchapholtext#1{\textcolor{purple}{#1}}
\def\thatchaphol#1{\marginpar{$\leftarrow$\fbox{T}}\footnote{$\Rightarrow$~{\sf\textcolor{purple}{#1 --Thatchaphol}}}}
\def\danupon#1{\textcolor{orange}{DN: #1}}
\def\sorrachai#1{\marginpar{$\leftarrow$\fbox{S}}\footnote{$\Rightarrow$~{\sf\textcolor{green}{#1
-- Sorrachai}}}}
\def\jason#1{\marginpar{$\leftarrow$\fbox{J}}\footnote{$\Rightarrow$~{\sf\textcolor{blue}{#1 --Jason}}}}
\def\note#1{#1}
\def\alert#1{\textcolor{red}{#1}}
\def\thatchapholtext#1{}
\def\thatchaphol#1{}
\def\danupon#1{}
\def\shen#1{}
\def\sorrachai#1{}
\def\jason#1{}
\def\note#1{} 
\def\alert#1{}
\global\long\def\cP{\mathcal{P}}
\global\long\def\phipart{\phi_{\mathrm{part}}}
\global\long\def\polylog{\mathrm{polylog}}
\global\long\def\mincut{\mathrm{mincut}}
\global\long\def\path{\mathrm{path}}
\global\long\def\Otil{\tilde{O}}
\global\long\def\Sparsify{\textsc{Sparsify}}
\global\long\def\Partition{\textsc{Partition}}
\global\long\def\Refine{\textsc{Refine}}
\global\long\def\RefineOnSparsifier{\textsc{Refine}}
\global\long\def\PartialTree{\textsc{SmallConn}}
\global\long\def\GHtree{\textsc{GHtree}}
\global\long\def\GHtreeSparse{\textsc{GHtreeSparse}}
\global\long\def\ssc{$k$-\textsc{SSC}\xspace}
\global\long\def\sscv{$k$-\textsc{SSC Verification}\xspace}
\global\long\def\rep{\mathrm{rep}}
\newcommand{\lar}{{\textup{large}}}
\newcounter{algocounter}
\newcommand{\mylabel}[2]
    {\refstepcounter{algocounter}\protected@write\@auxout{}{\string\newlabel{#1}{{\textcolor{black}{\textup{#2}}}{\thepage}%
      {\@currentlabelname}{\@currentHref}{}}}}}%
\newcommand{\apmc}{{\sc apmc}\xspace}
\newcommand{\tO}{\tilde{O}}
\begin{document}

\title{A Nearly Optimal All-Pairs Min-Cuts Algorithm in Simple Graphs}
\author{Jason Li\\Carnegie Mellon University \and Debmalya Panigrahi\\Duke University \and Thatchaphol Saranurak\\University of Michigan}
\date{}

\maketitle

\eat{
\begin{abstract}
    The Gomory-Hu tree or cut equivalent tree (Gomory and Hu, 1961) is a classic 
    data structure for reporting $s-t$ min-cuts for all pairs of vertices $s$ and
    $t$ in an undirected graph. In a recent breakthrough, Abboud {\em et al.} 
    (FOCS 2021) obtained the first strictly subcubic (i.e., $o(n^3)$-time) algorithm 
    for constructing a Gomory-Hu tree of an unweighted graph on $n$ vertices. 
    Their algorithm runs in 
    $\tO(n^{2.5})$ time, and they posed the following main open question:
    is it possible to solve Gomory-Hu tree in quadratic (in $n$) time for simple graphs?
    Clearly, this would be 
    nearly optimal since the number of edges can be $\Omega(n^2)$. We answer this question in the affirmative in this paper by giving an algorithm that runs in $n^{2+o(1)}$ time. 
\end{abstract}
}
\begin{abstract}
We give an $n^{2+o(1)}$-time algorithm for finding $s$-$t$ min-cuts
for all pairs of vertices $s$ and $t$ in a simple, undirected graph on $n$ vertices. 
We do so by constructing a
Gomory-Hu tree (or cut equivalent tree) in the same running time,
thereby improving on the recent bound of $\Otil(n^{2.5})$
by Abboud {\em et al.} (STOC 2021). Our running time is nearly optimal as a function of $n$.
%
\end{abstract}
\pagenumbering{gobble}

\clearpage

\pagenumbering{arabic}

\section{Introduction}


An $s$-$t$ mincut is a minimum (weight/cardinality) set of edges in a graph whose removal disconnects two vertices $s, t$. Finding $s$-$t$ mincuts, and by duality the value of $s$-$t$ maxflows, is a foundational question in graph algorithms. Na\"ively, mincuts for all vertex pairs can be computed by running a maxflow algorithm separately for each vertex pair, thereby incurring $\Theta(n^2)$ maxflow calls on an $n$-vertex graph. In 1961, Gomory and Hu~\cite{GomoryH61} gave a remarkable result where they constructed a cut equivalent tree (or Gomory-Hu tree, after its inventors) that captures an $s$-$t$ mincut for every vertex pair $s,t$ using just $n-1$ maxflow calls. By plugging in the current fastest maxflow algorithm~\cite{BrandLLSSSW21minimum}, this gives an $\tO(mn + n^{5/2})$-time\footnote{$\tO(\cdot)$ suppresses poly-logarithmic factors.} algorithm for the all pairs min-cuts (\apmc) problem on an $n$-vertex, $m$-edge graph. Improving on Gomory and Hu's 60-year old algorithm for the \apmc problem on general, weighted graphs remains a major open question in graph algorithms.

For {\em unweighted} graphs however, we can do better. The first algorithm to do so was by Bhalgat~{\em et al.}~\cite{HariharanKPB07}, who used Steiner mincuts to obtain a running time of $\tO(mn)$ in unweighted graphs. Karger and  Levine~\cite{KargerL15} matched this bound using the same counting technique, but by a different algorithm based on randomized maxflow computations. In simple graphs, both these algorithms obtain a running time of $\tO(n^3)$ since $m = O(n^2)$. The first subcubic (in $n$) running time was recently obtained in a beautiful work by Abboud~{\em et al.}~\cite{AbboudKT20subcubic}, who achieved a running time of $\tO(n^{2.5})$ for simple graphs. 
They write: ``Perhaps the most interesting open question is whether $\Otil(m)$ time can be achieved, even in simple graphs and even assuming a linear-time maxflow algorithm.'' Interestingly, they also isolate why breaking the $n^{2.5}$ bound is challenging, and say: ``\ldots perhaps it will
lead to the first conditional lower bound for computing a Gomory-Hu tree.''


In this paper, we give an $n^{2+o(1)}$-time Gomory-Hu tree algorithm in simple graphs, thereby improving on the $\Otil(n^{2.5})$ bound of Abboud {\em et al.} 
Our result is {\em unconditional} -- specifically, we do not need to assume an $\tO(m)$-time maxflow algorithm. As a consequence, we also refute the possibility of a $n^{2.5}$ lower bound for the Gomory-Hu tree problem. 
Since there are ${n\choose 2} = \Theta(n^2)$ vertex pairs, the running time of our algorithm is near-optimal for the all-pair mincuts problem. 
Even if one were to only construct a Gomory-Hu tree (and not report the mincut values explicitly for all vertex pairs), our algorithm is near-optimal as a function of $n$ since $m$ can be $\Theta(n^2)$. 

Our main theorem is the following:

\begin{theorem}
\label{thm:main}There is an algorithm $\GHtree(G)$ that, given a
simple $n$-vertex $m$-edge graph $G$, with high probability computes
a Gomory-Hu tree of $G$ in $n^{2+o(1)}$ time. 
\end{theorem}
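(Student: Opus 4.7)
The plan is to separate all $\binom{n}{2}$ $s$-$t$ mincut pairs by magnitude relative to a threshold $\Rsmall$ chosen as a subpolynomial function of $n$. Call a pair $(s,t)$ \emph{small} if its mincut value is at most $\Rsmall$ and \emph{large} otherwise. I would handle the two regimes with separate subroutines and stitch their outputs via $\Refine$. For the small regime, the main tool is \ssc, which computes, for a given source, all mincuts of value at most $\Rsmall$ to every other vertex. In a simple graph, the candidate small sides of cuts of value at most $\Rsmall$ are structurally constrained, and the isolating-cuts framework batches many maxflow computations into a near-linear number of calls on contracted graphs; iterating \ssc over a carefully chosen set of sources then recovers all small-mincut pairs, with a $\sscv$ pass used to certify correctness.

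For the large regime, I would apply $\Sparsify$ to $G$ to produce a sparsifier $\Ghat$ with $\tO(n)$ edges that preserves every cut of value exceeding $\Rsmall$. Since the small pairs have already been processed in the previous phase, a Gomory-Hu tree on $\Ghat$ captures the remaining information, and I would obtain it by a recursive call to $\GHtreeSparse$. Because $\Ghat$ is sparse, $n-1$ invocations of an almost-linear-time maxflow subroutine on $\Ghat$ and its contractions (following the classical Gomory-Hu scheme) cost only $\tO(n^2)$ in total, while the recursion terminates after $n^{o(1)}$ depth.

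The main obstacle I anticipate lies in controlling the work of the small regime. A naive bound on \ssc with threshold $\Rsmall$ from $n$ sources would be $\Omega(n^2 \Rsmall)$ or worse, already too slow for any $\Rsmall = \omega(1)$. To push the cost down to $n^{2+o(1)}$, I would decompose the range $[1,\Rsmall]$ into $n^{o(1)}$ geometric scales, and at each scale combine expander decomposition, isolating cuts, and vertex sampling so that work is charged to edges rather than to pairs. Ensuring that the output of $\Sparsify$ is compatible with the partial tree produced in the small phase—so that $\Refine$ correctly merges the two into a valid $\GHtree(G)$—and bounding the recursion's blow-up at each level are where the bulk of the technical effort lies.
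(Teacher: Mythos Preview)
Your large-regime step has a fatal gap. You claim that $\Sparsify$ yields a graph $\Ghat$ with $\tO(n)$ edges that ``preserves every cut of value exceeding $\Rsmall$.'' This is backwards: a $k$-connectivity certificate (Nagamochi--Ibaraki) has $O(nk)$ edges and preserves exactly the cuts of size \emph{less than} $k$; cuts of size at least $k$ are only guaranteed to remain at least $k$, not to be preserved. No construction gives $\tO(n)$ edges while exactly preserving all large cuts---a graph with $\tO(n)$ edges has average degree $\tO(1)$, so almost every singleton cut is tiny. Hence $\Ghat$ does not encode the large mincuts of $G$, and running Gomory--Hu on $\Ghat$ tells you nothing about them. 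The ``recursive call to $\GHtreeSparse$'' and the claimed $n^{o(1)}$ recursion depth are therefore unsupported.

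The paper's approach inverts your threshold logic. It sets the threshold at $c=\sqrt n$ (not subpolynomial) and handles cuts of size at most $\sqrt n$ with the known $\tO(m+nc^2)=\tO(n^2)$ partial-tree routine of Hariharan et al. For the large regime it iterates over geometric scales $d=\sqrt n,2\sqrt n,\ldots,n$. At scale $d$ it builds a $3d$-certificate $H$ with $O(nd)$ edges---so $H$ exactly preserves the cuts of size $\le 2d$ relevant at this scale---and then partitions $V_{\ge d}$ into $\tO(n/d)$ sets that are $(d,n^{-o(1)})$-\emph{well-linked} in $H$. The refinement lemma (using isolating cuts and the single-source mincut oracle) processes each well-linked set with $n^{o(1)}$ maxflow calls on graphs of total size $\tO(nd)$; the cost per scale is $(n/d)\cdot(nd+n^{1.5})\cdot n^{o(1)}=n^{2+o(1)}$ once $d\ge\sqrt n$. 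The well-linked decomposition is the missing ingredient: it is what replaces the na\"ive $n-1$ maxflows at each scale by $\tO(n/d)\cdot n^{o(1)}$ calls, and your proposal has no analogue of it.
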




\eat{
\begin{itemize}
\item Quote \cite{AbboudKT20subcubic}  ``Perhaps the most interesting open question is whether $\Otil(n^{2})$ time can be achieved, even in simple graphs and even assuming Hypothesis 1.5 (near-linear time max flow)''
\item Our algorithm significantly improves the $\Otil(n^{11/4})$-time algorithm by Abboud et al.~\cite{AbboudKT20subcubic}. Even after
assuming near-linear maxflow algorithm, their algorithm still takes
$\Otil(n^{2.5})$ time.
\item As an immediate consequence, we near-optimally give an algorithm for
computing all-pairs min cuts because of the output size is already
$\Omega(n^{2})$.
\end{itemize}

\begin{cor}
Near optimal all-pairs min cuts algorithm. 
\end{cor}
}

Our techniques also yield a faster Gomory-Hu tree algorithm in sparse graphs.
The previous record for sparse graphs is due to another recent algorithm of 
Abboud {\em et al.}~\cite{AbboudKT20soda} that 
takes $O(mc+\sum_{i=1}^{m/c}T(m,n,F_{i}))$ time, where $\sum_{i}F_{i}=O(m)$
and $T(m,n,F_{i})$ is the time complexity for computing a maxflow 
of value at most $F_{i}$. (Here, $c$ is a parameter that can be chosen 
by the algorithm designer to optimize the bound.) 
We improve this bound in the following theorem to 
$\Otil(mc)+\frac{n^{1+o(1)}}{c}\cdot T(m,n)$ where $T(m,n)$ is the time complexity 
for computing a maxflow. For comparison, if we 
assume an $\tO(m)$-time maxflow algorithm, then the running time
improves from $\tO(m^{1.5})$ in \cite{AbboudKT20soda} to $mn^{0.5+o(1)}$ 
in this paper. Using existing maxflow algorithms~\cite{KathuriaLS20,BrandLLSSSW21minimum}, 
the bound is 
$\tO(m\cdot g(m,n))$ in \cite{AbboudKT20soda} where $g(m,n) = \min(m^{1/2}n^{1/6}, m^{1/2} + n^{3/4})$,
and improves to $\sqrt{mn}\cdot n^{o(1)}\cdot g(m,n)$ in this paper.

\begin{theorem}
\label{thm:main sparse}There is an algorithm $\GHtreeSparse(G)$
that, given a simple $n$-vertex $m$-edge graph $G$, with high probability
computes a Gomory-Hu tree of $G$ in $\Otil(mc)+\frac{n^{1+o(1)}}{c}\cdot T(m,n))$
time where $T(m,n)$ denotes the time complexity for computing a maximum
flow on an $n$-vertex $m$-edge graph and $c$ is a parameter that
we can choose. 
\end{theorem}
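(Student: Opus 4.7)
The plan is to follow the standard Gomory-Hu recursive framework, but to split the work into two regimes according to a threshold $c$ on the min-cut value. For pairs $(s,t)$ with min-cut at most $c$, I would invoke the Steiner-mincut / partial Gomory-Hu tree machinery of Bhalgat--Hariharan--Kavitha--Panigrahi (and Karger--Levine), which in $\Otil(mc)$ time produces a forest that correctly encodes a witness mincut, and hence the mincut value, for every such pair. This accounts for the first term $\Otil(mc)$ in the bound and leaves us with the residual task of gluing these pieces together by the ``heavy'' edges of the final Gomory-Hu tree, namely edges of weight strictly greater than $c$.

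For the residual task, I would reuse the recursive divide-and-conquer scheme that underlies \Cref{thm:main}. That scheme builds the Gomory-Hu tree by repeatedly executing a single-source min-cut step at a pivot vertex and recursing on the two sides. The speed-up over \cite{AbboudKT20soda} ultimately comes from bounding the total number of \emph{heavy} recursive steps --- those that produce a Gomory-Hu edge of weight more than $c$ --- by only $n^{1+o(1)}/c$, rather than the $m/c$ bound implied by their degree-based analysis. Once this bound is in hand, each heavy step can be implemented with a single max-flow call on an $n$-vertex, $m$-edge minor of $G$, yielding the second term $(n^{1+o(1)}/c)\cdot T(m,n)$.

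The main obstacle, and the heart of the improvement, is establishing the $n^{1+o(1)}/c$ bound on the number of heavy recursive steps. I would prove it via the same structural amortization developed for \Cref{thm:main}: show that the total ``mincut weight'' summed across recursive calls whose actual mincut exceeds $c$ is at most $n^{1+o(1)}$, by charging each such call to a distinct piece of vertex budget in the isolating-cuts / expander-decomposition hierarchy, and then dividing by $c$. The remaining steps are routine: composing the partial tree from the first phase with the heavy-edge skeleton from the second phase produces a valid Gomory-Hu tree, and summing the two contributions yields the claimed running time of $\Otil(mc)+\frac{n^{1+o(1)}}{c}\cdot T(m,n)$.
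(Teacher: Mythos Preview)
Your first phase---using the partial Gomory--Hu tree of \cite{HariharanKP07,HariharanKPB07} to handle all mincuts of value at most $c$ in $\Otil(mc)$ time---is exactly what the paper does. The gap is in the second phase.

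The paper does \emph{not} obtain the $n^{1+o(1)}/c$ factor by a recursive pivot-and-split scheme or by bounding ``total mincut weight''. In fact, the claim you hinge on---that the sum of mincut values over heavy Gomory--Hu edges is at most $n^{1+o(1)}$---is false: in the complete graph $K_n$ every Gomory--Hu edge has weight $n-1$, so the total is $\Theta(n^2)$, and dividing by $c$ does not recover $n^{1+o(1)}/c$. No amortization of this form can work, and the description of \Cref{thm:main} as ``single-source min-cut at a pivot and recurse on two sides'' is not what the paper does.

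The actual mechanism is the well-linked partitioning of \Cref{lem:partition}: for each level $d=c,2c,\ldots,n$, the high-degree vertices $V_{\ge d}$ are partitioned into only $\Otil(n/d)\le\Otil(n/c)$ sets that are $(d,\phipart)$-well-linked, and this is precisely where the simple-graph hypothesis is used (each well-linked set sits inside an expander cluster of size $\ge d/2$, so there are at most $2n/d$ of them). Each well-linked set is then refined via \Cref{lem:refineOnSparsifier} using only $n^{o(1)}$ max-flow calls on $O(n)$-vertex, $O(m)$-edge instances. Summing over $O(\log n)$ levels and $\Otil(n/c)$ sets per level gives the $\frac{n^{1+o(1)}}{c}\cdot T(m,n)$ term. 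Your proposal omits the well-linked decomposition entirely, and without it there is no correct route to the $n/c$ count.
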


Before closing this section, we mention some other results on Gomory-Hu 
trees, and consequently for the \apmc problem. 
Gusfield~\cite{Gusfield90} gave an algorithm that simplifies
Gomory and Hu's algorithm, particularly from an implementation perspective,
although it did not achieve an asymptotic improvement in the running time.
If one allows a $(1+\epsilon)$ approximation, then faster algorithms 
are known; in fact, the problem can be solved using (effectively) $\polylog(n)$ 
maxflow calls~\cite{AbboudKT20focs,LiP21approximate}. Finally, 
there is a robust literature on Gomory-Hu 
tree algorithms for special graph classes. This includes near-linear 
time algorithms for the class of planar graphs~\cite{BorradaileSW10} and more 
generally, for surface-embedded graphs~\cite{BorradaileENW16}, 
as well as improved runtimes for graphs with bounded treewidth~\cite{ArikatiCZ98, AbboudKT20focs}.
For more discussion on the problem, the reader is referred to a survey
in the Encyclopedia of Algorithms~\cite{Panigrahi16}.



\eat{
\section{Overview}
\begin{itemize}
\item Technical contribution: well-linked set instead of expanders. 
\item Notation: Gomory-Hu tree or cut-equivalent tree?
\end{itemize}
}

\paragraph{Organization.}

In \Cref{sec:algo}, we introduce the tools that we need for 
our Gomory-Hu tree algorithm. We then give the 
Gomory-Hu tree algorithm using these tools, and prove 
\Cref{thm:main} and \Cref{thm:main sparse}.
In subsequent sections, we show how to implement each 
individual tool and establish their respective properties. %

\section{Gomory-Hu Tree Algorithm}
\label{sec:algo}

We start this section by defining Gomory-Hu trees. It will be convenient 
to also define {\em partial} Gomory-Hu trees which will play an 
important role in our algorithm.

\begin{defn}
[Partial Gomory-Hu trees] Let $G=(V,E)$ be a graph. A \emph{partial Gomory-Hu
tree} or simply a \emph{partial tree} $(T,\cP)$ of $G$ satisfies the following:
\begin{itemize}
\item $T$ is a tree on $V(T)\subseteq V$ called a \emph{terminal} set, 
\item $\cP$ is a partition of $V$ where each part $V_{u}\in\cP$ contains
exactly one terminal $u$,
\item for any pair of terminals $u,v\in V(T)$, a $u$-$v$ mincut $(A_{T},B_{T})$
in $T$ corresponds to a $u$-$v$ mincut $(A,B)$ in $G$ where $A=\bigcup_{x\in A_{T}}V_{x}$
and $B=\bigcup_{y\in B_{T}}V_{y}$.
\end{itemize}
If $V(T)=V$, then $T$ is a \emph{Gomory-Hu tree} of $G$.
\end{defn}

\paragraph{Terminology about Partial Trees.}
Let $X\subseteq V$ be a vertex set. We say that a partial tree $(T,\cP)$ \emph{captures all mincuts separating $X$ of size at most $d$} if, for every part $U\in\cP$ and every pair of vertices $u,v\in U\cap X$, $\mincut_{G}(u,v)>d$. When $X=V$, we say that $(T,\cP)$ \emph{captures all mincuts of size at most $d$.} If all edges of $T$ have weight at most $d$, then we say that $(T,\cP)$ \emph{captures no mincut of size more than $d$.}

We say that $(T',\cP')$
is a \emph{refinement} of $(T,\cP)$ if $(T,\cP)$ can be obtained from
$(T',\cP')$ by contracting subtrees of $T'$ and taking the union of the corresponding
parts of $\cP'$. In other words, a refinement adds edges while preserving
the properties of a partial tree. The classic algorithm of Gomory and 
Hu~\cite{GomoryH61} starts with a vacuous partial tree comprising a single 
node and refines it in a series of $n-1$ iterations, where each iteration
adds a single edge to the tree. Our goal is to refine the partial tree 
faster by adding multiple edges in a single iteration. 
\eat{
We say that a partial tree $(T,\cP)$ \emph{captures 
all cuts of size at most $d$} if, for every part $U\in\cP$ and every 
pair of vertices $u,v\in U$, $\mincut_{G}(u,v) > d$. 
In a single iteration, we refine a partial tree that captures all 
cuts of size at most $d$ to one that captures all cuts of size at most
$2d$. Iterating over $d = 1, 2, 4, \ldots, n/2, n$ then gives our overall 
algorithm using only $O(\log n)$ iterations.
}
\eat{
We say that a partial tree $(T,\cP)$ \emph{captures 
all cuts of size at least $d$} if, for every part $U\in\cP$ and every 
pair of vertices $u,v\in U$, $\mincut_{G}(u,v)<d$. 
In a single iteration, we refine a partial tree that captures all 
cuts of size at least $2d$ to one that captures all cuts of size at least
$d$. Iterating over $d = n, n/2, n/4, \ldots, 1$ then gives our overall 
algorithm using only $O(\log n)$ iterations.
}

\paragraph{Well-linked Decomposition.}
The key to defining a single iteration of our algorithm that refines a 
partial tree is the notion of a well-linked decomposition. We first define 
a \emph{well-linked} set of vertices. 
\begin{defn}
We say that a vertex set $X$ is $(d,\phi)$-well-linked in a graph
$G$ if 
\begin{itemize}
\item For each $v\in X$, $\deg_{G}(v)\ge d$, where $\deg_{G}(v)$ is the degree of vertex $v$ in graph $G$, and
\item For each partition $(A,B)$ of $X$, $\frac{\mincut_{G}(A,B)}{d\cdot \min\{|A|,|B|\}}\ge \phi$. Here, $\mincut_G(A, B)$ is the smallest cut of $G$ that has vertex subsets $A$ and $B$ on different sides of the cut.
\end{itemize}\end{defn}
The next lemma is an important technical contribution of our paper,
and says that the set of high-degree vertices can be partitioned
into a small number of well-linked sets. 
Actually, this is the only place in this
paper where we require that the input graph $G$ is a simple graph.

\begin{lem}
\label{lem:partition}
There is an algorithm $\Partition(G,d)$ that,
given a simple $n$-vertex $m$-edge graph $G$ and a parameter $d$,
returns with high probability a partition $\{X_{1},\dots,X_{k}\}$
of $V_{\ge d}=\{v\mid\deg_{G}(v)\ge d\}$ such that $k=\Otil(n/d)$
and every set $X_{i}$ is $(d,\phipart)$-well-linked in $G$, where
$\phipart=n^{-o(1)}$. The algorithm $\Partition(G,d)$ 
runs in $m^{1+o(1)}$ time.
\end{lem}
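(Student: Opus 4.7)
The plan is to construct the partition by a recursive balanced-sparsest-cut procedure whose core primitive is a cut-matching game. First, I would set up the following subroutine: given a candidate part $X \subseteq V_{\ge d}$, run $O(\log^2 n)$ rounds of a KRV/OSVV-style cut-matching game that attempts to route $d$ units of demand per vertex of $X$ through $G$. Each round is implemented by a single approximate max-flow in $G$ with source/sink capacities $d$ on $X$, and the overall subroutine either certifies that $X$ is $(d,\phi)$-well-linked in $G$ for some $\phi = n^{-o(1)}$, or returns a sparse cut $(A,B)$ of $X$ with $\mincut_G(A,B) \le \phi' \cdot d \cdot \min(|A|,|B|)$ for some $\phi' \ll \phi$. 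Using a near-linear-time approximate max-flow algorithm, each call runs in $m^{1+o(1)}$ time.

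The main recursion starts from $\mathcal{X} = \{V_{\ge d}\}$: repeatedly pick a part $X$ not yet certified well-linked and apply the primitive. If the returned sparse cut $(A,B)$ is \emph{balanced}, i.e., both $|A|, |B| \ge \Omega(|X|/\log n)$, replace $X$ by $\{A,B\}$ and recurse on both sides. If it is \emph{unbalanced}, peel the small side $A$ off into its own subproblem and re-test $X \setminus A$, whose well-linkedness has strictly improved relative to the previous iterate. Standard cut-matching analysis keeps the balanced-recursion depth at $O(\log^2 n)$, so together with the $m^{1+o(1)}$ per-call cost the total running time remains $m^{1+o(1)}$.

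The hard part is bounding the number of output parts by $k = \tilde{O}(n/d)$ rather than the trivial $|V_{\ge d}| \le n$, and this is where the \emph{simple-graph} hypothesis is essential. The key observation is that in a simple graph, any $S \subseteq V_{\ge d}$ with $|S| \le d/2$ satisfies $|E_G(S, V \setminus S)| \ge d|S|/2$, since $G[S]$ contains at most $\binom{|S|}{2}$ edges and each vertex of $S$ has degree $\ge d$ in $G$. I would use this to argue that every surviving well-linked piece has amortized size $\tilde{\Omega}(d)$: a piece that is too small would automatically carry a relatively sparse cut, which the primitive would already have detected, so termination can only happen once pieces reach sufficient size. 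Combined with $\sum_i |X_i| \le |V_{\ge d}| \le n$, this then yields $k \le \tilde{O}(n/d)$. Turning this intuition into a clean potential-function argument that correctly handles the interaction between balanced splits and unbalanced trimming steps is the technically hardest part of the proof.
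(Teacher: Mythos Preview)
Your overall strategy (cut-matching to certify well-linkedness or find a sparse cut, recurse) is close in spirit to the paper's use of a demand-weighted expander decomposition, but the argument for the bound $k=\Otil(n/d)$ has a real gap, and the paper resolves it in a way that your outline does not.

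The sentence ``a piece that is too small would automatically carry a relatively sparse cut, which the primitive would already have detected'' is backwards. Your own observation says the opposite: if $S\subseteq V_{\ge d}$ has $|S|\le d/2$, then $|E_G(S,V\setminus S)|\ge d|S|/2$, i.e.\ the boundary of a small piece is \emph{dense}, not sparse. A small $X$ can therefore be perfectly $(d,\phi)$-well-linked in $G$ (routing $d$ units per vertex is easy when the boundary is large), so the primitive will happily certify it and terminate. Nothing in your recursion prevents many tiny well-linked pieces from appearing. Moreover, your recursion only tracks a partition of $V_{\ge d}$; the edges from a small piece $X_i$ to $V\setminus V_{\ge d}$ are never ``charged'' to any sparse cut you produce, so a potential-function argument based on total cut size cannot see them.

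The paper's fix is structural rather than charging-based: it runs a demand-weighted expander decomposition on \emph{all of $V$} (with demand $d$ on vertices of $V_{\ge d}$ and $0$ elsewhere), obtaining clusters $V_1,\dots,V_\ell$ that partition $V$, and then sets $X_i=V_{\ge d}\cap V_i$. The simple-graph argument is applied to a single vertex $v\in X_i$: since $|E_G(V_i,V\setminus V_i)|$ is small on the kept clusters, some $v\in X_i$ has at most $d/2$ edges leaving $V_i$, hence at least $d/2$ distinct neighbors \emph{inside} $V_i$, forcing $|V_i|\ge d/2$. Because the $V_i$ partition $V$ (not $V_{\ge d}$), this immediately gives $k\le 2n/d$. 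To make your approach work you would need to maintain, alongside each $X$, a region $R\supseteq X$ with the $R$'s partitioning $V$, and show each final region has size $\Omega(d)$; this is essentially what the black-box expander decomposition gives you for free.

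A secondary issue: each of your recursive calls runs max-flow on the full graph $G$, and a binary recursion with $k$ leaves makes $\Theta(k)$ such calls, giving $\tilde O(k)\cdot m^{1+o(1)}=\tilde O(mn/d)\cdot n^{o(1)}$ time rather than $m^{1+o(1)}$. The paper avoids this by invoking the expander-decomposition theorem as a single $m^{1+o(1)}$-time black box (repeated $O(\log n)$ times to cover all of $V_{\ge d}$).
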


In a single iteration, our goal is to refine a partial tree that captures 
mincuts of size at most $d$ to one that captures mincuts of size 
at most $2d$.
For this, we would like to partition all the vertices 
in $V_{\ge d}$ using the above lemma, and repeatedly refine 
the partial tree so that it captures all mincuts of size at most $2d$ separating the terminal set that includes the vertices in the $(d, \phi)$-well-linked set $X_i$. 
But, doing this on the input graph $G$ would be too slow; instead,
we use a sparse connectivity certificate that preserves all 
cuts of size at most $3d$. This suffices since in this iteration, 
we only seek to capture cuts of size at most $2d$. 
\eat{
For the latter, it is sufficient to ensure that 
all vertices of degree at least $d$ are terminals in the partial
tree. Since Lemma~\ref{lem:partition} partitions these vertices into 
a small number of $(d, \phi)$-well-linked sets, it then suffices to give 
an algorithm that refines a partial tree by adding vertices in 
a $(d, \phi)$-well-linked set as terminals. Indeed, this single 
step of refining a partial tree with a well-linked set 
will constitute our main lemma. 
}
\paragraph{Connectivity Certificate.} 
We formally define connectivity certificates next.
\begin{defn}
For any graph $G=(V,E)$, a \emph{$k$-connectivity certificate} $H$
of $G$ is a subgraph of $G$ that preserves all cuts in $G$ of size
$< k$, and ensures that all cuts in $G$ of size 
$\ge k$ have size $\ge k$ in $H$ as well. In other words, for any cut $(S,V\setminus S)$,
we have $|E_{H}(S,V\setminus S)| \ge \min\{|E_{G}(S,V\setminus S)|, k\}$.
\end{defn}

\eat{
In a single iteration of our algorithm, the input is a partial tree 
that already captures all cuts of size at least $2d$. Consequently, 
the refinement algorithm can operate on a 
$2d$-connectivity certificate of the input graph, instead of the 
entire graph. 
}
The next lemma, due to Nagamochi and 
Ibaraki~\cite{NagamochiI92}, gives an efficient algorithm for 
obtaining a connectivity certificate.

\begin{lem}
[\cite{NagamochiI92}]\label{lem:sparsify}There is an algorithm $\Sparsify(G,k)$
that, given an $n$-vertex $m$-edge graph $G$ and a parameter
$k$, return a $k$-connectivity certificate $H$ of $G$ with at
most $\min\{m,nk\}$ edges in $O(m)$ time.
\end{lem}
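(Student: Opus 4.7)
The plan is to follow the classical Nagamochi--Ibaraki construction. Greedily build $k$ edge-disjoint spanning forests: set $E_0 = E(G)$, and for $i = 1, \dots, k$ let $F_i$ be any spanning forest of $(V, E_{i-1})$ and $E_i = E_{i-1} \setminus F_i$. Return $H = F_1 \cup \cdots \cup F_k$.

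For correctness, fix a cut $(S, V \setminus S)$ and write $c = |E_G(S, V \setminus S)|$. The key observation is that a spanning forest of a graph has exactly the same connected components as the graph itself; hence, as long as $(V, E_{i-1})$ still contains a cut edge, $F_i$ must contain at least one cut edge too (otherwise the endpoints of an omitted cut edge would lie in a common component of $F_i$ yet on opposite sides of the cut, a contradiction). Iterating $i = 1, \dots, k$, either the loop exhausts all cut edges at some step (so all $c$ of them sit inside $H$) or every iteration contributes a fresh cut edge (so $H$ collects at least $k$). Either way $|E_H(S, V \setminus S)| \geq \min(c, k)$, which is precisely the certificate condition.

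The size bound $|E(H)| \leq \min(m, nk)$ is immediate, since each forest has at most $n-1$ edges and $H \subseteq G$. The only nontrivial ingredient, and the part I expect to require the most care, is attaining the $O(m)$ total running time rather than the naive $O(mk)$ obtained by invoking a forest algorithm $k$ times independently. For this I would use Nagamochi--Ibaraki's single-pass maximum-adjacency ordering: visit vertices $v_1, \dots, v_n$ so that each $v_i$ has the maximum number of edges to $\{v_1, \dots, v_{i-1}\}$, and upon exposing $v_i$ label each edge $(v_j, v_i)$ with $j < i$ by its rank among the edges from $v_i$ to $\{v_1, \dots, v_j\}$. Declaring $F_i$ to be the set of edges of label $i$ simultaneously produces all $k$ forests, and the scan can be implemented with bucket queues in $O(m+n) = O(m)$ time. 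The combinatorial content that needs verification is that edges sharing a label really do form a spanning forest of the corresponding peeled subgraph, which is the core lemma established by Nagamochi and Ibaraki via a short exchange argument on the ordering.
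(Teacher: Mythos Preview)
Your proposal is correct and is precisely the Nagamochi--Ibaraki construction; note that the paper itself does not supply a proof of this lemma but simply cites \cite{NagamochiI92}, so there is no in-paper argument to compare against. Your sketch of the forest-peeling correctness argument and the single-pass maximum-adjacency ordering for the $O(m)$ bound is the standard route from the original reference.
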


\paragraph{The Main Lemma.}
We are now ready to state our main lemma, which constitutes a 
refinement of the partial tree.


\eat{
\begin{lem}
\label{lem:refineOnSparsifier}There is an algorithm $\RefineOnSparsifier(G,H,(T,\cP),X,\phi)$
that, given 
\begin{itemize}
\item graphs $G$ and $H$ on the same $n$ vertices but with $m$ and $m'$ edges respectively,
where $H$ is a $2d$-connectivity certificate of $G$ for some value $d$, 
\item a partial tree $(T,\cP)$ of $G$ that captures all cuts of size at
least $2d$, and 
\item a set $X$ that is $(d,\phi)$-well-linked in $H$, 
\end{itemize}
returns with high probability a partial tree $(T',\cP')$ of $G$
that is a refinement of $(T,\cP)$ where $V(T')=V(T)\cup X$ in $O(m')$
time plus $\tO(1/\phi)$ max flow calls where each graph instance
has $O(n)$ vertices and $O(m')$ edges.
\end{lem}
}

\begin{lem}
\label{lem:refineOnSparsifier}
There is an algorithm $\RefineOnSparsifier(G,H,(T,\cP),X,d,\phi)$
that given
\begin{itemize}
\item graph $G$ on $n$ vertices and $m$ edges, and a $3d$-connectivity certificate
$H$ of $G$ with $m'\le \min\{m,3nd\}$ edges,  
\item a partial tree $(T,\cP)$ of $G$ that captures all mincuts of size
at most $d$ and no mincut of size more than $2d$, and 
\item a set $X$ that is $(d,\phi)$-well-linked in $H$,
\end{itemize}
in $\Otil(\min\{m,nd\} / \phi)$ time plus max-flow calls on
several graph instances with $\Otil(n/\phi)$ vertices and $\Otil(\min\{m,nd\}/\phi)$ edges in total, 
returns with high probability a partial tree $(T',\cP')$ of $G$ where 
\begin{itemize}
\item $(T',\cP')$ is a refinement of $(T,\cP)$, and
\item $(T',\cP')$ captures all mincuts separating $X\cup V(T)$ of size
at most $2d$ and no mincut of size more than $2d$.
\end{itemize}
\end{lem}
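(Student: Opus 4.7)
The plan is to exploit the well-linkedness of $X$ in $H$ through the isolating-cuts framework of Li--Panigrahi, iterated at several random sampling rates. The key structural observation is that for any pair $u,v\in X$ whose mincut in $G$ has value at most $2d$, every such cut $(A,B)$ satisfies $\min\{|A\cap X|,|B\cap X|\}\le 2/\phi$. Indeed, since $H$ is a $3d$-connectivity certificate and the cut has value at most $2d<3d$, the value is faithfully preserved in $H$, and the well-linkedness definition then gives $\phi\cdot d\cdot \min\{|A\cap X|,|B\cap X|\}\le \mincut_H(A\cap X,B\cap X)\le 2d$. Hence every relevant mincut has at most $2/\phi$ terminals of $X$ on one side, which is the combinatorial sparseness we will exploit.

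The algorithm then iterates over sampling rates $r\in\{1/2,1/4,\ldots,\phi/4\}$, a total of $O(\log(1/\phi))$ values. At each rate, it samples a pivot set $P\subseteq X\cup V(T)$ independently at rate $r$, contracts every part $V_u\in\cP$ in $H$ to a supernode identified with its terminal $u$, and invokes the isolating-cuts subroutine to obtain, in $O(\log n)$ max-flow calls, a minimum cut separating each $p\in P$ from $P\setminus\{p\}$. For any pair $(u,v)$ whose true mincut has value in $(d,2d]$, the structural bound implies that at the sampling rate closest to $1/\min\{|A\cap X|,|B\cap X|\}$, the small side of the mincut meets $P$ in exactly one endpoint while the large side meets $P$; with constant probability, the isolating cut of that pivot then recovers the true $u$-$v$ mincut. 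Each discovered cut is used to split the appropriate part of $\cP$ and to insert a corresponding edge of weight at most $2d$ into $T$, which is a legal refinement because $(T,\cP)$ already captures all mincuts of value at most $d$ and no cut of value more than $2d$. Summing over rates and max-flow calls per rate yields $\tO(1/\phi)$ calls on instances with $O(n)$ vertices and $O(\min\{m,nd\})$ edges, matching the claimed max-flow budget, while the bookkeeping for contractions and union-find maintenance of $\cP$ runs in $\tO(\min\{m,nd\}/\phi)$ time.

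The main obstacle is coordinating the refinement between the partial-tree structure $(T,\cP)$ and the sparsifier $H$ without missing or duplicating cuts. Concretely one must argue that: (i) contracting the parts of $\cP$ in $H$ before each isolating-cut call is safe, since no mincut we still need to discover cuts through any part, by the capture hypothesis on $(T,\cP)$; (ii) discovered cuts can only refine, not contradict, the existing partial tree, which follows from the ``no mincut of size more than $2d$'' hypothesis together with the fact that every new cut also has value at most $2d$; and (iii) every pair $(u,v)$ with mincut value in $(d,2d]$ is separated by some round with high probability, by standard concentration over $\Theta(\log n)$ independent repetitions. A secondary technical point is that all max-flow work must be done on $H$ rather than on $G$, which is exactly why a $3d$-connectivity certificate is needed rather than a $2d$-certificate: the slack ensures that cuts on the boundary of the $\le 2d$ regime are faithfully preserved in $H$, so the well-linkedness estimate transfers back to $G$ without loss.
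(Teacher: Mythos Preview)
Your structural observation that any $(u,v)$-mincut of value at most $2d$ has $\min\{|A\cap X|,|B\cap X|\}\le 2/\phi$ is correct and is indeed the engine behind the paper's argument. However, the algorithm you sketch has a fatal flaw and a major missing component.

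\textbf{The contraction step is backwards.} You propose to contract \emph{every} part $V_u\in\cP$ to a single supernode before running isolating cuts. But after this contraction the only vertices left are the terminals in $V(T)$; every vertex of $X\setminus V(T)$ has been merged into its part's terminal, so there is no way to discover a cut that separates two vertices of $X$ lying in the same part. Yet separating such pairs is precisely the goal of refinement. Your justification (i), that ``no mincut we still need to discover cuts through any part,'' is exactly wrong: the hypothesis that $(T,\cP)$ captures all mincuts of size at most $d$ means every pair in the same part has mincut \emph{greater} than $d$, so the mincuts of value in $(d,2d]$ that we seek are precisely the ones that split parts. The paper instead builds, for each part $V_i$, the standard Gomory-Hu auxiliary graph $H_i$ by contracting the \emph{other} components of $T\setminus\{u_i\}$, leaving $V_i$ intact; the nontrivial bound $\sum_i |E(H_i)|\le \min\{3m,5nd\}$ (Proposition~\ref{lem:edgebound}) is what makes the total max-flow budget work, and your proposal has no analogue of it.

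\textbf{Building a partial tree from isolating cuts is not immediate.} Even after fixing the contraction, you cannot simply ``insert a corresponding edge into $T$'' for each discovered cut: the cuts found by isolating-cut calls at different sampling rates need not be laminar with one another or with the existing tree, and stitching them into a valid partial tree requires real work. The paper handles this inside each $H_i$ via a separate reduction (\Cref{lem:reduction}) that turns an oracle for single-source $k$-bounded mincut values into a partial-tree algorithm through a recursive scheme with its own $\polylog(n)$-depth analysis. Your isolating-cut-plus-sampling idea does appear, but only as the implementation of that oracle (\Cref{lem:single-source}), not as a direct refinement procedure on $(T,\cP)$.
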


Crucially, when $3nd \le m$, the running time in the above lemma does not depend on $m$, 
the number of edges in $G$. In other words, the algorithm does not
even read in the entire graph $G$, instead operating on the $3d$-connectivity
certificate $H$ directly.


\paragraph{Small Connectivities.}
Recall that in a single iteration, \Cref{lem:partition} produces
$\tO(n/d)$ sets each of which is $(d, \phipart)$-well-linked, and \Cref{lem:refineOnSparsifier} 
makes max-flow calls on graphs with $\Otil(n/\phipart)$ vertices and $\Otil(nd/\phipart)$ edges in total. 
The current fastest max flow algorithm gives the following
runtime:
\begin{theorem}[\cite{BrandLLSSSW21minimum}]\label{thm:maxflow subroutine}There
is an algorithm that can find, with high probability, a maximum flow
on a graph with $n$ vertices and $m$ edges in $\Otil(m+n^{1.5})$
time.
\end{theorem}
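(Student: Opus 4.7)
The plan is to prove this via an interior point method (IPM) applied to the standard linear programming formulation of $s$-$t$ max flow. First I would set up max flow as minimizing a linear cost over the polytope defined by edge capacity constraints together with flow conservation at every non-terminal vertex, then follow the central path of a self-concordant logarithmic barrier using Newton steps. A classical Renegar-style analysis already shows that $\Otil(\sqrt{m})$ iterations suffice, but each iteration would naively solve a weighted graph Laplacian system at cost $\Otil(m)$, giving only $\Otil(m^{1.5})$ overall --- too slow for dense instances where $m = \Theta(n^{2})$.

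To reach $\Otil(m + n^{1.5})$ two refinements are needed. First, I would replace the Renegar IPM with a \emph{robust} IPM that tolerates coordinate-wise relative error in the barrier Hessian and that, via a Lewis-weight-based reweighting of the capacity constraints, converges in only $\Otil(\sqrt{n})$ iterations rather than $\Otil(\sqrt{m})$. Second, I would exploit the fact that the weights and slacks along the central path change slowly from one iteration to the next, and plug in a randomized \emph{projection-maintenance} data structure that lazily updates the projection matrix via low-rank corrections together with sketched coordinate samples, so that the sequence of Laplacian solves is amortized across iterations instead of being paid for afresh each time.

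The main obstacle will be ensuring that these two layers remain compatible: the robust IPM must remain stable when fed only approximate Laplacian solves produced by the maintenance scheme, while the maintenance scheme must correctly detect and handle the (rare) coordinates whose weights change substantially between iterations and trigger a large rebuild only amortized-often. Once that compatibility is in place, each of the $\Otil(\sqrt{n})$ iterations contributes $\Otil(m)$ for a single high-accuracy nearly-linear-time Laplacian solve and $\Otil(n)$ amortized for the maintained primal/dual vectors, with the projection maintenance adding $\Otil(n^{1.5})$ in total from low-rank and sketched updates. A final $\Otil(m)$-time flow-rounding routine converts the approximately optimal fractional flow returned by the IPM into an exact integral max flow, giving the claimed $\Otil(m + n^{1.5})$ bound, with the randomization entering through the sketches used in projection maintenance and yielding the high-probability guarantee.
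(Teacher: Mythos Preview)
The paper does not prove this theorem at all: it is quoted verbatim as a black-box result from \cite{BrandLLSSSW21minimum} and used only as a subroutine inside the running-time analysis of \Cref{alg:n2 time}. There is therefore no ``paper's own proof'' to compare against; the intended ``proof'' here is simply the citation.

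Your outline is a reasonable high-level summary of the actual approach of \cite{BrandLLSSSW21minimum} (robust IPM with Lewis-weight reweighting to reduce the iteration count from $\Otil(\sqrt m)$ to $\Otil(\sqrt n)$, together with a projection-maintenance / heavy-hitter data structure that amortizes the per-iteration linear-algebra cost, followed by flow rounding). But in the context of the present paper no such argument is expected: the correct response is to invoke the cited reference and move on.
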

Using this algorithm, the runtime of the max flow calls in 
an iteration becomes 
$\tO(\frac{n}{d})\cdot \tO(nd + n^{1.5}) \cdot n^{o(1)} = (n^{2} + \frac{n^{2.5}}{d})\cdot n^{o(1)}$ 
(recall that $\phipart = n^{-o(1)}$ in \Cref{lem:partition}). While this suffices for 
$d \ge \sqrt{n}$, we need an additional trick to handle small 
connectivities, namely $d < \sqrt{n}$. 

\eat{
We say that a partial tree $(T,\cP)$ \emph{captures
all cuts of size at most $d$} if, for every part $U\in\cP$ and every
pair of vertices $u,v\in U$, $\mincut_{G}(u,v)>d$. 
}
The next theorem, due to Hariharan~{\em et al.}~\cite{HariharanKP07}
and Bhalgat~{\em et al.}~\cite{HariharanKPB07}, gives a fast algorithm 
for computing a partial tree that captures all small cuts:
\begin{theorem}
[\cite{HariharanKP07,HariharanKPB07}]\label{thm:partial tree subroutine}There
is an algorithm $\PartialTree(G,d)$ that, given a simple $n$-vertex
$m$-edge graph $G$ and a parameter $d$, returns with high probability
a partial tree $(T,\cP)$ that captures all cuts of size at most $d$
in $\Otil(\min\{md, m+nd^2\})$ time. 
\end{theorem}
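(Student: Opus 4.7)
The plan is to follow the two-phase framework pioneered by Karger for minimum cut and extended by Hariharan--Kavitha--Panigrahi and Bhalgat et al.\ to the partial Gomory--Hu tree setting. Phase one enumerates a polynomial-sized family of candidate cuts that provably contains a minimum $u$-$v$ cut for every pair with $\mincut_G(u,v)\le d$; phase two stitches these candidates into a partial tree $(T,\cP)$ that captures exactly the mincuts of value at most $d$.

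For phase one I would first apply the Nagamochi--Ibaraki algorithm (\Cref{lem:sparsify}) with parameter $k \approx d$ to obtain a subgraph $H$ with $\min\{m, O(nd)\}$ edges that preserves all cuts of value at most $d$ exactly. I would then compute a maximal packing of edge-disjoint spanning trees of $H$ using Gabow's round-robin augmentation, producing $\Theta(d)$ trees in time $\tO(|E(H)|\cdot d) = \tO(\min\{md, nd^2\})$. A classical edge-swapping argument (Nash-Williams / Karger / Thorup) shows that every cut of value at most $d$ is \emph{$2$-respected} by at least one tree in the packing, meaning it crosses that tree in at most two edges.

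For phase two, for each tree $T_i$ in the packing I would compute, for every relevant pair of tree edges $(e_1,e_2)$, the weight of the fundamental cut they induce, and retain all pairs whose induced cut has value at most $d$. The naive cost is prohibitive, but the HKP/HKPB machinery reduces this to near-linear time per tree by combining heavy-path decomposition with amortized range-query structures that aggregate non-tree-edge contributions; summing over the $\Theta(d)$ trees in the packing yields the claimed $\tO(\min\{md, nd^2\})$ bound. From the resulting list of small cuts one recovers the partial tree by a Bor\r{u}vka-style construction: start with each vertex as its own part, and for each candidate cut of value $\le d$ either add a corresponding edge to $T$ or merge parts of $\cP$; the laminar/submodular structure of inclusion-minimal small cuts guarantees that the result is a well-defined partial Gomory--Hu tree that captures every mincut of value at most $d$.

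The main obstacle is the data-structural component of phase two, since running an $s$-$t$ min-cut routine separately for each candidate edge pair would immediately blow the budget; correctness of the Bor\r{u}vka-style stitching also needs a short lemma arguing that the 2-respecting cuts form a laminar family after appropriate uncrossing. I would treat the 2-respecting enumeration as a black box from \cite{HariharanKP07,HariharanKPB07}. Finally, the dichotomy $\min\{md,\, m+nd^2\}$ inside the $\Otil$ arises simply from whether the tree-packing step is applied to $G$ directly (cost $\tO(md)$) or to the Nagamochi--Ibaraki sparsifier $H$ of $G$ (cost $\tO(nd^2)$, with an additive $\tO(m)$ paying for sparsifier construction).
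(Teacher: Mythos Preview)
The paper does not prove this theorem; it is quoted from \cite{HariharanKP07,HariharanKPB07} and used only as a black box (in Step~1 of \Cref{alg:n2 time} and in the running-time analysis). So there is no ``paper's own proof'' to compare your sketch against; what follows is a check of your outline against the cited results.

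Your sketch is in the spirit of the tree-packing techniques of \cite{HariharanKP07}, but it has a genuine gap in the packing step. You assert that Gabow's round-robin on the sparsifier $H$ produces $\Theta(d)$ edge-disjoint spanning trees, and then invoke pigeonhole to conclude that every cut of size at most $d$ two-respects one of them. But by Nash--Williams, the number of edge-disjoint spanning trees in any graph is at most its edge connectivity, and a graph whose $d$-partial tree is nontrivial has, by definition, edge connectivity strictly less than $d$ --- often far less. Sparsifying to $O(nd)$ edges does not help: $H$ inherits all small cuts of $G$. What Gabow's procedure actually gives you is $\Theta(d)$ edge-disjoint \emph{forests}, and then the naive pigeonhole argument (``$\le d$ cut edges spread over $\ge d/2$ trees, hence some tree gets $\le 2$'') breaks down, since a non-spanning forest may contribute zero edges across a given cut, and removing one of its edges need not determine a bipartition of $V$. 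The actual arguments in \cite{HariharanKP07,HariharanKPB07} work around this either by exploiting the Nagamochi--Ibaraki forest hierarchy more delicately, or (in \cite{HariharanKPB07}) by abandoning the global two-respecting enumeration altogether in favor of fast Steiner-mincut computations inside a Gomory--Hu-style recursion. Your ``Bor\r{u}vka-style stitching'' paragraph is also too loose: the family of all cuts of value $\le d$ is not laminar, and turning an enumerated list of small cuts into a valid partial tree with the correct mincut guarantee on every pair is itself a substantial part of those papers, not a routine post-processing step.
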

If we set $d = \sqrt{n}$, then this theorem gives a partial tree
that captures all cuts of size at most $\sqrt{n}$ in $\tO(n^2)$
time. We initialize our algorithm with this partial tree, and then
run the iterative refinement process described above for 
$d = \sqrt{n}, 2 \sqrt{n}, \ldots, n/2, n$ to obtain the Gomory-Hu tree.
We formally describe this algorithm below and prove its correctness
and runtime bounds.

\begin{algorithm}[H]
\begin{enumerate}
\item Initialize $(T,\cP)\gets\PartialTree(G,c)$ where $c$
is a parameter we can choose. 
\item For $d=c,2c,\ldots,n/2, n$
\begin{enumerate}

\item $H\gets\Sparsify(G,3d)$
\item $\{X_{1},\dots,X_{\Otil(n/d)}\}\gets\Partition(H,d)$
\item For each $X_{i}$, $(T,\cP)\gets\RefineOnSparsifier(G,H,(T,\cP),X_{i},d,\phipart)$
\end{enumerate}
\item Return $T$
\end{enumerate}
\caption{\label{alg:n2 time}$\protect\GHtree(G)$}
\end{algorithm}

\paragraph{The Gomory-Hu Tree Algorithm.} The algorithm is 
given in \Cref{alg:n2 time}. We first establish correctness
of the algorithm. The next property formalizes the progress
made by the algorithm in a single iteration of the 
for loop.

\begin{lem}
\label{lem:correct warm up}At the beginning of each for-loop iteration
of \Cref{alg:n2 time}, if $(T,\cP)$ is a partial tree of $G$ that captures all mincuts
of size at most $d$ and no mincut of size more than $d$, 
then at the end of the iteration, $(T,\cP)$
captures all mincuts of size at most $2d$ and no mincut of size more
than $2d$. 
\end{lem}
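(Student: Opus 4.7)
My plan is to track the state of the partial tree through the $k = \Otil(n/d)$ calls to $\RefineOnSparsifier$ within a single iteration of the for-loop. Denote by $(T^{(0)}, \cP^{(0)})$ the partial tree at the start of the iteration, and by $(T^{(i)}, \cP^{(i)})$ the tree after processing the $i$-th well-linked set $X_i$. The proof breaks into three steps: verifying that the preconditions of $\RefineOnSparsifier$ are maintained across all calls; composing the per-call post-conditions to obtain a cumulative coverage claim and the edge-weight bound; and arguing that the covered vertex set actually equals $V$, which I anticipate to be the key step.

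\textbf{Preconditions.} The sparsifier $H$ is a $3d$-connectivity certificate by \Cref{lem:sparsify}, and each $X_i$ is $(d, \phipart)$-well-linked in $H$ by \Cref{lem:partition}. The partial-tree precondition is maintained by induction on $i$: the property ``captures all mincuts of size at most $d$'' is preserved under refinement (each part of $\cP^{(i-1)}$ is contained in a part of $\cP^{(0)}$, so any pair in the same refined part has mincut $> d$ by the starting invariant), and ``captures no mincut of size more than $2d$'' holds either from the starting invariant (all tree edges have weight at most $d \leq 2d$) or from the previous call's post-condition.

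\textbf{Cumulative post-conditions.} The ``no mincut of size more than $2d$'' part of the conclusion is immediate from the post-condition of each call. For the capture property, I first observe that refinement preserves ``captures all mincuts of size at most $2d$ separating $Z$'' for any fixed vertex set $Z$: if $u, v$ in the same refined part both lie in $Z$, they lie in the same coarser part, and the bound $\mincut_G(u, v) > 2d$ carries over. Applying this to the post-condition of each call $i$ (which gives capture of mincuts $\leq 2d$ separating $X_i \cup V(T^{(i-1)})$), the final tree $(T^{(k)}, \cP^{(k)})$ captures all mincuts of size at most $2d$ separating the union $\bigcup_i (X_i \cup V(T^{(i-1)})) \supseteq V(T^{(0)}) \cup \bigcup_i X_i$.

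\textbf{Main obstacle: covering all of $V$.} The crucial step is to show that $V(T^{(0)}) \cup \bigcup_i X_i = V$, so that the cumulative coverage extends to every pair of vertices. First, the well-linked sets $X_i$ cover every vertex of degree $\geq d$ in $G$: because $H$ is a $3d$-connectivity certificate, $\deg_H(v) \geq \min(\deg_G(v), 3d) \geq d$ whenever $\deg_G(v) \geq d$, so such a $v$ lies in $V_{\geq d}(H) = \bigcup_i X_i$. Second, the starting invariant forces every vertex of degree $\leq d$ in $G$ to be a terminal in $T^{(0)}$: otherwise such a $v$ shares a part with a distinct terminal $w$, forcing $\mincut_G(w, v) \leq \deg_G(v) \leq d$ and contradicting ``captures all mincuts of size at most $d$''. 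Combining these containments gives $V(T^{(0)}) \cup \bigcup_i X_i \supseteq V_{\leq d} \cup V_{\geq d} = V$, and the cumulative coverage now yields the desired ``captures all mincuts of size at most $2d$'' conclusion.
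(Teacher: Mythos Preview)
Your argument tracks the paper's closely but routes the final step differently: the paper splits on whether an uncaptured mincut has size $\le d$ (handled by the starting hypothesis) or lies in $(d,2d]$ (in which case both endpoints have degree $>d$ and hence sit in $\bigcup_i X_i$), whereas you aim to show $V(T^{(0)})\cup\bigcup_i X_i=V$ outright, via the pleasant observation that any vertex of degree $\le d$ must already be a terminal in $T^{(0)}$. That observation is correct, and this alternative route is valid in principle.

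The gap is in your ``cumulative post-conditions'' paragraph. From ``for each $i$, $(T^{(k)},\cP^{(k)})$ captures mincuts $\le 2d$ separating $Z_i:=X_i\cup V(T^{(i-1)})$'' you jump to ``$(T^{(k)},\cP^{(k)})$ captures mincuts $\le 2d$ separating $\bigcup_i Z_i$''. That implication is false in general: if $u\in Z_i\setminus Z_j$ and $v\in Z_j\setminus Z_i$ lie in the same part of $\cP^{(k)}$, neither per-$i$ hypothesis controls $\mincut_G(u,v)$. The paper deals with exactly this cross case by asserting that when the call for $X_j$ runs (with $i<j$), any $s\in X_i$ already satisfies $s\in V(T)$; equivalently, the sets $Z_i$ are nested, which would make your union step immediate. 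You need to invoke that nesting (i.e., $X_i\subseteq V(T^{(i)})$) explicitly, or else argue the cross-$X_i,X_j$ case directly as the paper does, rather than asserting the union property as a formal consequence of the per-$i$ statements.
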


\begin{proof}
First, observe that the input to $\RefineOnSparsifier(\cdot)$ is
valid: (1) $H$ is a $3d$-connectivity certificate of $G$ containing
$\le \min\{m,3nd\}$ edges by \Cref{lem:sparsify}, (2) $(T,\cP)$ is a partial tree of $G$ that
captures all mincuts of size at most $d$ and no mincut of 
size more than $2d$ by assumption, and (3) $X$
is $(d,\phipart)$-well-linked in $H$ by \Cref{lem:partition}. 

By the second property in \Cref{lem:refineOnSparsifier}, 
$(T,\cP)$ captures no mincut of size more than $2d$. It remains
to show that at the end of the iteration, $(T,\cP)$ captures all 
mincuts of size at most $2d$. For mincuts of size at most $d$,
this follows from the assumption. Consider an $s$-$t$ mincut 
of size more than $d$ but at most $2d$. Since $s, t\in V_{\ge d}$
in $G$, it follows that $s, t\in V_{\ge d}$ in $H$ as well.
Thus, $s,t \in \cup_i X_i$ produced by $\Partition(H, d)$.
There are two cases. If $s, t\in X_i$ for some $i$,
then \Cref{lem:refineOnSparsifier} ensures that the 
$s$-$t$ mincut is captured by $(T,\cP)$ after 
$\RefineOnSparsifier(G,H,(T,\cP),X_{i},d,\phipart)$.
If $s\in X_i$, $t\in X_j$ where $i < j$ (wlog), then, 
when we call $\RefineOnSparsifier(G,H,(T,\cP),X_{j},d,\phipart)$, we have $s\in V(T)$ and $t\in X_j$.
Again, by \Cref{lem:refineOnSparsifier}, the
$s$-$t$ mincut is captured by $(T,\cP)$ after the call to 
$\RefineOnSparsifier$.
\end{proof}

The following is a simple corollary of the above lemma.
\begin{lem}
\label{lem:correct}\Cref{alg:n2 time} computes a Gomory-Hu tree $T$. 
\end{lem}

\begin{proof}
First, note that $(T,\cP)\gets\PartialTree(G,c)$ captures
all mincuts in $G$ of size at most $c$ by \Cref{thm:partial tree subroutine}.
Therefore, by \Cref{lem:correct warm up}, at the end of each iteration
of the for loop, \Cref{alg:n2 time} captures all mincuts of size
at most $2d$. As a consequence, at the end of the final loop, 
\Cref{alg:n2 time} captures all mincuts of size at most $n$.
Therefore, $T$ is indeed a Gomory-Hu tree.
\end{proof}

\eat{
Next, we make the following simple observations.
\begin{prop}
\label{prop:obs partial tree}We have the following:
\begin{enumerate}
\item \label{enu:all cuts}If $(T,\cP)$ captures all cuts of size
at most $d$ \emph{and} all cuts of size at least $d$, then $T$
is a Gomory-Hu tree. 
\item \label{enu:preserve}If $(T,\cP)$ captures all cuts of size at least
(at most) $d$, then its refinement $(T',\cP')$ also captures all
cuts of size at least (at most) $d$.
\end{enumerate}
\end{prop}

\begin{lem}
\label{lem:correct}\Cref{alg:n2 time} computes a Gomory-Hu tree $T$. 
\end{lem}

\begin{proof}
First, note that $(T,\cP)\gets\PartialTree(G,c)$ captures
all cuts in $G$ of size at most $c$ by \Cref{thm:partial tree subroutine}.
Moreover, before the first for-loop iteration, $(T,\cP)$ vacuously captures
all cuts of size at least $2n$ because there is no such cut. By
\Cref{lem:correct warm up}, after the last iteration, $(T,\cP)$ captures
all cuts of size at least $c$. As $(T,\cP)$ captured all cuts of
size at most $c$ even before the first iterations and remains so
after refinement by \Cref{prop:obs partial tree} (\Cref{enu:all cuts}).
Therefore, $T$ is indeed a Gomory-Hu tree by \Cref{prop:obs partial tree} (\Cref{enu:preserve}). 
\end{proof}
}

We now establish the running time of \Cref{alg:n2 time}.

\begin{lem}
\label{lem:time dense}By choosing $c=\sqrt{n}$, \Cref{alg:n2 time}
takes $n^{2+o(1)}$ time.
\end{lem}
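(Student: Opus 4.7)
The plan is to bound each step of \Cref{alg:n2 time} in turn, using throughout that $\phipart = n^{-o(1)}$ (so $1/\phipart = n^{o(1)}$). The initialization calls $\PartialTree(G,\sqrt{n})$, which by \Cref{thm:partial tree subroutine} costs $\tO(m + nc^{2}) = \tO(n^{2})$ since $m = O(n^{2})$ for a simple graph. The main for loop has $O(\log n)$ iterations, with $d = \sqrt{n}, 2\sqrt{n}, \ldots, n$, so it suffices to bound each iteration by $n^{2+o(1)}$.

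Within a single iteration with parameter $d$, $\Sparsify(G,3d)$ takes $O(m) = O(n^{2})$ time and yields $H$ with $m' \le \min\{m, 3nd\} = O(nd)$ edges. Hence $\Partition(H,d)$ by \Cref{lem:partition} costs $(m')^{1+o(1)} \le (nd)^{1+o(1)} = n^{2+o(1)}$. The loop calls $\RefineOnSparsifier$ once per well-linked set, i.e., $\tO(n/d)$ times; each call contributes $\tO(\min\{m,nd\}/\phipart) = nd \cdot n^{o(1)}$ of non-max-flow work, summing to $\tO((n/d)\cdot nd/\phipart) = n^{2+o(1)}$ over all $X_{i}$.

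The delicate step is accounting for the max-flow work. By \Cref{lem:refineOnSparsifier}, one call produces max-flow instances whose \emph{total} vertex and edge counts are $N = \tO(n/\phipart) = n^{1+o(1)}$ and $M = \tO(\min\{m,nd\}/\phipart) = nd \cdot n^{o(1)}$. Using \Cref{thm:maxflow subroutine} and the elementary inequality $\sum_{j}(n_{j})^{1.5} \le N^{1.5}$ (which follows by writing $n_{j}^{1.5} = n_{j}\cdot n_{j}^{0.5} \le n_{j}\cdot N^{0.5}$ and summing), the total max-flow cost of one $\RefineOnSparsifier$ call is $\tO(M + N^{1.5}) = (nd + n^{1.5}) \cdot n^{o(1)}$. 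Summing over the $\tO(n/d)$ well-linked sets yields $(n^{2} + n^{2.5}/d) \cdot n^{o(1)}$, which simplifies to $n^{2+o(1)}$ precisely because $d \ge c = \sqrt{n}$.

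The main obstacle is this last bookkeeping: one must pool all max-flow instances from a single $\RefineOnSparsifier$ call and exploit the superadditivity of $x \mapsto x^{1.5}$ on $\mathbb{R}_{\ge 0}$ to avoid losing a factor equal to the number of instances on the $N^{1.5}$ term. The choice $c = \sqrt{n}$ is tight here: for any smaller $d$, the term $n^{2.5}/d$ would exceed $n^{2}$, which is exactly why \Cref{thm:partial tree subroutine} is invoked up front to dispose of all connectivities below $\sqrt{n}$ in one shot.
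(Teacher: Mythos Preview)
Your proof is correct and follows essentially the same approach as the paper: bound $\PartialTree$, then per-iteration bound $\Sparsify$, $\Partition$, and the $\tO(n/d)$ calls to $\RefineOnSparsifier$, arriving at $(n^{2}+n^{2.5}/d)\cdot n^{o(1)}$ for the max-flow work and using $d\ge\sqrt{n}$ to conclude. The only cosmetic difference is in the max-flow bookkeeping: the paper argues via convexity (the worst case is $n^{o(1)}$ calls on graphs with $n$ vertices and $nd$ edges each), whereas you pool the instances and use the superadditivity inequality $\sum_j n_j^{1.5}\le N^{1.5}$; both yield the same $(nd+n^{1.5})\cdot n^{o(1)}$ bound per $\RefineOnSparsifier$ call.
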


\begin{proof}
$\PartialTree(G,c)$ takes $\Otil(m+nc^{2})=\Otil(n^{2})$ time 
by \Cref{thm:partial tree subroutine}. For each of the $O(\log n)$ iterations,
$\Sparsify(G,3d)$ takes $O(m)$ time (by \Cref{lem:sparsify})
and $\Partition(G,d)$ takes $m^{1+o(1)}=n^{2+o(1)}$ time
(by \Cref{lem:partition}). Since $H$ has
$O(nd)$ edges and $X_{i}$ is $(d,\phipart)$-well-linked, $\RefineOnSparsifier(G,H,(T,\cP),X_{i},d,\phipart)$
takes $(\min\{m,nd\}+n^{1.5})\cdot n^{o(1)} \le (nd+n^{1.5})\cdot n^{o(1)}$ time by \Cref{lem:refineOnSparsifier} 
and \Cref{thm:maxflow subroutine}.\footnote{
Note that since the running time is convex and each graph has at most
$nd$ edges and $n$ vertices, the worst case is when there 
are $n^{o(1)}$ maxflow calls on graphs with $nd$ edges and $n$ vertices.}
Since there are at most $\Otil(n/d)$ well-linked sets $X_{i}$, the
total time spent on $\RefineOnSparsifier$ is $(n^{2}+n^{2.5}/d)\cdot n^{o(1)}=n^{2+o(1)}$
since $d\ge c=\sqrt{n}$. The lemma follows by summing the time over all iterations.
\end{proof}
By analyzing the time differently, we obtain the following.
\begin{lem}
\label{lem:time sparse}For any parameter $c$, \Cref{alg:n2 time}
takes $\Otil(mc)+\frac{n^{1+o(1)}}{c}\cdot T(m,n)$ time where $T(m,n)$
denotes the time complexity for computing a maximum flow on an $n$-vertex
$m$-edge graph.
\end{lem}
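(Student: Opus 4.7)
The plan is to redo the accounting of \Cref{lem:time dense} without fixing $c=\sqrt{n}$ and using the generic max-flow bound $T(m,n)$ in place of \Cref{thm:maxflow subroutine}. First I would bound $\PartialTree(G,c)$ by $\Otil(\min\{mc,m+nc^{2}\})\le\Otil(mc)$ via \Cref{thm:partial tree subroutine}. Over the $O(\log(n/c))$ outer iterations, the $\Sparsify$ and $\Partition$ calls contribute only $O(m\log n)+m^{1+o(1)}$ total time (using \Cref{lem:sparsify}, \Cref{lem:partition}, and $|E(H)|\le\min\{m,3nd\}\le m$), which is absorbed into $\Otil(mc)$ for any $c\ge 1$.

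The main accounting is for $\RefineOnSparsifier$. In iteration $d$, \Cref{lem:partition} produces $\Otil(n/d)$ well-linked sets, and by \Cref{lem:refineOnSparsifier} each call incurs $\Otil(\min\{m,nd\}/\phipart)$ auxiliary time plus max-flow calls whose instances total $\Otil(\min\{m,nd\}/\phipart)\le m\cdot n^{o(1)}$ edges and $\Otil(n/\phipart)\le n\cdot n^{o(1)}$ vertices. Since $T(m,n)\ge m+n$, a single $\RefineOnSparsifier$ call costs at most $T(m,n)\cdot n^{o(1)}$, so the per-iteration cost is $\Otil(n/d)\cdot T(m,n)\cdot n^{o(1)}$. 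Summing the geometric progression $d=c,2c,4c,\ldots,n$ collapses $\sum_{d}1/d$ to $O(1/c)$, giving a grand total of $(n^{1+o(1)}/c)\cdot T(m,n)$ for all refinement work, and hence the claimed bound.

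The one spot that needs a moment of care is checking that the auxiliary non-max-flow cost inside $\RefineOnSparsifier$ does not inflate the bound for small $c$: the naive estimate $\Otil(n\min\{m,nd\}/d)\cdot n^{o(1)}$ per iteration can reach $\Otil(n^{2})\cdot n^{o(1)}$ near $d\approx m/n$, possibly exceeding $\Otil(mc)$. However, combining $T(m,n)\ge m$ with $d\ge c$ gives $\Otil(n\min\{m,nd\}/d)\le (n/c)\,T(m,n)\cdot n^{o(1)}$ in each iteration, so this auxiliary cost is absorbed into the max-flow term rather than into the $\Otil(mc)$ piece.
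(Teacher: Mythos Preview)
Your proof is correct and follows essentially the same accounting as the paper: bound $\PartialTree$ by $\Otil(mc)$, note that $\Sparsify$ and $\Partition$ are cheap, and charge each $\RefineOnSparsifier$ call to $T(m,n)\cdot n^{o(1)}$ via $\min\{m,nd\}\le m\le T(m,n)$, then multiply by $\Otil(n/d)\le\Otil(n/c)$ calls and $O(\log n)$ iterations. One small slip: the $m^{1+o(1)}$ cost of $\Partition$ is not in general absorbed by $\Otil(mc)$ when $c$ is small, but it is absorbed by $\frac{n^{1+o(1)}}{c}\cdot T(m,n)$ (since $T(m,n)\ge m$ and $c\le n$), so the final bound is unaffected.
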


\begin{proof}
$\PartialTree(G,c)$ takes
$\Otil(mc)$ time. For each of the $O(\log n)$ iterations, 
$\Sparsify(G,3d)$ takes $O(m)$ time (by \Cref{lem:sparsify}) and
$\Partition(G,d)$ takes $m^{1+o(1)} = m\cdot n^{o(1)}$ time (by \Cref{lem:partition}). 
Also, $\RefineOnSparsifier(\cdot)$ takes 
$(\min\{m, nd\}+T(m,n))\cdot n^{o(1)} \le (m+T(m,n))\cdot n^{o(1)} \le T(m,n)\cdot n^{o(1)}$ 
time by \Cref{lem:refineOnSparsifier}.\footnote{Note that 
since the running time is convex and each graph has at most
$m$ edges and $n$ vertices, the worst case is when there 
are $n^{o(1)}$ maxflow calls on graphs with $m$ edges and $n$ vertices.}
Since there are at most
$\Otil(n/d)=\Otil(n/c)$ well-linked sets $X_{i}$, the total time
spent on $\RefineOnSparsifier$ is $\frac{n^{1+o(1)}}{c}\cdot T(m,n)$.
The lemma follows by summing the time over all iterations.
\end{proof}
To conclude, observe that \Cref{thm:main} follows from \Cref{lem:correct}
and \Cref{lem:time dense}. Similarly, \Cref{thm:main sparse} follows
from \Cref{lem:correct} and \Cref{lem:time sparse}.

\eat{

Let $(T',\cP')$ and $(T,\cP)$ be partial trees of $G$. 

The key insight of this paper is captured by the following lemma.
It roughly says that we can refine a partial tree with an arbitrarily
large well-linked set using only $\Otil(1)$ max flow calls. In contrast,
the classic Gomory-Hu algorithm only refine a partial tree so that
its size grows by one using one max flow. We note that this lemma
does not assume that the graph is simple and can even be generalized
to weighted graphs.
\begin{lem}
\label{lem:refine}There is an algorithm $\Refine(G,(T,\cP),X,\phi)$
that, given 
\begin{itemize}
\item a graph $G$ with $n$ vertices and $m$ edges, 
\item a partial tree $(T,\cP)$ of $G$ that captures all cuts of size at
least $2d$ for some number $d$, and 
\item a set $X$ which is $(d,\phi)$-well-linked in $G$, 
\end{itemize}
returns with high probability a partial tree $(T',\cP')$ of $G$
which is a refinement of $(T,\cP)$ where $V(T')=V(T)\cup X$ in $O(m)$
time plus $O(\polylog(n)/\phi)$ max flow calls where each graph instance
has $O(n)$ vertices and $O(m)$ edges.
\end{lem}

\Cref{lem:refine}, however, will not be fast enough for us because
we would like to call it repeatedly for each well-linked set returned
by $\Partition(G,d)$ and there can be too many well-linked set when $d$
is small. This motivates us to exploit the following well-known concept:

\subsection{A Warm-up Algorithm}

In this section, we present an algorithm which is slightly weaker
than the one for \Cref{thm:main}. We believe that all the main ideas on how to
exploit the tools above are already captured by this lemma.
\begin{lem}
\label{thm:main warm up}There is an algorithm $\GHtree_{2}(G)$ that,
given a simple $n$-vertex $m$-edge graph $G$, with high probability
computes a Gomory-Hu tree of $G$ in $\Otil(n^{2})$ time plus max
flow calls on several graphs with total number of $\Otil(n^{2})$
vertices and $\Otil(n^{2})$ edges.
\end{lem}

Note that if there exists a near-linear time maxflow algorithm, \Cref{thm:main warm up}
would already imply \Cref{thm:main}. 

Now, we state \Cref{alg:warm up} for \Cref{thm:main warm up} and analyze
it. \Cref{lem:correct warm up} and \Cref{lem:time warm up} are for
the correctness and time analysis, respectively.

\begin{algorithm}
\begin{enumerate}
\item Initialize $(T,\cP)\gets(\emptyset,V)$ 
\item For $d=n,n/2,n/4,\dots,1$
\begin{enumerate}

\item $H\gets\Sparsify(G,3d)$
\item $\{X_{1},\dots,X_{\Otil(n/d)}\}\gets\Partition(H,d)$
\item For each $X_{i}$, $(T,\cP)\gets\RefineOnSparsifier(H,(T,\cP),X_{i},\phipart)$
\end{enumerate}
\item Return $T$
\end{enumerate}
\caption{\label{alg:warm up}$\protect\GHtree_{2}(G)$}
\end{algorithm}

First, we compute the running time of a single iteration of the for loop.

\begin{lem}
\label{lem:time warm up}
Each iteration of the for loop in \Cref{alg:warm up} runs in $\Otil(n^{2})$
time plus max flow calls on several graphs with a total sum of $\Otil(n^{2})$
vertices and $\Otil(n^{2})$ edges.
\end{lem}

\begin{proof}
There are $O(\log n)$ iterations. For each iteration, $\Partition(G,d)$
and $\Sparsify(G,3d)$ take $\Otil(m)=\Otil(n^{2})$ time by \Cref{lem:partition}
and \Cref{lem:sparsify}. Since $H$ has $O(nd)$ edges and $X_{i}$
is $(d,\phipart)$-well-linked, $\RefineOnSparsifier(H,(T,\cP),X_{i})$
takes $\Otil(nd)$ time plus $\polylog(n)/\phipart = \Otil(1)$ max flow
calls where each graph instance has $O(n)$ vertices and $O(nd)$
edges. Note that there are at most $\Otil(n/d)$ well-linked sets
$X_{i}$ by \Cref{lem:partition}. The lemma follows by summing over
all iterations.
\end{proof}

\paragraph{Proof of \Cref{thm:main warm up}.}

Initially, $(T,\cP)$ vacuously captures all cuts of size at least
$2n$ because there is no such cuts in a simple graph. By \Cref{lem:correct warm up},
$(T,\cP)$ captures all cuts of size at least $1$ and so $T$ is
a Gomory-Hu tree. The running time guarantees follows from \Cref{lem:time warm up}.

\subsection{Main Algorithms}

\label{sec:n2 algo}

Now, we state \Cref{alg:n2 time}, a small modification of \Cref{alg:warm up}.
}
\section{Refinement with Well-linked Set}

\label{sec: refine}

Our goal in this section is to prove the main lemma (\Cref{lem:refineOnSparsifier}).
Let us first recall the setting of the lemma. We have a graph $G=(V,E)$
with $n$ vertices and $m$ edges and a $3d$-connectivity certificate
$H$ of $G$ containing $m'\le \min\{m, 3nd\}$ edges. Let $(T,\cP)$ be a partial
tree of $G$ that captures all mincuts of size at most $d$ and no
mincut of size more than $2d$. Let $X$ be a $(d,\phi)$-well-linked
set in $H$. For each terminal $u_{i}\in V(T)$ and its corresponding
part $V_{i}\in\cP$, let $X_{i}=V_{i}\cap X$. 

Now, we define the \emph{sparsified auxiliary graph} $H_{i}$. For
each connected component $C$ in $T\setminus\{u_{i}\}$, let $V_{C}=\bigcup_{u\in V(C)}V_{u}$
where each $V_{u}\in\cP$. The graph $H_{i}$ is obtained from $H$
by contracting $V_{C}$ into one vertex $u_{C}$ for every component
$C$ in $T\setminus\{u_{i}\}$. Let $n'_{i}$ and $m'_{i}$ denote
the number of vertices and edges in $H_{i}$ respectively.
($H_i$ is unweighted but not necessarily a simple graph.) Below,
we bound the total size of $H_{i}$ over all $i$. The bound on $\sum_{i}m'_{i}$
crucially exploits the fact that the graph is unweighted. 
\begin{prop}
\label{lem:edgebound}$\sum_{u_{i}\in V(T)}n'_{i}\le3n$ and $\sum_{u_{i}\in V(T)}m'_{i}\le \min\{3m,5nd\}$. 
\end{prop}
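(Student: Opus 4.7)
The plan is to establish each inequality by directly counting, starting from the construction of $H_i$.

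For the vertex bound, I would decompose $n'_i$ into two obvious pieces: the untouched vertices of $V_i$, and one contracted vertex for each of the $\deg_T(u_i)$ connected components of $T\setminus\{u_i\}$. Thus $n'_i=|V_i|+\deg_T(u_i)$. Summing over terminals,
\[
\sum_{u_i\in V(T)} n'_i \;=\; \sum_i |V_i| + \sum_i \deg_T(u_i) \;=\; n + 2(|V(T)|-1) \;\le\; 3n,
\]
using that $\cP$ partitions $V$ and the handshake lemma on $T$.

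For the edge bound I would charge, for each $H$-edge $e=(x,y)$ with $x\in V_j$ and $y\in V_k$, the number of indices $i$ for which $e$ survives in $H_i$ as a non-self-loop. The edge collapses to a self-loop in $H_i$ precisely when both endpoints land in the same contracted block, i.e.\ when $u_i$ does not lie on the $T$-path between $u_j$ and $u_k$. Hence $e$ contributes $d_T(u_j,u_k)+1$ when $j\ne k$ and $1$ when $j=k$, which after the standard tree edge/crossing swap gives the identity
\[
\sum_i m'_i \;=\; m' + \sum_{f\in E(T)} |E_H(A_f,B_f)|,
\]
where $(A_f,B_f)$ is the bipartition of $V(T)$ induced by removing $f$.

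Next I would bound each $|E_H(A_f,B_f)|$. By hypothesis $(T,\cP)$ captures no mincut of size more than $2d$, so $w_T(f)=|E_G(A_f,B_f)|\le 2d$. Because $H$ is a $3d$-connectivity certificate and $w_T(f)<3d$, the certificate property pins down the cut exactly: $|E_H(A_f,B_f)|=|E_G(A_f,B_f)|=w_T(f)\le 2d$. Plugging this in with $m'\le 3nd$ yields the $5nd$ bound, since $\sum_f|E_H(A_f,B_f)|\le 2d(|V(T)|-1)\le 2nd$ and $\sum_i m'_i\le 3nd+2nd=5nd$.

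The main obstacle is the $3m$ bound, which as the text emphasizes crucially uses that $G$ is unweighted. My plan is to use $m'\le m$ together with $|E_H(A_f,B_f)|\le |E_G(A_f,B_f)|=w_T(f)$, reducing the task to showing $\sum_{f\in E(T)}w_T(f)\le 2m$. By another double count, $\sum_f w_T(f) = \sum_{e\in E(G)} d_T(u_{j(e)},u_{k(e)})$, so the target inequality is that the total $T$-distance summed over $G$-edges is at most $2m$. Here unweightedness is essential: each tree-edge weight is an integer count of actual $G$-edges, and the Gomory-Hu-type structure of $(T,\cP)$ forces the $G$-cuts $E_G(A_f,B_f)$ to be consistent along $T$, so the charging of each $G$-edge's path to the tree edges it crosses amortizes to a factor of at most $2$. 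Verifying this amortization rigorously is the delicate step, but once it is in place the conclusion $\sum_i m'_i\le m+2m=3m$ follows immediately.
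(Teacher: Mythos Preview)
Your vertex bound and your decomposition for the edge bound match the paper's proof exactly: the identity $\sum_i m'_i = m' + \sum_{f\in E(T)} |E_H(A_f,B_f)|$ is precisely what the paper derives, and your $5nd$ argument is correct (in fact the certificate property is not even needed there, since $H\subseteq G$ already gives $|E_H(A_f,B_f)|\le |E_G(A_f,B_f)|=w_T(f)\le 2d$).

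The genuine gap is in the $3m$ bound. You correctly reduce to showing $\sum_{f\in E(T)} w_T(f)\le 2m$, but then you take a detour: you rewrite this sum as $\sum_{e\in E(G)} d_T(\rep(x_e),\rep(y_e))$ and propose to bound the average $T$-distance of a $G$-edge by $2$ via some ``amortization'' coming from consistency of the cuts along $T$. This is circular --- the two expressions are equal by the same double count, so restating one as the other gives no new leverage --- and the vague appeal to consistency does not yield a factor-$2$ bound. You yourself flag this as ``the delicate step'' and do not carry it out.

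The missing idea is a one-line degree bound, which is what the paper uses. For each tree edge $f=(u_i,u_j)$ we have
\[
w_T(f)=\mincut_G(u_i,u_j)\le \min\{\deg_G(u_i),\deg_G(u_j)\},
\]
since the singleton $\{u_i\}$ is already a $(u_i,u_j)$-cut. Rooting $T$ at any vertex and charging each tree edge to its child endpoint gives $\sum_{f} w_T(f)\le \sum_{u_i\in V(T)}\deg_G(u_i)\le 2m$. This is where unweightedness actually enters --- so that the degrees sum to $2m$ --- rather than in any path-charging over $G$-edges. With this in place, $\sum_i m'_i \le m' + 2m \le 3m$ follows immediately.
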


\begin{proof}
Observe that $n'_{i}=|V_{i}|+\deg_{T}(u_{i})$. So $\sum_{u_{i}\in V(T)}n'_{i}=n+2|V(T)|\le3n$.
Next, we bound $\sum_{u_{i}\in V(T)}m'_{i}$. For
any vertex $x\in V$, let $\rep(x)\in V(T)$ be the unique terminal
such that $x\in V_{\rep(x)}$. Consider each edge $(x,y)\in E(H)$.
Let $P_{xy}=(\rep(x),\dots,\rep(y))\subseteq V(T)$ be the unique
path in $T$ between $\rep(x)$ and $\rep(y)$. (Possibly $x$ and
$y$ are in the same part of $\cP$ and so $\rep(x)=\rep(y)$.) The
crucial observation is that an edge $(x,y)$ appears in $H_{i}$ if and only if the
terminal $u_{i}$ is in $P_{xy}$ (otherwise, $x$ are $y$ are contracted
into one vertex in $H_i$). That is, the contribution of $(x,y)$ to $\sum_{u_{i}\in V(T)}m'_{i}$
is exactly $|V_{T}(P_{xy})|=1+|E_{T}(P_{xy})|$. Summing over all
edges $e\in E(H)$, this implies that 
\[
\sum_{u_{i}\in V(T)}m'_{i}\le|E(H)|+\sum_{(x,y)\in E(H)}|E_{T}(P_{xy})|.
\]
Recall that $|E(H)|=m'\le \min\{m,3nd\}$. The last important observation is that $\sum_{(x,y)\in E(H)}|E_{T}(P_{xy})|$
is exactly the total weight of edges in $T$. This is because each $(x,y)\in E(H)$ contributes exactly one unit of weight to each tree-edge in $E_T(P_{xy})$. 
The total weight of edges in $T$ is at most $\min\{2m,2nd\}$. To see this, observe that 
it is at most $(|V(T)|-1)\cdot2d\le2nd$,
because $T$ has no edge with weight more than $2d$. 
Also, it is at most $\sum_{u_i\in V(T)}\deg_G(u_i) \le 2m$ because each tree edge $(u_i,u_j) \in E(T)$ has weight $\mincut_G(u_i,u_j) \le \min\{\deg_G(u_i),\deg_G(u_j)\}$.
This implies
the bound $\sum_{u_{i}\in V(T)}m'_{i}\le \min\{3m,5nd\}$  as claimed.
\end{proof}
The key step for proving \Cref{lem:refineOnSparsifier} is captured
by the following lemma.
\begin{lem}
\label{lem:refine on H}Given $H_{i}$, $X_{i}$, and $(T,\cP)$,
there is an algorithm that takes $\Otil(m'_{i}/\phi)$ time and additionally makes max-flow calls
on several graphs with $\Otil(n'_{i}/\phi)$ vertices and $\Otil(m'_{i}/\phi)$
edges in total, and then returns a partial
tree $(T'_{i},\cP'_{i})$ \textbf{of $\boldsymbol{G}$} such that 
\begin{itemize}
\item $(T'_{i},\cP'_{i})$ is a refinement of $(T,\cP)$, and
\item $(T'_{i},\cP'_{i})$ captures all mincuts separating $X_{i}\cup V(T)$
of size at most $2d$ and no mincut of size more than $2d$.
\end{itemize}
\end{lem}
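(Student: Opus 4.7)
The plan is to reduce the entire problem to working inside the sparsified auxiliary graph $H_i$ and to find all small $u_i$-$v$ mincuts in $H_i$ using an isolating-cuts subroutine whose total cost is $\Otil(m'_i/\phi)$ plus the claimed max-flow budget. Step one is an \emph{equivalence} claim: for any $v \in V_i$, the $u_i$-$v$ mincut in $H_i$ has the same value as the $u_i$-$v$ mincut in $G$ whenever this value is at most $2d$. One direction uses that $H$ is a $3d$-connectivity certificate of $G$, so any cut of size at most $2d$ is preserved with the same value in $H$. The other direction — that the contractions forming $H_i$ do not destroy such a cut — uses the two hypotheses on $(T,\cP)$: because $(T,\cP)$ captures all mincuts of size at most $d$ and contains no tree edge of weight more than $2d$, a submodularity/uncrossing argument forces each component-union $V_C$ (ranging over components of $T\setminus\{u_i\}$) to lie entirely on one side of any $u_i$-$v$ mincut of value at most $2d$. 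Otherwise one could trim the cut inside $V_C$ and derive a contradiction with the partial-tree property, either producing a mincut of size $\le d$ not captured by $\cP$ or a tree edge heavier than $2d$.

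Step two is to locate the $u_i$-$v$ mincuts via the isolating-cuts framework applied to $H_i$. For each sampling scale $\ell = 0, 1, \dots, O(\log(1/\phi))$, I sample $R \subseteq X_i$ by keeping each vertex independently with probability $2^{-\ell}$, and run the isolating-cuts subroutine (delivering, via $O(\log |R|)$ max-flow calls on $H_i$, the smallest cut separating each $r \in R$ from $R\setminus\{r\}$). The well-linkedness of $X$ in $H$ forces that every cut of value at most $2d$ splits $X$ into $(A,B)$ with $\min\{|A|,|B|\} \le 2/\phi$. So if the $u_i$-$v$ mincut has $k \le 2/\phi$ vertices of $X_i$ on its $v$-side, sampling at rate $\approx 1/k$ isolates $v$ with constant probability; repeating each scale $\Otil(1)$ times ensures every such mincut is found w.h.p. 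The max-flow calls happen on $H_i$ or its contractions, so they sum to $\Otil(n'_i/\phi)$ vertices and $\Otil(m'_i/\phi)$ edges, with auxiliary bookkeeping $\Otil(m'_i/\phi)$.

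Step three assembles the refinement. For each $v \in X_i$ for which a $u_i$-$v$ mincut $(S_v, V\setminus S_v)$ was discovered with $v \in S_v$, I add $v$ as a new terminal adjacent to $u_i$ with edge weight equal to the mincut value and split $V_i$ using $V_i \cap S_v$. Standard uncrossing (via submodularity of cuts) shows that the family $\{S_v\}_{v\in X_i}$ can be made laminar, so they define a consistent refinement $(T'_i,\cP'_i)$ of $(T,\cP)$. Correctness — that $(T'_i,\cP'_i)$ captures every mincut of value at most $2d$ separating $X_i \cup V(T)$, and no mincut of value more than $2d$ — combines (i) the equivalence of Step 1, (ii) the completeness of the isolating-cuts search in Step 2, and (iii) the Gomory--Hu triangle inequality, which reduces a $v$-$u$ mincut for $v\in X_i$ and $u\in V(T)\setminus\{u_i\}$ to the $u_i$-$v$ mincut (newly discovered) together with the $u_i$-$u$ mincut (already captured by $T$).

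The principal obstacle I expect is Step 2: carefully coupling the well-linkedness bound on the small side of any size-$\le 2d$ cut with the appropriate geometric sampling schedule, and ensuring that the isolating-cuts subroutine indeed returns the true $u_i$-$v$ mincut whenever $v$ is the unique element of $R$ on its side of that mincut. A secondary subtlety is the uncrossing in Step 1: one must verify that contracting the components of $T\setminus\{u_i\}$ in $H$ neither destroys nor manufactures cuts of value at most $2d$ incident to $u_i$, which is exactly where the hypothesis ``$(T,\cP)$ captures \emph{all} mincuts of size at most $d$'' is essential rather than merely ``most'' of them.
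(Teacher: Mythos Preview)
Your Step~3 has a genuine gap that the paper's argument avoids. Single-source mincuts from $u_i$ to all $v\in X_i$ do \emph{not} suffice to build a partial tree capturing all pairwise mincuts within $X_i$ of value at most $2d$; your correctness sketch invokes the triangle inequality only for pairs $(v,u)$ with $u\in V(T)\setminus\{u_i\}$ and is silent on pairs $v,w\in X_i$. Concretely, take $H_i$ on $\{u_i,a,v,w\}$ with $5$ parallel edges $u_i\text{--}a$, $7$ parallel edges $a\text{--}v$, $7$ parallel edges $a\text{--}w$, and let $d=4$. Then $\mincut(u_i,v)=\mincut(u_i,w)=5$, with the unique mincut being $\{u_i\}$ versus $\{a,v,w\}$, so $S_v=S_w=\{a,v,w\}$; your laminar family leaves $v,w$ in the same part. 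But $\mincut(v,w)=7\le 2d=8$, so the lemma requires them separated. (All hypotheses are met: the trivial one-vertex $(T,\cP)$ captures all mincuts of size $\le 4$ since the global mincut is $5$, and trivially no edge of weight $>8$.) The paper handles this by a structurally different route: it reduces (\Cref{lem:reduction}) the partial-tree problem on $H_i$ to an oracle that returns $\min\{\mincut_{H_i}(p,v),2d\}$ for an \emph{arbitrary} pivot $p\in X_i$, and the reduction is a \emph{recursive} procedure that repeatedly picks random pivots, peels off isolating cuts, contracts, and recurses on both sides. This recursion is precisely what captures mincuts internal to $X_i$; a single source cannot.

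Your Step~2 has a secondary but related gap. Well-linkedness bounds only the \emph{smaller} side of any cut of size $\le 2d$ by $2/\phi$ in $X_i$-count; when the $u_i$-side is the small one, the $v$-side may contain nearly all of $X_i$, and sampling at any rate down to $2^{-O(\log(1/\phi))}$ will not isolate $v$. (In the example above, the $v$-side of the $u_i$-$v$ mincut contains both $v$ and $w$.) The paper's oracle (\Cref{lem:single-source}, Algorithm~\ref{alg:iso well-linked}) handles this symmetrically by always including the pivot $p$ in the terminal set and using \emph{both} the $(v,T\setminus v)$-isolating cut and the $(p,T\setminus p)$-isolating cut. You also never place $u_i$ in $R$, so the isolating cut for $v$ need not be a $u_i$-$v$ cut at all.
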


Before proving \Cref{lem:refine on H}, we show that it implies \Cref{lem:refineOnSparsifier}. (See \Cref{fig:refine} for illustration.)

\begin{figure}
\centering
\includegraphics[height=1.2\textwidth]{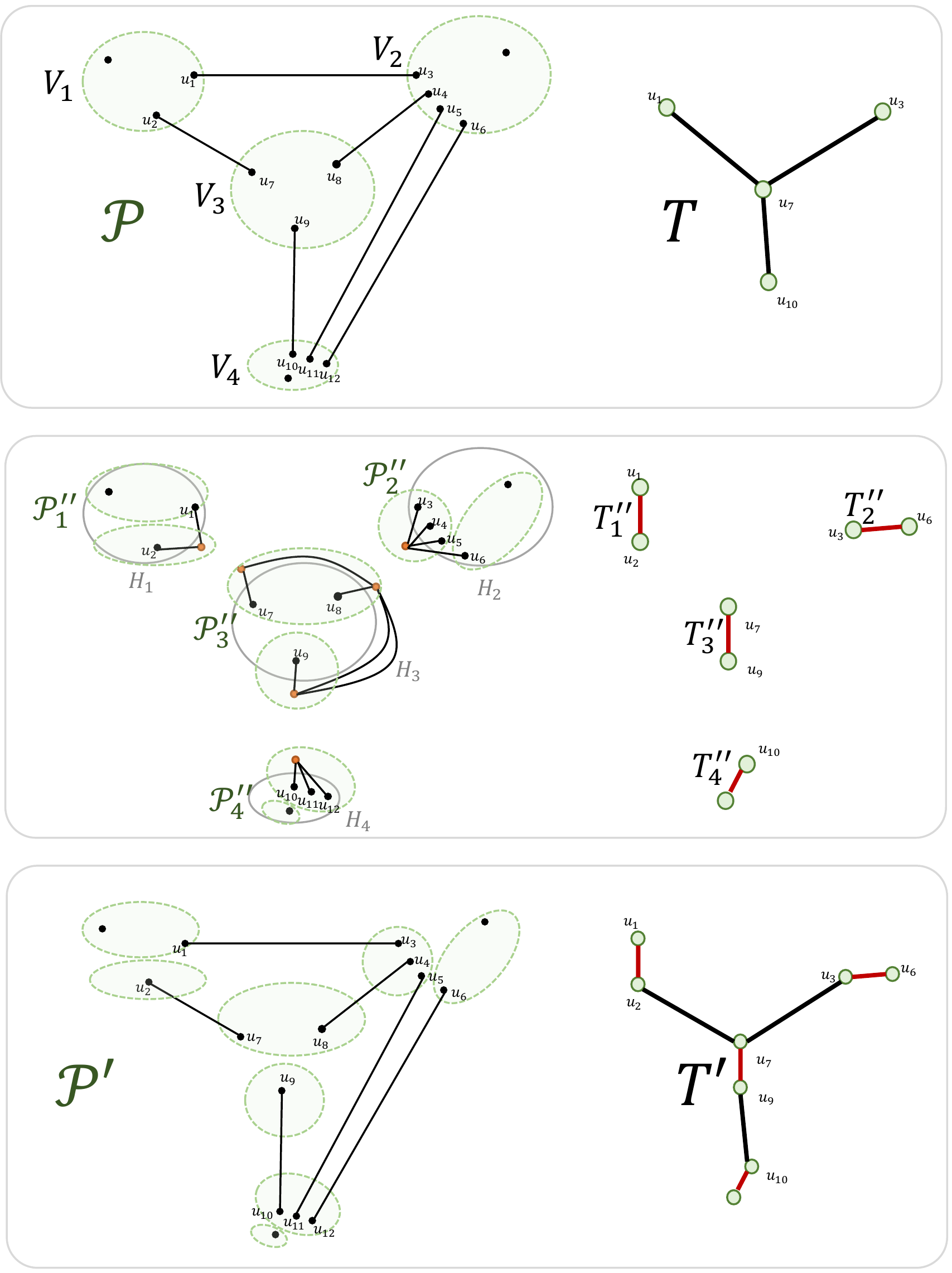}

\caption{Refining $(T,\cP)$ to $(T',\cP')$. The algorithm for \Cref{lem:refine on H} computes
partial trees $(T''_{i},\cP''_{i})$ of every sparsified auxiliary
graph $H_{i}$. This is illustrated in the second box in the figure
above. The algorithm in \Cref{lem:refine on H} computes a partial tree $(T'_{i},\cP'_{i})$
of $G$ which is a refinement of $(T',\cP)$ in (not shown in the
figure above). Then, in the proof of \Cref{lem:refineOnSparsifier}, we ``combine'' these refinement
on each part $V_{i}$ and we obtain the refined partial tree $(T',\cP')$
of $(T,\cP)$.\label{fig:refine}}

\end{figure}

\paragraph{Proof of \Cref{lem:refineOnSparsifier}. }

We apply \Cref{lem:refine on H} for all $i$ simultaneously
and obtain $(T'_{i},\cP'_{i})$ each of which refines $(T,\cP)$ in
exactly one part $V_{i}\in\cP$. Let $(T',\cP')$ be the refinement
of $(T,\cP)$ such that $(T',\cP')$ refines the part $V_{i}\in\cP$
according to $(T'_{i},\cP'_{i})$ for every $i$. 
Note that $(T',\cP')$
can be computed in $O(n)$ time. Clearly, $(T',\cP')$ captures no
mincut of size more than $2d$ (i.e., $T'$ has no edge of weight
more than $2d$) because none of $T'_{i}$ does. 

It remains to prove that $(T',\cP')$ captures all mincuts separating $V(T)\cup\Big(\bigcup_{i}X_{i}\Big)=V(T)\cup X$
of size at most $2d$. That is, there is no pair $x,y\in V(T)\cup X$
where $\mincut_{G}(x,y)\le2d$ and $x,y\in\cP'$ are in the same part.
This is true because, if $x$ and $y$ are from a different part of
$\cP$, then they are still from a different part in $\cP'$ as $\cP'$
is a refinement of $\cP$. Otherwise, if $x$ and $y$ are from the
same part of $\cP$, say $V_{i}\in\cP$, then $x,y\in X_{i}\cup V(T)$
and so \Cref{lem:refine on H} guarantees that they must be separated
by $\cP'_{i}$ and hence in $\cP'$. This concludes the correctness
of \Cref{lem:refineOnSparsifier}.

Next, we analyze the running time. The total running time is $\sum_{i}\Otil(m'_{i}/ \phi )=\Otil(\min\{m,nd\} / \phi)$
by \Cref{lem:edgebound} and \Cref{lem:refine on H}. 
Finally, the graphs that the algorithm makes max-flow calls on contain
in total at most $\sum_{i}\Otil(n'_{i}/\phi)=\Otil(n/\phi)$ vertices
and $\sum_{i}\Otil(m'_{i}/\phi)=\Otil(\min\{m,nd\}/\phi)$ edges by \Cref{lem:edgebound}.
This completes the proof.

\paragraph{Proof of \Cref{lem:refine on H}. }

For the remaining part of this section, we prove \Cref{lem:refine on H}.
There are two main ingredients. 

First, we show that the problem of creating a partial tree on a set
of terminals can be reduced to finding single source connectivity
on the terminals. This step closely mirrors \cite{LiP21approximate}:
while they focus on the \emph{approximate} Gomory-Hu tree problem, their
techniques translate over to the exact case. Nevertheless, since the reduction
might be of independent interest, we give the proof later in \Cref{sec:reduction}.
\begin{lem}
\label{lem:reduction}Let $G=(V,E)$ be an $n$-vertex $m$-edge unweighted (respectively, weighted) graph
with a terminal set $X\subseteq V$, and let $k\ge0$ be a real number. Suppose we have an oracle that,
given a terminal $p\in X$, returns $\min\{\mincut_{G}(p,v),k\}$
for all other terminals $v\in X$. Then, there is an algorithm that computes with high probability a partial tree $(T,\cP)$ of $G$ where $V(T) \subseteq X$ that captures
all mincuts separating $X$ of size at most $k$ and no mincuts of size more than $k$.
It makes calls to the oracle and max-flow on unweighted (respectively, weighted) graphs with a total of $\tilde{O}(n)$ vertices and $\tilde{O}(m)$ edges, and runs for $\tilde{O}(m)$ time outside of these calls.
\end{lem}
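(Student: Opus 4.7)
The plan is to adapt the recursive pivot-and-isolating-cuts reduction of \cite{LiP21approximate} from the approximate to the exact setting; the only real modification is to replace approximate oracle queries with exact (capped) ones. At a high level, I would pick a pivot $p \in X$ uniformly at random, query the oracle at $p$ to learn $c_v := \min\{\mincut_G(p,v), k\}$ for every $v \in X \setminus \{p\}$, and then turn this single-source mincut information into actual cuts using the isolating-cuts machinery of Abboud-Krauthgamer-Trabelsi / Li-Panigrahi, attaching one tree edge per revealed mincut and recursing on the residual regions.

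Concretely, I would bucket terminals by their oracle value: for each distinct $\lambda \le k$, let $X_\lambda = \{v \in X : c_v = \lambda\}$; the bucket $\{v : c_v > k\}$ requires no separation from $p$ in the output. For each $\lambda$ I would run one invocation of the isolating-cuts primitive on the terminal set $X_\lambda \cup \{p\}$: this produces, in $\tilde{O}(1)$ max-flow calls, a minimum cut $(S_v, V \setminus S_v)$ of value exactly $\lambda$ separating each $v \in X_\lambda$ from $p$. A standard submodular uncrossing argument lets us assume the $S_v$'s are laminar, so I can add an edge $(p, v)$ of weight $\lambda$ to the partial tree and assign to the part of $v$ in $\cP$ the region $S_v \setminus \bigcup_{w : S_w \subsetneq S_v} S_w$. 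Applying submodular uncrossing \emph{across} the different values $\lambda$ keeps everything mutually laminar so the whole layer of edges fits together consistently.

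Each residual region that still contains more than one terminal is then handled by a recursive call on the contracted subgraph (with $p$ and peeled-off terminals merged/removed as appropriate). A random pivot, together with the standard centroid argument on the implicit Gomory-Hu tree over $X$, shows the recursion has depth $O(\log n)$ w.h.p., so both the oracle-call count and the aggregate max-flow vertex/edge totals grow by only a polylogarithmic factor over a single step, matching the claimed $\tilde{O}(n)$ vertex and $\tilde{O}(m)$ edge budgets. Correctness of the two output conditions is then immediate from the construction: every tree edge has weight equal to some $\lambda \le k$, so no mincut of size $>k$ is captured; and any pair $u,v \in X$ with $\mincut_G(u,v) \le k$ is separated the first time a recursive call places them in different buckets or on different sides of a revealed isolating cut.

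The main obstacle I expect is the cross-layer uncrossing argument: one must verify that the laminar families produced by isolating cuts at different values $\lambda_1 < \lambda_2$ can be simultaneously uncrossed into one global laminar family supporting the partial tree, and that every relevant pairwise mincut of value $\le k$ appears as a tree-path minimum. This is already established in \cite{LiP21approximate}; since their argument is purely combinatorial (relying only on submodularity and posimodularity of the cut function together with Gomory-Hu-style pivoting), it transfers to the exact setting without change. The remaining work is bookkeeping — tracking that the recursive subproblems at each level are vertex-disjoint modulo the pivot so the per-level totals compose, and that the $\tilde{O}(m)$ work outside max-flow/oracle calls is dominated by sparsification, contraction, and laminar-family maintenance.
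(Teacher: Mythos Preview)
Your bucketing-by-value step has a genuine gap: the isolating-cuts primitive on terminal set $X_\lambda \cup \{p\}$ returns, for each $v\in X_\lambda$, the minimum $(v,\,(X_\lambda\setminus\{v\})\cup\{p\})$-cut, which is \emph{not} in general a $(v,p)$-mincut, because other terminals of $X_\lambda$ may sit on the $v$-side of every $(v,p)$-mincut. Concretely, let $p$ be joined to a hub $h$ by an edge of weight $\lambda$, and $h$ to each of $v_1,\ldots,v_t$ by an edge of weight $2\lambda$. Then $\mincut(p,v_i)=\lambda$ for all $i$, so all $v_i$ fall in the same bucket $X_\lambda$; but the isolating cut for any $v_i$ on $X_\lambda\cup\{p\}$ must also separate $v_i$ from every other $v_j$, and the cheapest such cut has value $2\lambda$. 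None of the returned cuts are $(v_i,p)$-mincuts, so the tree edges you would attach carry the wrong weights and the partial-tree property fails. Uncrossing cannot repair this, and \cite{LiP21approximate} does not bucket by value, so the appeal to it for this step is misplaced. (There is also a cost issue: the number of distinct values $\lambda$ can be $\Theta(|X|)$, so one isolating-cuts call per bucket already exceeds the $\tilde O(n)$-vertex, $\tilde O(m)$-edge budget at a single recursion level.)

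The paper's mechanism is orthogonal to bucketing. It subsamples the terminal set at geometrically decreasing rates $1,\tfrac12,\tfrac14,\ldots$ (always keeping the pivot $s$), runs isolating cuts on each sampled set $R^i$, and then uses the oracle only to \emph{verify} which returned cuts $S^i_v$ satisfy $|\partial S^i_v|=\min\{\mincut(s,v),k\}$. The point of the subsampling is that for each $v$, at the rate $\approx 1/|S_v\cap X|$ (where $S_v$ is the $v$-minimal $(v,s)$-mincut), $v$ is with constant probability the sole sampled terminal on its side, and in that event the isolating cut \emph{is} the $(v,s)$-mincut. Only cuts that pass the oracle verification and have $|S^i_v\cap U|\le|U|/2$ are used to peel off pieces before recursing; the recursion depth is controlled by tracking $|U|-f_k(U)$ rather than a simple centroid argument.
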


Note that it is crucial for us that the reduction above works even
when the oracle only returns $\min\{\mincut_{G}(p,v),k\}$ and not
$\mincut_{G}(p,v)$. The next lemma exactly implements this oracle:
\begin{lem}
\label{lem:single-source} Let $G=(V,E)$ be an $n$-vertex $m$-edge
graph. Let $X$ be a $(d,\phi)$-well-linked set in $G$. Let $p\in X$
be any fixed vertex in $X$. Then, there is an algorithm that computes
$\min\{\mincut_{G}(p,v),2d\}$ for all other $v\in X$ in
$O(\frac{m \log n}{\phi})$ time plus $\frac{\polylog(n)}{\phi}$ max-flow calls each on a graph with $O(n)$ vertices and $O(m)$ edges.
\end{lem}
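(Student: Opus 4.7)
I would combine random sampling with the isolating-cuts technique (Li--Panigrahi STOC 2020; Abboud--Krauthgamer--Trabelsi), exploiting the $(d,\phi)$-well-linkedness of $X$. The key structural observation is that for any $v \in X \setminus \{p\}$ with $\mincut_G(p,v) \le 2d$, and for any $p$-$v$ mincut $(S^*, V \setminus S^*)$ with $v \in S^*$ and $p \notin S^*$, one has $\min\{|X \cap S^*|,\, |X \setminus S^*|\} \le 2/\phi$. Indeed, applying well-linkedness to the $X$-partition $(A, B) := (X \cap S^*,\, X \setminus S^*)$ gives $2d \ge |S^*| \ge \phi d \cdot \min\{|A|,|B|\}$, yielding the claimed bound. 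So exactly one of the two ``sides'' of any relevant mincut has at most $2/\phi$ terminals from $X$, which is precisely what random sampling can exploit at a single rate.

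The algorithm sets $q = \phi/4$, initializes $\mathrm{best}[v] = +\infty$ for each $v \in X \setminus \{p\}$, and repeats for $r = \Theta(\log n / \phi)$ rounds the following: sample each $v \in X \setminus \{p\}$ independently with probability $q$ to form $R_0$, set $R = R_0 \cup \{p\}$, invoke the isolating-cuts subroutine on $R$ in $G$ to obtain, for each $t \in R$, the value $c_t$ of the minimum $t$-$(R \setminus \{t\})$ cut, and update $\mathrm{best}[v] \leftarrow \min\{\mathrm{best}[v], c_v, c_p\}$ for every $v \in R \setminus \{p\}$. Finally it outputs $\min\{\mathrm{best}[v],\, 2d\}$. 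Because both $c_v$ and $c_p$ are values of $p$-$v$ cuts (as $p \in R \setminus \{v\}$ and $v \in R \setminus \{p\}$), we always have $\mathrm{best}[v] \ge \mincut_G(p,v)$. For $v$ with $\mincut_G(p,v) \le 2d$, the structural bound splits into two cases. In Case~1 ($|A| \le 2/\phi$), a round ``succeeds for $v$'' when $v \in R$ and no other element of $A$ is sampled; then $R \cap S^* = \{v\}$ (using $p \notin S^*$), so $S^*$ is a valid $v$-$(R \setminus \{v\})$ cut, which forces $c_v = \mincut_G(p,v)$. In Case~2 ($|B| \le 2/\phi$), a round succeeds when $v \in R$ and no element of $B \setminus \{p\}$ is sampled; then $R \setminus \{p\} \subseteq S^*$, so $S^*$ is a $p$-$(R \setminus \{p\})$ cut, forcing $c_p = \mincut_G(p,v)$. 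Either way, the per-round success probability is $\ge q(1-q)^{2/\phi - 1} = \Omega(\phi)$, so $r = \Theta(\log n / \phi)$ rounds guarantee success with high probability, and a union bound over $v \in X$ finishes correctness; for $v$ with $\mincut_G(p,v) > 2d$, the cap at $2d$ returns the right answer trivially.

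For the runtime, each round makes $O(\log n)$ max-flow calls via isolating cuts on graphs with $O(n)$ vertices and $O(m)$ edges, plus $O(m)$ auxiliary time for sampling and bookkeeping. Summing over the $r = O(\log n / \phi)$ rounds yields $O(\log^2 n / \phi) = \polylog(n)/\phi$ max-flow calls and $O(m \log n / \phi)$ time outside of those calls, matching the statement. The main obstacle is Case~2: one might naively only track $c_v$, but this fails precisely when the $p$-side of the $p$-$v$ mincut is the small side of $X$. The fix is to additionally record $c_p$ in every round; the well-linkedness dichotomy then ensures that in any successful round at the chosen rate $q = \phi/4$, at least one of $c_v, c_p$ exactly attains $\mincut_G(p,v)$.
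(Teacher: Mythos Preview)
Your proposal is correct and follows essentially the same approach as the paper: the same well-linkedness bound $\min\{|A|,|B|\}\le 2/\phi$, the same single-rate i.i.d.\ sampling of $X$ augmented by $p$, the same use of the Isolating Cuts Lemma, and the same two-case analysis in which one of $c_v$ or $c_p$ attains $\mincut_G(p,v)$. The only (inconsequential) differences are cosmetic: you sample at rate $\phi/4$ rather than $\phi/2$, you cap at $2d$ at the end rather than initializing to $2d$, and you omit the paper's redundant ``is this a $(p,v)$-cut'' check (which is always satisfied since $p,v\in R$); note also the typo $|S^*|$ where you mean $|\partial S^*|$ in the structural bound.
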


We will prove \Cref{lem:single-source} in \Cref{sec:isolate_well-linked}.
First, we show how to apply both lemmas above to prove \Cref{lem:refine on H}.
We start with a simple observation.
\begin{prop}
$X_{i}$ is $(d,\phi)$-well-linked in $H_{i}$. 
\end{prop}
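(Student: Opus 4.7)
The plan is to verify the two defining conditions of $(d,\phi)$-well-linkedness for $X_i$ in $H_i$ by directly transferring the corresponding properties of $X$ in $H$.

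For the degree condition, the argument should be immediate. Since $X_i \subseteq X$ and $X$ is $(d,\phi)$-well-linked in $H$, every $v \in X_i$ has $\deg_H(v) \ge d$. The construction of $H_i$ contracts only vertices lying in $V \setminus V_i$ (namely, each $V_C$ for a component $C$ of $T \setminus \{u_i\}$), so $v$ itself is not contracted, and each edge of $H$ incident to $v$ remains an edge of $H_i$ incident to $v$ (possibly becoming part of a multi-edge, but $H_i$ is allowed to be a multigraph). Hence $\deg_{H_i}(v) \ge \deg_H(v) \ge d$.

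For the sparsity condition, let $(A,B)$ be any partition of $X_i$. The plan has two steps. First, observe that any cut $(S, V(H_i)\setminus S)$ in $H_i$ that separates $A$ from $B$ lifts to a cut in $H$ of exactly the same size that still separates $A$ from $B$: simply un-contract each super-vertex $u_C$ back to the set $V_C$, putting all of $V_C$ on whichever side $u_C$ lay. Therefore $\mincut_{H_i}(A,B) \ge \mincut_H(A,B)$. Second, bound $\mincut_H(A,B)$ using well-linkedness of $X$ in $H$: take a minimum cut $(S,V(H)\setminus S)$ in $H$ with $A \subseteq S$, $B \subseteq V(H)\setminus S$, and set $A' = X \cap S$, $B' = X \setminus S$. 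Then $(A',B')$ is a partition of $X$ with $A \subseteq A'$ and $B \subseteq B'$, so $\min\{|A'|,|B'|\} \ge \min\{|A|,|B|\}$, and well-linkedness of $X$ in $H$ gives $|E_H(S,V(H)\setminus S)| \ge \phi\, d\, \min\{|A'|,|B'|\} \ge \phi\, d\, \min\{|A|,|B|\}$. Chaining the two inequalities yields $\mincut_{H_i}(A,B) \ge \phi\, d\, \min\{|A|,|B|\}$, as required.

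Neither step involves real difficulty; the only subtlety worth stating carefully is that the mincut separating $A$ from $B$ is not itself a partition of $X$, which is why one must pad $(A,B)$ to a full partition $(A',B')$ of $X$ before invoking the well-linkedness hypothesis. Since padding only inflates $\min\{|A'|,|B'|\}$, the bound transfers in the favorable direction.
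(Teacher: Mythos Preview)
Your proof is correct and follows essentially the same approach as the paper's. The paper's argument is more terse, stating as facts the two observations you unpack in detail: (i) any subset of a $(d,\phi)$-well-linked set is itself $(d,\phi)$-well-linked in the same graph (your padding argument), and (ii) $(d,\phi)$-well-linkedness is preserved under contraction (your lifting argument).
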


\begin{proof}
As $X$ is $(d,\phi)$-well-linked in $H$, any subset $X_{i}\subseteq X$
is also $(d,\phi)$-well-linked in $H$. Now, observe that the property that 
a vertex set is $(d,\phi)$-well-linked is preserved under graph contraction. 
As $H_{i}$ is a contracted graph of $H$, the proposition follows.
\end{proof}
By setting parameters $G\gets H_{i}$ and $X\gets X_{i}$ as the inputs
of \Cref{lem:single-source}, we obtain the required oracle for \Cref{lem:reduction}
when $k=2d$. By applying \Cref{lem:reduction}, we obtain a partial
tree $(T''_{i},\cP''_{i})$ of $H_{i}$ where $V(T''_i) \subseteq X$, that captures all mincuts
separating $X_{i}$ of size at most $2d$ and no mincuts of size more
than $2d$. 
This steps takes $\Otil(m'_{i}/\phi)$ time and makes max-flow calls
on several graphs with $\Otil(n'_{i}/\phi)$ vertices and $\Otil(m'_{i}/\phi)$
edges in total. 

We are not quite done as we need a partial tree of $G$ (not of $H_{i}$)
with all properties required by \Cref{lem:refine on H}, but the remaining
steps are quite easy. Suppose the vertex $u_{i}$, which was the unique
terminal in $V_{i}$ in the partial tree $(T,\cP)$, is now in part
$V_{u_{i}}\subseteq V(H)$ of the partition $\cP''_{i}$. Moreover,
let $x_{u_{i}}\in X_{i}\cap V(T''_{i})$ denote the unique terminal
of part $V_{u_{i}}\in\cP''_{i}$. The algorithm just checks if $\mincut_{H_{i}}(u_{i},x_{u_{i}})\le2d$
by using a single max-flow call on $H_{i}$. If so, we further refine
$(T''_{i},\cP''_{i})$ according to the mincut separating $u_{i}$ and $x_{u_{i}}$. If not,
then we let $u_{i}$ replace $x_{u_{i}}$ as a unique terminal of
part $V_{u_{i}}\in\cP''_{i}$. At this point, $(T''_{i},\cP''_{i})$
is a partial tree of $H_{i}$ that captures all mincuts separating
$X_{i}\cup\{u_{i}\}$ of size at most $2d$ and no mincuts of size
more than $2d$. Finally, we refine the part $V_{i}$ of $(T,\cP)$
according to $(T''_{i},\cP''_{i})$ and obtain a partial tree $(T'_{i},\cP'_{i})$
of $G$ as desired. 
The reason this is correct is because $(T''_{i},\cP''_{i})$
captures only mincuts of size at most $2d$ but $H$ preserves exactly
\emph{all} cuts of $G$ of size at most $3d$. The running times in
these final steps are subsumed by the previous steps. This completes the proof of \Cref{lem:refine on H}.

\subsection{Single-source Mincut Values for Well-linked Sets: Proof of \Cref{lem:single-source}}

\label{sec:isolate_well-linked} 

We recall the setting of \Cref{lem:single-source}. We have an $n$-vertex
$m$-edge graph $G$ and a $(d,\phi)$-well-linked set $X$ in $G$.
Let $p\in X$ be any fixed vertex in $X$. The goal is to compute
$\min\{\mincut_{G}(p,v),2d\}$ for all other $v\in X\setminus\{p\}$.

Now, we need to introduce some notation. We say that a cut $(A,B)$
in $G$ is an $(S,T)$-cut if $S\subseteq A$ and $T\subseteq B$.
Moreover, $(A,B)$ is an $(S,T)$-mincut if, additionally, $|E_{G}(A,B)|=\mincut_{G}(S,T)$.
We say that $(A,B)$ is \emph{the (unique) $S$-minimal} $(S,T)$-mincut if,
for any $(S,T)$-mincut $(A',B')$, we have $S\subseteq A\subseteq A'$.
We will not need the notion of minimal mincut in this section, but
it will be used later in \Cref{sec:reduction}. A key tool in proving
\Cref{lem:single-source} is the following \emph{Isolating Cuts Lemma} of Li and Panigrahi~\cite{LiP20deterministic}, which was discovered independently by Abboud, Krauthgamer, and Trabelsi~\cite{AbboudKT20subcubic}.%
\begin{lem}
[Isolating Cut Lemma \cite{LiP20deterministic,AbboudKT20subcubic}]\label{lem:isolating}
There is an algorithm that, given an undirected graph $G=(V,E)$ on
$n$ vertices and $m$ edges and a terminal set $T\subseteq V$, finds
the $t$-minimal $(t,T\setminus\{t\})$-mincut for every $t\in T$
in $O(m\log n)$ time plus $O(\log n)$ maxflow calls each on a graph with
$O(n)$ vertices and $O(m)$ edges. 
\end{lem}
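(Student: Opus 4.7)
The plan is to reduce the task to $O(\log|T|)$ global max-flow calls via a bit-partitioning of the terminal set, plus one additional batched max-flow call that recovers all $|T|$ isolating mincuts simultaneously from vertex-disjoint sub-instances.

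First, assign each terminal a distinct $\lceil\log_{2}|T|\rceil$-bit label, and let $b_{i}(t)\in\{0,1\}$ denote the $i$-th bit of $t$'s label. For each bit $i$, partition $T$ into $T_{i}^{0},T_{i}^{1}$ according to $b_{i}$, contract each side to a super-node, and run one max-flow on the resulting graph (which has at most $n$ vertices and $m$ edges); read off the minimal $T_{i}^{0}$-side mincut $A_{i}^{0}$ as the vertex set reachable from the super-source in the residual graph, and set $A_{i}^{1}:=V\setminus A_{i}^{0}$. Let $F_{i}$ be its cut-edge set and $F:=\bigcup_{i}F_{i}$. This is $O(\log|T|)=O(\log n)$ max-flow calls on graphs of the required size.

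The technical heart is a submodular uncrossing step: for every $t\in T$ and every bit $i$, the target $t$-minimal mincut $C_{t}$ satisfies $C_{t}\subseteq A_{i}^{b_{i}(t)}$. Assume WLOG $b_{i}(t)=0$. Since $T_{i}^{1}\subseteq T\setminus\{t\}$ lies outside both $A_{i}^{0}$ and $C_{t}$, the set $C_{t}\cup A_{i}^{0}$ is a valid $(T_{i}^{0},T_{i}^{1})$-cut, so its cut value is at least that of the $(T_{i}^{0},T_{i}^{1})$-mincut $A_{i}^{0}$. Cut submodularity then forces the cut value of $C_{t}\cap A_{i}^{0}$ to be at most that of $C_{t}$; but $C_{t}\cap A_{i}^{0}$ still contains $t$ and excludes $T\setminus\{t\}$, hence is itself a $(t,T\setminus\{t\})$-mincut, so $t$-minimality of $C_{t}$ yields $C_{t}\subseteq C_{t}\cap A_{i}^{0}\subseteq A_{i}^{0}$. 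A one-line argument (a disconnected $C_{t}$ would contradict $t$-minimality by taking the component containing $t$) shows $C_{t}$ is connected, so its edges all avoid $F$ and the connected component $V_{t}$ of $t$ in $G-F$ satisfies $C_{t}\subseteq V_{t}\subseteq\bigcap_{i}A_{i}^{b_{i}(t)}$. Because every pair of terminals differs in some bit, $V_{t}\cap T=\{t\}$.

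Finally, in the contracted graph $G/(V\setminus V_{t})$ with super-sink $v_{t}^{*}$, every $(t,v_{t}^{*})$-cut corresponds to a $(t,T\setminus\{t\})$-cut in $G$ whose $t$-side lies in $V_{t}$, so its $t$-minimal $(t,v_{t}^{*})$-mincut is exactly $C_{t}$. The pairwise disjoint $V_{t}$'s combine into a single vertex-disjoint union of these contracted graphs of total size $O(n)$ vertices and $O(m)$ edges (each original edge contributes at most twice, either as an internal edge of some $V_{t}$ or as a boundary edge to some $v_{t}^{*}$); joining a universal super-source to every $t$ and a universal super-sink from every $v_{t}^{*}$ reduces the recovery of all $C_{t}$ to \emph{one} additional max-flow call on a graph of the required size, with the individual mincuts read off as the $t$-reachable sets of each component of the residual graph. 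The remaining bookkeeping --- building the bit partitions, extracting residual cuts, computing components of $G-F$, and assembling the batched graph --- runs in $O(m\log n)$ total time. The main subtlety is the uncrossing inequality $C_{t}\subseteq A_{i}^{b_{i}(t)}$; once established, the pairwise disjointness of the $V_{t}$'s makes the final batched max-flow decompose cleanly, and the accounting is routine.
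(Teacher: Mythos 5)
Your proposal is correct and follows essentially the same route as the cited proof of Li--Panigrahi (and, independently, Abboud--Krauthgamer--Trabelsi): bit-label the terminals, run one max-flow per bit position, take the union $F$ of the resulting cut edges, use submodular uncrossing together with $t$-minimality to show $C_t$ lies inside the component $V_t$ of $t$ in $G-F$, and recover each $C_t$ by a max-flow in the contracted graph $G/(V\setminus V_t)$. Your one small embellishment --- gluing the disjoint contracted instances with a universal super-source and super-sink so that the recovery phase is literally a single max-flow call rather than many small ones of total size $O(m)$ --- is a clean way to match the lemma's exact ``$O(\log n)$ max-flow calls'' phrasing, but it is not a substantively different argument.
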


Fix any $x\in X$ where $\mincut_G(p,x)\le 2d$. Let $(A,B)$ be any $(p,x)$-mincut where $p\in A$
and $x\in B$. We have three observations.
The first crucial observation says that $(A,B)$
must be ``unbalanced'' w.r.t.~$X$.
\begin{prop}
\label{lem:unbalanced}$\min(|A\cap X|,|B\cap X|)\le\frac{2}{\phi}$. 
\end{prop}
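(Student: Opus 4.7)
The plan is to apply the well-linkedness of $X$ directly to the partition of $X$ induced by the cut $(A,B)$. Since $p \in A \cap X$ and $x \in B \cap X$, both parts of this induced partition are non-empty, so letting $A' = A \cap X$ and $B' = B \cap X$ gives a valid partition of $X$ in the sense of the well-linkedness definition.

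First I would observe that any cut in $G$ separating $A'$ from $B'$ is in particular an $(A',B')$-separating cut, and the specific cut $(A,B)$ is one such cut. Therefore $\mincut_G(A', B') \le |E_G(A,B)| = \mincut_G(p,x) \le 2d$, where the last inequality is the hypothesis on $x$.

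Next I would apply the well-linkedness assumption on $X$: for the partition $(A', B')$ of $X$, we have
\[
\mincut_G(A', B') \ge \phi \cdot d \cdot \min\{|A'|, |B'|\} = \phi \cdot d \cdot \min\{|A \cap X|, |B \cap X|\}.
\]
Combining this with the previous upper bound $\mincut_G(A', B') \le 2d$ and dividing through by $\phi d$ yields $\min\{|A \cap X|, |B \cap X|\} \le 2/\phi$, which is exactly the claim.

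There is essentially no obstacle here — the proposition is a direct one-line consequence of the definition of $(d,\phi)$-well-linkedness applied to the $X$-partition induced by any low-capacity $(p,x)$-cut. The only thing to be mildly careful about is that $(A',B')$ is a genuine partition of $X$ (both sides nonempty), which is immediate since $p$ and $x$ are themselves in $X$ and lie on opposite sides of the cut by assumption.
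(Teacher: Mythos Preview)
Your proof is correct and follows essentially the same approach as the paper: apply the well-linkedness inequality to the partition $(A\cap X,\,B\cap X)$ of $X$, upper bound $\mincut_G(A\cap X,B\cap X)$ by $|E(A,B)|\le 2d$, and divide through. The only cosmetic difference is that you name the induced partition explicitly and verify it is nontrivial.
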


\begin{proof}
By the well-linkedness of $X$, we have $d\phi\cdot\min(|A\cap X|,|B\cap X|)\le|E(A,B)|$.
On the other hand, we have $|E(A,B)|\le\mincut_{G}(p,x)\le2d$. The bound follows by combining the two inequalities.
\end{proof}
Let $S$ be an i.i.d. sample of $X$ with rate $\phi/2$. Let $T=S\cup\{p\}$.
The second observation roughly says that, with probability
$\Omega(\phi)\ge1/n^{o(1)}$, one side of $(A,B)$ contains only one vertex from $T$.
\begin{prop}
\label{prop:prob iso}With probability at least $\phi/(2e)$, either 
\begin{itemize}
\item $A\cap T=\{p\}$ and $x\in T$, or 
\item $B\cap T=\{x\}$ and $p\in T$.
\end{itemize}
\end{prop}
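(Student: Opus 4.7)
My plan is to reduce the claim to a direct independent-Bernoulli calculation, leveraging the unbalancedness of $(A,B)$ granted by \Cref{lem:unbalanced}. By the symmetry of the two disjuncts, I may assume without loss of generality that $|A \cap X| \le 2/\phi$ and aim only for the first event $A \cap T = \{p\}$ and $x \in T$; the other case, in which $|B \cap X| \le 2/\phi$, is handled identically by swapping the roles of $A,p$ with $B,x$, and \Cref{lem:unbalanced} guarantees at least one of the two assumptions holds.

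Next I would rewrite both conjuncts in terms of the sample $S$. Because $T = S \cup \{p\}$ with $S \subseteq X$ and $p \in A \cap X$, the event $A \cap T = \{p\}$ is exactly the event that no element of $(A \cap X) \setminus \{p\}$ is sampled into $S$. Because $x \in X$ and $x \ne p$, the event $x \in T$ is exactly $x \in S$. These two events depend on disjoint subsets of $X$ (since $x \in B$ is disjoint from $A$), so by independence of the coordinates of $S$ they are independent and their probabilities multiply:
\[
\Pr\big[A \cap T = \{p\}\text{ and }x \in T\big] \;=\; (1 - \phi/2)^{|A \cap X| - 1} \cdot (\phi/2) \;\ge\; (1 - \phi/2)^{2/\phi - 1} \cdot (\phi/2),
\]
where the inequality uses the case assumption $|A \cap X| \le 2/\phi$ together with $1 - \phi/2 < 1$.

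The only quantitative step left is the numeric bound $(1 - \phi/2)^{2/\phi - 1} \ge 1/e$ for $\phi \in (0,1]$, which is the closest thing to an ``obstacle'' and amounts to a one-line calculus estimate. Taking logarithms and applying the standard inequality $\ln(1-y) \ge -y/(1-y)$ on $[0,1)$ gives
\[
\left(\tfrac{2}{\phi} - 1\right) \ln\!\left(1 - \tfrac{\phi}{2}\right) \;\ge\; -\frac{(2/\phi - 1)(\phi/2)}{1 - \phi/2} \;=\; -1,
\]
and exponentiating yields the desired bound. Plugging back produces a probability of at least $(1/e)\cdot(\phi/2) = \phi/(2e)$, completing the proof. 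Overall there is no genuine obstacle here: once \Cref{lem:unbalanced} supplies the unbalanced side, the remainder is essentially a two-factor Bernoulli computation together with the standard log inequality above.
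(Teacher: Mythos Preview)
Your proof is correct and essentially identical to the paper's: both apply \Cref{lem:unbalanced} to get a small side, factor the target event into two independent Bernoulli events over disjoint coordinates of $S$, and bound $(1-\phi/2)^{|A\cap X|-1}\ge 1/e$. The only difference is that you spell out the last numeric inequality via $\ln(1-y)\ge -y/(1-y)$, whereas the paper simply asserts it.
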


\begin{proof}
By \Cref{lem:unbalanced}, either $|A\cap X|\le2/\phi$ or $|B\cap X|\le2/\phi$.
If $|A\cap X|\le2/\phi$, then we have 
\[
\Pr[A\cap T=\{p\}\text{ and }x\in T]=\Pr[(A\setminus\{p\})\cap S=\emptyset]\cdot\Pr[x\in S]=\left(1-\frac{\phi}{2}\right)^{|A\cap X|-1}\cdot\frac{\phi}{2}\ge\frac{1}{e}\cdot\frac{\phi}{2}
\]
If $|B\cap X|\le2/\phi$, then we have 
\[
\Pr[B\cap T=\{x\}\text{ and }p\in T]=\Pr[(B\setminus\{x\})\cap S=\emptyset]\cdot\Pr[x\in S]=\left(1-\frac{\phi}{2}\right)^{|B\cap X|-1}\cdot\frac{\phi}{2}\ge\frac{1}{e}\cdot\frac{\phi}{2}.
\]
\end{proof}
The last observation says that given that the event in \Cref{prop:prob iso}
happens, then either the $(p,T\setminus\{p\})$-mincut or the $(x,T\setminus\{x\})$-mincut
is a $(p,x)$-mincut. This will be useful for us because the Isolating
Cut Lemma can compute the $(p,T\setminus\{p\})$-mincut and the $(x,T\setminus\{x\})$-mincut
quickly. 
\begin{prop}
\label{prop:good cut}We have the following:
\begin{enumerate}
\item If $A\cap T=\{p\}$ and $x\in T$, then any $(p,T\setminus\{p\})$-mincut
is a $(p,x)$-mincut.
\item If $B\cap T=\{x\}$ and $p\in T$, then any $(x,T\setminus\{x\})$-mincut
is a $(p,x)$-mincut.
\end{enumerate}
\end{prop}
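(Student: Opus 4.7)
The plan is to prove both parts by a symmetric two-way inequality on $\mincut_G$ values. I will focus on part (1), since part (2) follows by swapping the roles of $p,x$ and $A,B$.

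First I would unpack what the hypothesis $A\cap T=\{p\}$ and $x\in T$ gives us. Since $p\in A$ is the only element of $T$ in $A$, all other terminals in $T$ lie in $B$, i.e.\ $T\setminus\{p\}\subseteq B$. In particular $(A,B)$ is itself a valid $(p,T\setminus\{p\})$-cut, so
\[
\mincut_G(p,T\setminus\{p\})\ \le\ |E_G(A,B)|\ =\ \mincut_G(p,x),
\]
where the last equality uses that $(A,B)$ was chosen to be a $(p,x)$-mincut.

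For the reverse inequality, I would use the other half of the hypothesis: $x\in T\setminus\{p\}$. Any $(p,T\setminus\{p\})$-cut must in particular separate $p$ from $x$, so every such cut is also a $(p,x)$-cut, giving
\[
\mincut_G(p,T\setminus\{p\})\ \ge\ \mincut_G(p,x).
\]
Combining the two inequalities yields equality of the two mincut values. Consequently, if $(A',B')$ is any $(p,T\setminus\{p\})$-mincut, then $x\in T\setminus\{p\}\subseteq B'$, so $(A',B')$ is a $(p,x)$-cut with $|E_G(A',B')|=\mincut_G(p,T\setminus\{p\})=\mincut_G(p,x)$, which is exactly the statement that it is a $(p,x)$-mincut.

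There is no real obstacle here; the argument is a short sandwich inequality. The only thing one has to be careful about is not confusing the \emph{minimal} $(p,T\setminus\{p\})$-mincut (as returned by the Isolating Cut Lemma) with an arbitrary one; the proposition is stated for \emph{any} such mincut, and both the upper and lower bounds above apply uniformly to all of them, so no extra care is needed. Part (2) is then obtained by the obvious symmetric argument, exchanging $(p,A)$ with $(x,B)$ throughout.
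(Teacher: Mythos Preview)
Your proof is correct and follows essentially the same approach as the paper: both argue that $(A,B)$ witnesses $\mincut(p,T\setminus\{p\})\le\mincut(p,x)$, while $x\in T$ forces every $(p,T\setminus\{p\})$-cut to also be a $(p,x)$-cut, so any $(p,T\setminus\{p\})$-mincut has value exactly $\mincut(p,x)$. The paper's write-up is slightly terser (it does not state the reverse inequality as a separate line), but the logic is identical.
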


\begin{proof}
(1): As $A\cap T=\{p\}$, $(A,B)$ is a $(p,T\setminus\{p\})$-cut
and so $\mincut(p,T\setminus\{p\})\le|E(A,B)|=\mincut(p,x)$. Since
$x\in T$, any $(p,T\setminus\{p\})$-cut is a $(p,x)$-cut. Therefore,
a $(p,T\setminus\{p\})$-mincut is a $(p,x)$-cut of size at most
$\mincut(p,x)$. So it is a $(p,x)$-mincut.

(2): The proof is symmetric. As $B\cap T=\{x\}$, $(B,A)$ is a $(x,T\setminus\{x\})$-cut
and so $\mincut(x,T\setminus\{x\})\le|E(A,B)|=\mincut(p,x)$. Since
$p\in T$, any $(x,T\setminus\{x\})$-cut is a $(p,x)$-cut. Therefore,
a $(x,T\setminus\{x\})$-mincut is a $(p,x)$-cut of size at most
$\mincut(p,x)$. So it is a $(p,x)$-mincut.
\end{proof}
\begin{algorithm}
\begin{enumerate}
\item Initialize $\mathtt{val}[x]=2d$ for all $x\in X\setminus\{p\}$.
\item Repeat $c\cdot\frac{\ln n}{\phi}$ times (for a large enough constant
$c$) 
\begin{enumerate}
\item Sample $S$ from $X$ i.i.d. at rate $\frac{\phi}{2}$.
\item Call the Isolating Cuts Lemma (\Cref{lem:isolating}) on terminal set
$T=S\cup\{p\}$ and obtain a $(t,T\setminus\{t\})$-mincut $C_{t}$
of size $\delta(C_{t})$ for every $t\in T$.
\item For each $x\in S\setminus\{p\}$, do the following:
\begin{enumerate}
\item If $C_{x}$ is a $(p,x)$-cut (i.e.,~$p\notin C_{x}$), then $\mathtt{val}[x]\gets\min\{\mathtt{val}[x],\delta(C_{x})\}$.
\item If $C_{p}$ is a $(p,x)$-cut (i.e.,~$x\notin C_{p}$), then $\mathtt{val}[x]\gets\min\{\mathtt{val}[x],\delta(C_{p})\}$.
\end{enumerate}
\end{enumerate}
\item Return $\mathtt{val}[\cdot]$.
\end{enumerate}
\caption{$\textsc{SingleSourceMincut}(G,X,d,\phi,p)$\label{alg:iso well-linked}}
\end{algorithm}

The above observations directly suggest an algorithm stated in \Cref{alg:iso well-linked}.
Below, we prove its correctness in \Cref{lem:iso correct} and bound
the running time in \Cref{lem:iso time}. 

\begin{lem}
\label{lem:iso correct}\Cref{alg:iso well-linked} computes, with
high probability, $\mathtt{val}[x]=\min\{2d,\mincut(p,x)\}$ for all
$x\in X\setminus\{p\}$.
\end{lem}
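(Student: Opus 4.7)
The plan is to sandwich $\mathtt{val}[x]$ between $\min\{2d,\mincut_G(p,x)\}$ from both sides: a deterministic lower bound and a high-probability upper bound obtained by chaining \Cref{prop:prob iso} and \Cref{prop:good cut} through the $c\ln n/\phi$ independent repetitions.

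First I would establish the invariant $\mathtt{val}[x] \ge \min\{2d,\mincut_G(p,x)\}$ throughout the execution. Initialization trivially sets $\mathtt{val}[x] = 2d$. Each update in step 2(c) replaces $\mathtt{val}[x]$ by $\delta(C)$ for some $C \in \{C_x, C_p\}$ that the if-check certifies to be a $(p,x)$-cut; hence $\delta(C) \ge \mincut_G(p,x) \ge \min\{2d,\mincut_G(p,x)\}$. This step is essentially bookkeeping.

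For the upper bound I fix $x \in X \setminus \{p\}$. If $\mincut_G(p,x) \ge 2d$, initialization already gives $\mathtt{val}[x] = 2d$ and the $\min$ in the update rule keeps it there, so the bound is deterministic. Otherwise $\mincut_G(p,x) < 2d$, and it suffices to argue that with high probability at least one iteration sets $\mathtt{val}[x] \le \mincut_G(p,x)$. Fix an arbitrary $(p,x)$-mincut $(A,B)$. By \Cref{prop:prob iso}, the sampled set $T = S \cup \{p\}$ realizes one of the two favorable events with probability at least $\phi/(2e)$ in a single iteration; in either case $x \in T \setminus \{p\} = S \setminus \{p\}$, so the algorithm actually enters the inner loop for $x$, and \Cref{prop:good cut} guarantees that the corresponding $C_p$ or $C_x$ returned by \Cref{lem:isolating} is a $(p,x)$-mincut whose size the update then records.

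The main quantitative step is concentration across iterations: since the samples are independent, the probability that no iteration succeeds for a fixed $x$ is at most $(1-\phi/(2e))^{c\ln n/\phi} \le \exp(-c\ln n/(2e)) = n^{-\Omega(c)}$, which a union bound over the at most $n$ choices of $x \in X \setminus \{p\}$ drives below any prescribed inverse polynomial by taking $c$ large enough. Combining both bounds yields $\mathtt{val}[x] = \min\{2d,\mincut_G(p,x)\}$ for all $x \in X \setminus \{p\}$ with high probability. I do not anticipate a serious obstacle here: the two preceding propositions already package the probabilistic content, so the lemma reduces to a standard amplification-and-union-bound argument; the only point that warrants an explicit sentence is that the favorable events of \Cref{prop:prob iso} automatically place $x$ in $S$, so the algorithm is guaranteed to process $x$ when the favorable event occurs.
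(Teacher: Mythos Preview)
Your proposal is correct and follows essentially the same approach as the paper: a deterministic lower bound from the fact that every update records the size of an actual $(p,x)$-cut, and a high-probability upper bound by amplifying the success event of \Cref{prop:prob iso} across the $c\ln n/\phi$ independent rounds and invoking \Cref{prop:good cut}. Your explicit remark that the favorable events of \Cref{prop:prob iso} force $x\in S\setminus\{p\}$ (so that the inner loop actually processes $x$) is a detail the paper leaves implicit, but otherwise the arguments coincide.
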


\begin{proof}
Note that $\mathtt{val}[x]\le2d$ from initialization. So we only
need to show that if $\mincut(p,x)\le2d$, then $\mathtt{val}[x]=\mincut(p,x)$
whp. On one hand, $\mathtt{val}[x]\ge\mincut(p,x)$ because whenever
$\mathtt{val}[x]$ is decreased, it is assigned the size of some
$(p,x)$-cut (which is either $C_{x}$ or $C_{p}$). On the other
hand, we claim $\mathtt{val}[x]\le\mincut(p,x)$ whp. To see this,
observe that, with probability at least $1-(1-\phi/(2e))^{c\cdot\frac{\ln n}{\phi}}\ge1-1/n^{10}$,
that there exists an iteration in \Cref{alg:iso well-linked} where
the event in \Cref{prop:prob iso} happens. That is, $A\cap T=\{p\}$
and $x\in T$, or $B\cap T=\{x\}$ and $p\in T$. Given this, by \Cref{prop:good cut},
either a $(x,T\setminus\{x\})$-mincut $C_{x}$ or a $(p,T\setminus\{p\})$-mincut
$C_{p}$ is a $(p,x)$-mincut and so the algorithm sets $\mathtt{val}[x]\le\mincut(p,x)$.
\end{proof}
\begin{lem}
\label{lem:iso time}\Cref{alg:iso well-linked} takes $O\left(\frac{m\log^2 n}{\phi}\right)$ time plus $O\left(\frac{\log^{2}n}{\phi}\right)$
max-flow calls each on a graph with $O(n)$ vertices and $O(m)$ edges. 
\end{lem}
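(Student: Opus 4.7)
\textbf{Proof plan for \Cref{lem:iso time}.} The plan is a straightforward accounting of the cost of each line of \Cref{alg:iso well-linked} times the number of outer iterations. The outer loop runs $c \cdot \frac{\ln n}{\phi} = O\!\left(\frac{\log n}{\phi}\right)$ times, so the total cost is $O\!\left(\frac{\log n}{\phi}\right)$ times the cost of a single iteration. I will bound the per-iteration cost by a combinatorial term and by the cost of the Isolating Cuts call, and then multiply through.

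Within one iteration, step (a) samples $S$ from $X$ at rate $\phi/2$ in $O(|X|) = O(n)$ time. Step (b) invokes the Isolating Cuts Lemma (\Cref{lem:isolating}) on terminal set $T = S \cup \{p\} \subseteq V$, which by the lemma statement takes $O(m \log n)$ time plus $O(\log n)$ max-flow calls on graphs with $O(n)$ vertices and $O(m)$ edges. Step (c) iterates over each $x \in S \setminus \{p\}$ and performs two $O(1)$ checks of the form ``is $p \in C_x$?'' (or ``is $x \in C_p$?'') followed by a constant-time update to $\mathtt{val}[x]$; assuming the Isolating Cuts Lemma returns each $C_t$ together with a representation that supports $O(1)$ side-of-cut queries (for instance, a characteristic vector on $V$, which can be built in $O(n)$ time per cut without changing the asymptotic bound), the total cost of step (c) is $O(|S|) = O(n)$. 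Hence a single iteration costs $O(m \log n)$ time plus $O(\log n)$ max-flow calls on $O(n)$-vertex, $O(m)$-edge graphs.

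Multiplying by the $O(\log n / \phi)$ outer iterations gives a total time of $O\!\left(\frac{m \log^2 n}{\phi}\right)$ and a total of $O\!\left(\frac{\log^2 n}{\phi}\right)$ max-flow calls, each on a graph with $O(n)$ vertices and $O(m)$ edges, as claimed. The only non-routine observation is that the $O(n)$ sampling and bookkeeping costs per iteration are dominated by the $O(m \log n)$ cost of the Isolating Cut call (since $n \le m + 1$ in any connected instance we care about, and otherwise the claim is vacuous for the disconnected pieces), so no separate $O(n \log n / \phi)$ term appears. There is no real obstacle here — the lemma is essentially a unit-conversion from \Cref{lem:isolating}.
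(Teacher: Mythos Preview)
Your proof is correct and follows essentially the same approach as the paper: count the $O(\log n/\phi)$ outer iterations, note that each iteration costs $O(m\log n)$ time plus $O(\log n)$ max-flow calls via \Cref{lem:isolating}, and observe that the $O(n)$ bookkeeping per iteration is dominated. The paper's version is slightly terser but identical in substance.
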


\begin{proof}
Note that $O\left(\frac{\log n}{\phi}\right)$ invocations of \Cref{lem:isolating}
takes $O\left(\frac{\log^{2}n}{\phi}\right)$ max-flow calls each
on a graph with $O(n)$ vertices and $O(m)$ edges plus $O\left(\frac{m\log^2 n}{\phi}\right)$
time. Additionally, for each invocation of \Cref{lem:isolating},
we update $\mathtt{val}[\cdot]$ in $O(n)$ time for a total of
$O\left(\frac{n\log n}{\phi}\right)$
time. 
\end{proof}
By \Cref{lem:iso correct} and \Cref{lem:iso time}, this completes
the proof of \Cref{lem:single-source}.

\section{Well-linked Partitioning}
The goal of this section is to prove \Cref{lem:partition}.
We start with some notation.
For disjoint vertex subsets $V_1,\ldots,V_\ell\subseteq V$, define $E_G(V_1,\ldots,V_\ell)$ as the set of edges $(u,v)\in E$ with $u\in V_i$ and $v\in V_j$ for some $i\ne j$. For a vector $\mathbf{d}\in\mathbb R^V$ of entries on the vertices, define $\mathbf{d}(v)$ as the entry of $v$ in $\mathbf{d}$, and for a subset $U\subseteq V$, define $\mathbf{d}(U):=\sum_{v\in U}\mathbf{d}(v)$.
We now introduce the concept of an expander ``weighted'' by \emph{demands} on the vertices. 

\begin{definition}[$(\phi,\mathbf{d})$-expander]
Consider a weighted, undirected graph $G=(V,E)$ with edge weights $w$ and a vector $\mathbf{d}\in\mathbb R^V_{\ge0}$ of non-negative ``demands'' on the vertices. The graph $G$ is a \emph{$(\phi,\mathbf{d})$-expander} if for all subsets $S\subseteq V$,
\[ \frac{|E_G(S,V\setminus S)|}{\min\{\mathbf{d}(S),\mathbf{d}(V\setminus S)\}} \ge \phi.\]
\end{definition}

We now state the algorithm of~\cite{LiS21} that computes our desired expander decomposition, which generalizes the result from \cite{ChuzhoyGLNPS20}.


\begin{theorem}[$(\phi,\mathbf{d})$-expander decomposition algorithm~\cite{LiS21}]\label{thm:exp}
Fix any $\epsilon>0$ and any parameter $\phi>0$. Given a weighted, undirected graph $G=(V,E)$ with edge weights $w$ and a non-negative demand vector $\mathbf{d}\in\mathbb R^V_{\ge0}$ on the vertices, there is a deterministic algorithm running in $m^{1+\epsilon} (\lg n)^{O(1/\epsilon^2)}$
time that partitions $V$ into subsets $V_1,\ldots,V_\ell$ such that
 \begin{enumerate}
 \item For each $i\in[\ell]$, define the demands $\mathbf{d}_i\in\mathbb R^{V_i}_{\ge0}$ as $\mathbf{d}$ restricted to the vertices in $V_i$. Then, the graph $G[V_i]$ is a $(\phi,\mathbf{d}_i)$-expander.
 \item The total number $|E_G(V_1,\ldots,V_\ell)|$ of inter-cluster edges is $B \phi \cdot \mathbf{d}(V)$ where $B = (\lg n)^{O(1/\epsilon^4)}$.
 \end{enumerate}
\end{theorem}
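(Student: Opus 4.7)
The plan is to adapt the cut-matching-game based expander decomposition framework (as in \cite{ChuzhoyGLNPS20}) so that sparsity is measured relative to the demand vector $\mathbf{d}$ rather than the usual degree vector. At a high level, I would repeatedly search for sparse cuts in the $(\phi,\mathbf{d})$-sense and recurse on both sides; components in which no sparse cut is found are output as $(\phi,\mathbf{d})$-expanders. The output partition is then $V_1,\ldots,V_\ell$.

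The workhorse subroutine I would build is a \emph{demand-weighted most-balanced sparse cut} oracle. Given $(G,\mathbf{d})$, it either (i) certifies that $G$ is a $(\phi',\mathbf{d})$-expander for some $\phi' = \Omega(\phi / (\log n)^{O(1/\epsilon^2)})$, or (ii) returns a cut $(S, V \setminus S)$ with $|E_G(S, V\setminus S)| \le \phi \cdot \min\{\mathbf{d}(S), \mathbf{d}(V\setminus S)\}$ whose smaller side has $\mathbf{d}$-mass within a $(\log n)^{O(1/\epsilon^2)}$ factor of optimal. I would implement this as a cut-matching game in which vertices carry ``mass'' proportional to $\mathbf{d}$: each round invokes an approximate max-flow on an instance whose source/sink capacities scale with $\mathbf{d}$, either embedding a random matching into $G$ (building a witness that $G$ is $\mathbf{d}$-expanding) or exhibiting a sparse cut. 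Running this subroutine recursively --- splitting the component whenever a cut is returned and halting on components certified as expanders --- produces the decomposition $V_1,\ldots,V_\ell$.

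Correctness of property (1) is then immediate from the termination criterion; the two quantitative bounds require care. For the edge-count bound (2), I would charge each cut edge to $\phi$ times the $\mathbf{d}$-mass of the smaller side of its defining cut; by near-balance, the charges telescope along each root-to-leaf recursion path, yielding at most $O(\phi \cdot \mathbf{d}(V)) \cdot (\log n)^{O(1/\epsilon^4)} = B \phi \cdot \mathbf{d}(V)$ inter-cluster edges. For the runtime, each round of the cut-matching game is a near-linear-time approximate max-flow call, and the balance property ensures every edge appears in only $O(\log n)$ recursive calls, for a total of $m^{1+\epsilon} (\log n)^{O(1/\epsilon^2)}$ work.

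The main obstacle, and what this generalization over \cite{ChuzhoyGLNPS20} really costs, is reworking the cut-matching game for arbitrary demands. In the standard degree-weighted version, vertex $v$ begins with mass $\deg(v)$ and a potential-function argument shows that $O(\log^2 n)$ rounds of random matchings suffice to either equalize mass or certify a sparse cut. Replacing $\deg$ with an arbitrary non-negative $\mathbf{d}$ means the flow-embedding step must route source/sink capacities that can be wildly non-uniform multiples of $\mathbf{d}$, and the potential function --- together with the per-round progress lemma --- must be re-proved with mass proportional to $\mathbf{d}$. Reconciling this generalization with near-linear-time approximate max-flow while simultaneously preserving the $m^{1+\epsilon}$ runtime and the $B\phi\cdot\mathbf{d}(V)$ cut bound is the technical heart of the argument, and is where the $(\log n)^{O(1/\epsilon^4)}$ overhead arises from the accumulated approximation losses.
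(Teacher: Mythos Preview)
This theorem is not proved in the paper at all: it is quoted verbatim as a black-box result from~\cite{LiS21} (generalizing~\cite{ChuzhoyGLNPS20}), and the paper immediately proceeds to \emph{use} it in the proof of \Cref{lem:subset-U} without supplying any argument. So there is no ``paper's own proof'' to compare your proposal against.

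That said, your sketch is a reasonable high-level outline of how the cited result is obtained in~\cite{LiS21}: the demand-weighted cut-matching game with mass proportional to $\mathbf{d}$, recursive most-balanced sparse-cut finding, and the telescoping charge argument are indeed the ingredients. If your goal was to reconstruct the proof of the cited theorem rather than of something in this paper, you are on the right track; but for the purposes of this paper the statement is simply imported as a tool.
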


Given \Cref{thm:exp}, we can apply it to obtain the desired well-linked sets using the following key lemma:

\begin{lemma}\label{lem:subset-U}
There is an algorithm that, given any subset $U\subseteq V_{\ge d}=\{v\mid \deg_G(v)\ge d\}$, outputs disjoint subsets $X_1,\ldots,X_k$ of $U$ such that $k \le 2n/d$, every set $X_i$ is $(d,\phi)$-well-linked in $G$ for $\phi=n^{-o(1)}$, and $|\cup_i X_i| \ge |U|/2$. This algorithm runs in $m^{1+o(1)}$ time.
\end{lemma}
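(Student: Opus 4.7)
The plan is to invoke the $(\phi,\mathbf{d})$-expander decomposition of \Cref{thm:exp} a single time on $G$ with a demand vector tailored to $U$, and then discard the ``small'' parts. Concretely, I set $\mathbf{d}(v) := d$ for $v\in U$ and $\mathbf{d}(v) := 0$ otherwise, and pick a conductance parameter $\phi := 1/(cB)$ for a large enough constant $c$ (say $c=8$). Choosing $\epsilon = 1/(\log n)^{1/5}$ (or anything similarly slow-growing) makes $B = (\log n)^{O(1/\epsilon^4)} = n^{o(1)}$, which simultaneously guarantees $\phi = n^{-o(1)}$ and a running time of $m^{1+\epsilon}(\log n)^{O(1/\epsilon^2)} = m^{1+o(1)}$. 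Letting $V_1,\ldots,V_\ell$ be the resulting partition and $X_i := V_i\cap U$, the output of the algorithm will be the collection $\{X_i : |V_i| > d/2\}$.

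Next I would verify that every retained $X_i$ is $(d,\phi)$-well-linked in $G$. The degree condition is automatic since $X_i \subseteq U \subseteq V_{\ge d}$. For any partition $(A,B)$ of $X_i$ with $|A|\le |B|$ and any cut $(S,V\setminus S)$ of $G$ separating $A$ from $B$, the restriction $S' := S\cap V_i$ is a cut in $G[V_i]$ with $A\subseteq S'$ and $B\subseteq V_i\setminus S'$, so $\mathbf{d}_i(S')\ge d|A|$ and $\mathbf{d}_i(V_i\setminus S')\ge d|B| \ge d|A|$. Expansion of $G[V_i]$ then gives $|E_{G[V_i]}(S', V_i\setminus S')| \ge \phi d|A|$, and since $E_G(S,V\setminus S) \supseteq E_{G[V_i]}(S', V_i\setminus S')$, I conclude that $\mincut_G(A,B)\ge \phi d\min(|A|,|B|)$.

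For the counting and coverage bounds, $k\le 2n/d$ is immediate because the retained parts $V_i$ are pairwise disjoint subsets of $V$ and each has more than $d/2$ vertices. Coverage is where simplicity of $G$ enters. For each small part (with $|V_i|\le d/2$) and each $v\in X_i$, simplicity gives $\deg_{G[V_i]}(v)\le |V_i|-1 < d/2$, while $\deg_G(v)\ge d$, so $v$ has strictly more than $d/2$ inter-cluster edges. Summing this lower bound over small parts and noting that each inter-cluster edge is charged at most twice (once per endpoint), the decomposition's edge bound gives
\[
\tfrac{d}{2}\sum_{i:\,|V_i|\le d/2}|X_i| \;\le\; 2\,|E_G(V_1,\ldots,V_\ell)| \;\le\; 2B\phi\cdot\mathbf{d}(V) \;=\; 2B\phi d|U|,
\]
so $\sum_{\text{small}}|X_i| \le 4B\phi|U| \le |U|/2$ as soon as $c\ge 8$. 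Therefore the retained parts cover at least $|U|/2$ of $U$, completing the bound.

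The main subtle point---and the only step that requires $G$ to be simple---is the small-part counting above: without it, the expander decomposition could in principle shatter $U$ into far more than $2n/d$ pieces, since parts with zero demand are essentially unconstrained by \Cref{thm:exp}. Everything else (the reduction from $\mincut_{G}$ to $\mincut_{G[V_i]}$, the runtime, the choice of $\epsilon$) is a routine consequence of \Cref{thm:exp} once the correct demand vector and the $d/2$ size threshold are in place.
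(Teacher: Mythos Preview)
Your proof is correct and shares the same skeleton as the paper's: a single call to \Cref{thm:exp} with demand $d$ on $U$ and $0$ elsewhere, the choice $\phi=1/(8B)$ with $\epsilon=(\log n)^{-1/5}$, and the well-linkedness argument via expansion of $G[V_i]$ is identical.

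The one point of divergence is the selection criterion. The paper keeps a part $V_i$ when its boundary-per-terminal $x(v):=|E_G(V_i,V\setminus V_i)|/|U\cap V_i|$ is at most $d/2$; coverage $\ge|U|/2$ then follows by Markov on the values $x(v)$, and simplicity is invoked afterwards to show each kept part has $|V_i|\ge d/2$ (by averaging, some $v\in X_i$ has at most $d/2$ boundary edges, hence at least $d/2$ distinct neighbors inside $V_i$). You instead select directly on $|V_i|>d/2$, which makes the count $k\le 2n/d$ immediate, and use the contrapositive of the same simplicity observation (every terminal in a small part contributes more than $d/2$ boundary edges) together with the global boundary bound to limit the population of small parts. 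Your retained family is a superset of the paper's, and your coverage argument is slightly more direct since it avoids the Markov/averaging step; otherwise the two proofs are interchangeable.
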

\begin{proof}
Apply \Cref{thm:exp} with $\phi=\frac1{8B}$ (recall that $B = (\lg n)^{O(1/\epsilon^4)}$), and the following demands: $\mathbf{d}(v)=d$ for all $v\in U$ and $\mathbf{d}(v)=0$ for all $v\notin U$. (We will set the value of $\epsilon$ later.) We obtain a partition $V_1,\ldots,V_\ell$ of $V$ with $|E_G(V_1,\ldots,V_\ell)|\le B\phi\cdot\mathbf{d}(V) = \mathbf{d}(V)/8 = d\cdot |U|/8$. For each $i\in[\ell]$ and vertex $v\in U\cap V_i$, assign $v$ the value $x(v)=\frac{|E_G(V_i,V\setminus V_i)|}{|U\cap V_i|}$, so that $\sum_{v\in U}x(v)=2|E_G(V_1,\ldots,V_\ell)|\le d\cdot |U|/4$. If we select a vertex $v\in U$ uniformly at random, then the expected value of $x(v)$ is at most $d/4$; so, by Markov's inequality, we have $x(v)\le d/2$ with probability at least $1/2$. Let $U'\subseteq U$ be all vertices $v\in U$ with $x(v)\le d/2$; it follows that $|U'|\ge|U|/2$. For each subset $V_i$, the value of $x(v)$ is identical for all vertices in $U\cap V_i$. Hence, either $U\cap V_i$ is contained in $U'$ or is disjoint from it; without loss of generality, let $V_1,\ldots,V_k$ be the sets that are contained in $U'$ for some $k\le\ell$. We now set $X_i=U\cap V_i$ for all $i\in[k]$.

We first show that each set $X_i$ is $(d,\phi)$-well-linked. Since $X_i\subseteq U\subseteq V_{\ge d}$, we have $\deg_G(v)\ge d$ for all $v\in X_i$. Now consider a partition $(A,B)$ of $X_i$. For any subset $S\subseteq V_i$ that contains $A$ and is disjoint from $B$, we have $|E_G (S,V_i\setminus S)|\ge \phi \cdot \min\{\mathbf{d}(S),\mathbf{d}(V\setminus S)\} = d\phi\cdot \min\{|A|,|B|\}$, where the inequality holds by definition of $(\phi,\mathbf{d})$-expander. It follows that $\text{mincut}_G(A,B)\ge d\phi\cdot \min\{|A|,|B|\}$,
and hence, $X_i$ is $(d,\phi)$-well-linked.

We now show that $|V_i|\ge d/2$ for all $i\in[k]$; since the $V_i$ are disjoint, this would imply that $k\le 2n/d$. Recall that $X_i=U\cap V_i$, so that $|E_G(V_i,V\setminus V_i)|=\sum_{v\in X_i}x(v)\le |X_i|\cdot d/2$. By averaging, there exists $v\in X_i$ with $|E_G(v,V\setminus V_i)|\le d/2$. Since $\deg_G(v)\ge d$, at least $d/2$ edges incident to $v$ must have their other endpoint inside $V_i$. Since $G$ is simple, the endpoints must be distinct, so $|V_i|\ge d/2$, as promised.

Finally, we fix the value of $\epsilon = (\lg n)^{-1/5}$. Then, 
$$\phi = \frac{1}{8B} = \frac{1}{8 (\lg n)^{O(1/\epsilon^4)}} = \frac{1}{8 (\lg n)^{O((\lg n)^{4/5})}} = \frac{1}{n^{o(1)}}.$$
The running time is $m^{1+\epsilon}\cdot (\lg n)^{O(1/\epsilon^2)} = m^{1+(\lg n)^{-1/5}} \cdot (\lg n)^{O(\lg n)^{2/5}} = m^{1+o(1)}$.
\end{proof}

We now prove \Cref{lem:partition} using \Cref{lem:subset-U}. Begin with $U=V_{\ge d}$ and repeatedly apply \Cref{lem:subset-U} to obtain disjoint $X_1,\ldots,X_k\subseteq U$, and then reassign $U$ to be $U\setminus\bigcup_{i\in[k]}X_i$ for the next iteration; stop when $|U|=1$. Since the size of $U$ halves at each iteration, the number of iterations is at most $\lceil\log_2n\rceil$. We thus obtain $\lceil\log_2n\rceil\cdot 2n/d$ sets, each of which is $(d,\phi)$-well-linked in $G$, where $\phi=n^{-o(1)}$.

\section{Conclusion}

In this paper, we gave an $n^{2+o(1)}$-time algorithm for constructing 
a Gomory-Hu tree in a simple, undirected graph thereby solving the 
All Pairs Minimum Cuts problem in the same running time. 
Generalizing this result to weighted graphs, thereby improving on
Gomory and Hu's 60-year old algorithm that uses $n-1$ maxflow calls 
would be a breakthrough result. An intermediate goal would be to show 
this for unweighted multigraphs, i.e., allowing parallel edges but 
not edge weights. The $\tO(mn)$-time Gomory-Hu tree algorithms 
of Bhalgat~{\em et al.}~\cite{HariharanKPB07}
and of Karger and Levine~\cite{KargerL15} apply to these graphs, but not to
general weighted graphs, suggesting that this intermediate class might be easier
for the \apmc problem than general weighted graphs. Obtaining subcubic (in $n$) running times for 
the \apmc problem in unweighted (but not necessarily simple) graphs
remains an interesting open question.

A different question concerns the optimality of the result presented
in this paper. As
we discussed, our result is nearly optimal if mincut values have to be 
explicitly reported for all vertex pairs. Even if that is not required, 
our algorithm is nearly optimal if the input graph is dense, i.e., if
$m = \Theta(n^2)$. So, that leaves graphs containing 
$o(n^2)$ edges under the condition
that we do not need explicit reporting of mincut 
values for all vertex pairs. Ideally, for such graphs, one would like 
to design a near-linear time algorithm, i.e., a running time of $m^{1+o(1)}$.
But, that is not known even for a single $s$-$t$ mincut, i.e. for the 
maxflow problem. A more immediate goal is to construct a Gomory-Hu tree
via a subpolynomial (or polylogarithmic) number of maxflow calls. Indeed,
this was recently achieved at the cost of obtaining an 
{\em approximate} Gomory-Hu tree instead of an exact one~\cite{LiP21approximate}. 
For the exact problem, the current paper 
gives a reduction, but to polylogarithmic calls of the single source
mincut problem rather than the $s$-$t$ mincut problem.\footnote{\cite{AbboudKT20focs}
also give a similar reduction, although they require the oracle to actually report mincuts 
while we only require the mincut values.} Clearly, the 
former is a more powerful oracle, and hence the reduction is easier.
Improving this reduction to the $s$-$t$ mincut problem, or equivalently
removing the approximation in the result of \cite{LiP21approximate},
remains an interesting open question.

\bibliographystyle{alpha}
\bibliography{ref}
\appendix

\section{Reducing Gomory-Hu Tree to Single-Source Mincut Values}
\label{sec:reduction}

The goal of this section is to prove \Cref{lem:reduction}. 
Let us first define the problem that the oracle solves, which we name \emph{$k$-bounded single source connectivity}, abbreviated as \ssc.

\begin{definition}[\ssc]
For a graph $G=(V,E)$, a terminal set $X$, and a source terminal $s\in X$, the output to  \ssc is the values $\min\{\mincut_G(s,v),k\}$ for all terminals $v\in X\setminus\{s\}$.
\end{definition}
Note that in we showed in \Cref{sec:isolate_well-linked} how to solve \ssc fast when $X$ is a well-linked set. Below, we will actually prove something stronger than  \Cref{lem:reduction} by relaxing the task of the oracle: instead of requiring the oracle to compute  \ssc, we only require the following \emph{verification} problem.

\begin{definition}[\sscv]
The input to  \sscv is a graph $G=(V,E)$, a terminal set $X$, a source terminal $s\in X$, and values $\tilde\lambda_v:v\in X\setminus\{s\}$ such that $\tilde\lambda_v\ge\min\{\mincut(s,v),k\}$. The task is to determine, for each vertex $v\in X\setminus\{s\}$, whether or not $\tilde\lambda_v=\min\{\mincut(s,v),k\}$.
\end{definition}

Clearly, if the oracle can compute  \ssc, then it can easily answer  \sscv. By focusing on \sscv instead of the original  \ssc, we hope to direct future efforts at tackling the former problem, which appears more tractable and is still powerful enough to solve the partial tree problem.

For the rest of this section, we prove the following lemma, which implies \Cref{lem:reduction} as discussed above.

\begin{lemma}\label{lem:reduction2}
For any vertex set $X\subseteq V$ and value $k\ge0$, there is a randomized algorithm that outputs a partial tree of $G$ that w.h.p., captures all mincuts separating $X$ of size at most $k$ and no mincuts of size more than $k$. It makes calls to max-flow and \sscv on graphs with a total of $\tilde{O}(n)$ vertices and $\tilde{O}(m)$ edges, and runs for $\tilde{O}(m)$ time outside of these calls.
\end{lemma}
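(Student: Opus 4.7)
The plan is to follow closely the framework of Li and Panigrahi~\cite{LiP21approximate} for constructing approximate Gomory--Hu trees, adapting it to the exact setting by combining the \sscv oracle with the Isolating Cuts Lemma (\Cref{lem:isolating}). The argument naturally splits into three layers: simulating \ssc using \sscv, batching the computation of many actual mincuts via isolating cuts, and recursively stitching the mincuts into a partial tree in the style of Gomory and Hu.

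First I would argue that $O(\log n)$ rounds of \sscv suffice to simulate a single call to \ssc. Starting with the trivial upper bound $\tilde\lambda_v = k$ for every $v \in X \setminus \{s\}$, one performs parallel binary search: in each round the verifier returns which upper bounds are tight, and the non-tight ones are halved for the next round. After $O(\log k) = O(\log n)$ rounds we know $\min\{\mincut_G(s,v), k\}$ exactly for all $v$, at only a polylogarithmic blow-up in oracle calls.

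The next step is to turn these single-source mincut \emph{values} into actual cuts. Having bucketed the terminals into $Y_\lambda = \{v \in X : \mincut_G(s,v) \in [\lambda, 2\lambda)\}$ for each dyadic $\lambda \le k$, I would find an $s$-$v$ mincut for every $v \in Y_\lambda$ by repeated sampling: sample each $v \in Y_\lambda$ independently at a rate $\rho = 1/\polylog(n)$, add $s$, and feed the resulting terminal set to \Cref{lem:isolating}. A standard analysis analogous to \Cref{prop:prob iso} shows that, for any fixed $v$, the returned isolating cut coincides with an $s$-$v$ mincut whenever $v$ is sampled but no ``sibling'' terminal from its minimal-mincut structure is co-sampled, which occurs with probability $\Omega(\rho)$. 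Taking $\tilde O(1/\rho)$ independent repetitions and keeping, for each vertex, the smallest candidate cut of size equal to the known value $\lambda_v$ recovers an $s$-$v$ mincut for every $v$ with $\lambda_v < k$ with high probability. By \Cref{lem:isolating} the total vertex/edge count across these max-flow calls is $\tilde O(n)$ and $\tilde O(m)$ respectively, per pivot.

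Finally, I would wrap these two primitives in a randomized recursive Gomory--Hu construction: choose a pivot $s \in X$ uniformly at random, obtain $s$-$v$ mincuts $(A_v, B_v)$ for all terminals with $\mincut_G(s,v) \le k$, assemble them into a partial tree centered at $s$ by the classical Gomory--Hu crossing-mincut argument, and recurse on each resulting side (with the other side contracted to a single super-vertex) with its surviving terminals; terminals whose capped value equals $k$ need not be separated and are placed in $s$'s part. The random pivot makes the expected recursion depth $O(\log n)$. The main obstacle I anticipate is controlling the \emph{total} size of the graphs fed to \sscv and max-flow across the whole recursion: the Gomory--Hu contraction trick is essential to keep the sum of edge counts $\tilde O(m)$ rather than blowing up multiplicatively with depth, and a careful bookkeeping argument---mirroring the most delicate step of~\cite{LiP21approximate}---is needed to verify that the super-vertices introduced by contraction distort neither the \sscv interface nor the sampling analysis of the isolating-cut subroutine.
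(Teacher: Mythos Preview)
Your reduction from \ssc to \sscv in the first step does not work. The \sscv oracle requires its input estimates to satisfy $\tilde\lambda_v \ge \min\{\mincut(s,v),k\}$; when the oracle reports ``not tight'' you learn only that $\tilde\lambda_v > \min\{\mincut(s,v),k\}$, so halving $\tilde\lambda_v$ may drop it below the true value and violate the oracle's precondition on the next call. With only upper-bound queries and equality feedback you cannot do better than a linear sweep $k, k-1, \ldots$, which costs $O(k)$ rounds rather than $O(\log k)$ and destroys the $\tilde O(n)$/$\tilde O(m)$ size bounds since $k$ can be $\Theta(n)$.

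There is also a gap in the second step: sampling at a single fixed rate $\rho = 1/\polylog(n)$ will not isolate a terminal $v$ whose $v$-side of the $(s,v)$-mincut contains $\omega(\polylog(n))$ other terminals. The analogy with \Cref{prop:prob iso} breaks because that proposition relies on the well-linkedness of $X$ to force the mincut to be unbalanced, and no such assumption is available here.

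The paper sidesteps both issues by reversing your order of operations. It first runs the isolating cuts lemma at geometrically decreasing sampling rates $1,\tfrac12,\tfrac14,\ldots$ (so that for every $v$, \emph{some} rate isolates $v$ on its minimal-mincut side with probability $\Omega(1)$), obtaining candidate cut values $\tilde\lambda_v=|\partial S^i_v|$, and then uses one \sscv call per rate to check which candidates are genuine $(s,v)$-mincuts of size at most $k$. There is never any need to recover the actual \ssc values. The recursion-depth argument is also more delicate than ``random pivot gives $O(\log n)$ depth'': one must track the potential $|U|-f_k(U)$, where $f_k(U)$ is the size of the largest subset of $U$ with all pairwise mincuts exceeding $k$, and show that either $|U|$ halves or this potential drops by a $(1-\Omega(1/\log|U|))$ factor per level, yielding $\polylog(n)$ depth rather than $O(\log n)$.
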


\begin{remark} 
This reduction should be compared with the result by \cite{AbboudKT20focs} which reduces computing a partial tree to a similar oracle for single source mincuts.
The main difference is that their oracle must be able to \emph{list
edges} crossing $(s,v)$-mincuts for each $v\in X\setminus\{s\}$
but, outside the oracle calls, they do not need to call max-flow.
Our oracle is potentially easier to implement: we only need to \emph{verify
mincut values}, but our reduction needs to call max flow. Another
difference is that their reduction requires the oracle to run on weighted
graphs even if the input graph is unweighted, while our oracle only needs to run on unweighted graphs. 

Our reduction also holds for \emph{weighted} graphs (assuming an oracle for weighted graphs), so for completeness, we include the weighted case even though it is not needed for our main result. The only non-trivial modification is that, in the contracted graphs, we combine parallel edges into a edge with combined weights. We show using a different argument that we can still bound the total number of (combined) edges by $\tilde{O}(m)$ over all recursive instances.
\end{remark}

Before we present the proof of \Cref{lem:reduction2}, we state a corollary that can be handy (but we do not need it in this paper). It says, given an algorithm for \sscv, we only need to call it and max flow $\Otil(1)$ times to obtain the whole Gomory-Hu tree.
\begin{corollary}\label{cor:reduc_from_GHT}
Given a graph $G$ with $n$ vertices and $m$ edges,
there is a randomized algorithm that computes a Gomory-Hu tree of $G$ by making calls to max-flow and \sscv (for several different $k$'s) on graphs with a total of $\tilde{O}(n)$ vertices and $\tilde{O}(m)$ edges, and runs for $\tilde{O}(m)$ time outside of these calls.
\end{corollary}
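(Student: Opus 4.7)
The plan is to adapt the recursive Gomory-Hu--style framework of~\cite{LiP21approximate} to the exact setting, using the weaker verification oracle \sscv together with a few max-flow calls in place of their direct single-source mincut computation. Given an input $(G,X,k)$, pick a random pivot $s\in X$ and first generate candidate upper bounds $\tilde\lambda_v$ on $\min\{\mincut_G(s,v),k\}$: apply the Isolating Cuts Lemma (\Cref{lem:isolating}) to $O(\log n)$ random subsamples of $X$ at geometrically decreasing inclusion rates, and let $\tilde\lambda_v$ be the smallest value (capped at $k$) of any $(s,v)$-separating isolating cut found. Then invoke \sscv to learn which of the $\tilde\lambda_v$ are tight. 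As in the analysis of \Cref{sec:isolate_well-linked}, for each individual $v$ there is, with constant probability, at least one sample rate at which one side of the true $(s,v)$-mincut is isolated from the other sampled terminals, so $O(\log n)$ repetitions make every $\tilde\lambda_v$ tight with high probability.

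For each $v$ with verified $\tilde\lambda_v < k$, the corresponding $s$-minimal $(s,v)$-mincut is recovered from the isolating-cut computation; the collection of such minimal cuts forms a laminar family on $V\setminus\{s\}$ by standard submodular uncrossing. This laminar family induces a ``star-like'' partial tree rooted at $s$: insert a tree edge of weight $\tilde\lambda_v$ from $s$ (or from the nearest ancestor in the laminar family) to a fresh representative of each maximal block $B_v$, placing the remaining terminals of $X\cap B_v$ inside that subtree. We then recurse on each maximal block $B_v$ after contracting $V\setminus B_v$ into a single super-vertex, and separately on the ``$s$-side'' obtained by contracting every maximal $B_v$. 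Correctness is by the Gomory-Hu uncrossing lemma: contraction along a verified $(s,v)$-mincut preserves the $u$-$w$ mincut for any two terminals on the same side of that cut, so the recursion produces a partial tree that captures every mincut separating $X$ of value at most $k$. No tree edge ever exceeds $k$, since we only insert mincuts of verified value strictly less than $k$, while pairs for which the verified value equals $k$ simply carry no edge at this level and are deferred to subsequent recursion.

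The main obstacle is the resource accounting. Random pivoting yields expected recursion depth $O(\log n)$ and ensures each vertex of $V$ appears in only $\tilde O(1)$ recursive instances, so the totals $\tilde O(n)$ vertices across all oracle and max-flow calls follow easily. The edge bound is the delicate point: naively, recursive contraction could blow up the sum of edge counts. The fix, inherited from~\cite{LiP21approximate}, is to merge parallel edges in every contracted graph into a single weighted edge and then argue via a Steiner-tree / potential argument that each original edge contributes to at most $\tilde O(1)$ recursive instances in total. This yields the $\tilde O(m)$ bound on the edge count passed to all oracle and max-flow calls, and likewise bounds the work outside those calls by $\tilde O(m)$. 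For the weighted extension noted in the remark, the same potential argument applied to $\sum_e \log w(e)$ controls the blow-up from merging parallel edges.
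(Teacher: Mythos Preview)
Your key claim—that $O(\log n)$ rounds of isolating cuts at geometric rates make \emph{every} $\tilde\lambda_v$ tight w.h.p.—is false without a well-linkedness hypothesis. The argument you invoke from \Cref{sec:isolate_well-linked} hinges on \Cref{lem:unbalanced}: well-linkedness forces one side of every $(p,v)$-mincut to contain at most $2/\phi$ terminals, which is exactly why the single rate $\phi/2$ succeeds with probability $\Omega(\phi)$. Here $X$ is arbitrary. If the $(s,v)$-mincut has $\Theta(|X|)$ terminals on each side, then at every rate $r$ the event ``$v$ is sampled and is the unique sampled terminal on its side'' (and likewise for $s$, which additionally requires $v$ to be sampled) has probability $r(1-r)^{\Theta(|X|)}=O(1/|X|)$; summing over all $O(\log n)$ rates and repeating $O(\log n)$ times still gives only $O(\log^2 n/|X|)$. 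So $\polylog(n)$ samples cannot certify such a $v$.

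This is not a patchable detail; it is precisely why the paper's \ref{step} does not attempt to certify all $v$. It feeds the candidate values $|\partial S^i_v|$ to the \sscv oracle to \emph{detect} which happen to be tight, restricts to those with $|S^i_v\cap U|\le|U|/2$, and proves only that the best single level $i$ covers an $\Omega(1/\log|U|)$ fraction of the ``small-side'' set $D^*$ in expectation (\Cref{lem:step}). Progress then comes from the random pivot: \Cref{lem:ghtree:random-s} gives $\mathbb E[|D^*|]\ge(|U|-f_k(U))/2$, and a two-potential argument on $|U|$ and $|U|-f_k(U)$ (\Cref{lem:depth}) yields $\polylog(n)$ recursion depth. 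Your outline asserts $O(\log n)$ depth from random pivoting without any of this machinery, and also asserts that isolating cuts recover the \emph{$s$-minimal} $(s,v)$-mincut, whereas \Cref{lem:isolating} returns the $v$-minimal $(v,T\setminus\{v\})$-mincut—a different object, and one whose laminarity across different sample sets $T$ is not immediate.

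Finally, the paper's own proof of the corollary is one sentence: invoke \Cref{lem:reduction2} for $k=2^0,2^1,\ldots$ and refine iteratively as in the proof of \Cref{lem:refineOnSparsifier}. What you are really attempting is a re-derivation of \Cref{lem:reduction2}; the gap above is exactly the crux of that lemma, and your outline also never says how to pass from the fixed-$k$ partial tree to the full Gomory--Hu tree.
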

The proof of \Cref{cor:reduc_from_GHT} is simply by calling \Cref{lem:reduction2} with $k = 2^i$ from $i =0$ to $O(\log n)$ to iteratively refine the partial tree until it captures all cut sizes, i.e., it becomes a Gomory-Hu tree. Note that this goes in a very similar way as in the proof of \Cref{lem:refineOnSparsifier}.

Now, we prove \Cref{lem:reduction2}. Our approach for proving \Cref{lem:reduction2} is almost identical to the one in \cite{LiP21approximate}, except we adapt their approximate Gomory-Hu tree algorithm to the exact case with the additional $k$-bounded property in mind. The algorithm is described in Algorithm~\ref{ghtree} a few pages down. Before we present Algorithm~\ref{ghtree}, we first consider the subprocedure \ref{step} that it uses, which mirrors the procedure \textsc{CutThresholdStep} from \cite{LiP21approximate}. Below, for any vertex set $S\subset V$, we define $\partial_G S = E_G(S,V\setminus S)$. 
\begin{algorithm}
\mylabel{step}{\textsc{PartialTreeStep}}\caption{\ref{step}$(G=(V,E),s,U,k)$} 
\begin{enumerate}
\item Initialize $R^0\gets U$ and $D\gets\emptyset$
\item For all $i$ from $0$ to $\lfloor\lg|U|\rfloor$ do:
 \begin{enumerate}
 \item Call \Cref{lem:isolating} on $T=R^i$, obtaining a $v$-minimal $(v,R^i\setminus v)$-mincut for each $v\in R^i$. Let $S^i_v$ be the side of the $(v,R^i\setminus v)$-mincut containing $v$ \label{line:Sv}
 \item Call  \sscv on graph $G$, terminal set $X=R^i$, source $s$, and values $\tilde\lambda_v=| \partial S^i_v |$ for $v\in R^i\setminus\{s\}$ 
 \item Let $D^i\subseteq R^i$ be the union of $S^i_v\cap U$ over all $v\in R^i\setminus\{s\}$ satisfying $\tilde\lambda_v=\min\{\mincut(s,v),k\}$ and $|S^i_v\cap U|\le|U|/2$\label{line:D}
 \item $R^{i+1}\gets$ subsample of $R^i$ where each vertex in $R^i\setminus \{s\}$ is sampled independently with probability $1/2$, and $s$ is sampled with probability $1$
 \end{enumerate}
\item Return the largest set $D^i$ and the corresponding sets $S^i_v$ over all $v\in R^i\setminus\{s\}$ satisfying the conditions in line~\ref{line:D} 
\end{enumerate}
\end{algorithm}

Let $D=D^0\cup D^1\cup \cdots\cup D^{\lfloor\lg|U|\rfloor}$ be the union of the sets $D^i$ as defined in Algorithm~\ref{step}. Let $D^*$ be all vertices $v\in U\setminus \{s\}$ for which there exists an $(s,v)$-mincut of size at most $k$ whose $v$ side has at most $|U|/2$ vertices in $U$. The lemma below is almost identical to Lemma~2.5 in \cite{LiP21approximate}; the only difference is that \textsc{CutThresholdStep} in \cite{LiP21approximate} focuses on solving what they call the Cut Threshold problem, whereas we tackle the partial Gomory-Hu tree problem directly.

\begin{lemma}\label{lem:step}
We have $D^i\subseteq D^*$ for all $i$. Moreover, the largest set $D^i$ returned by \ref{step} satisfies $\mathbb E[|D^i|] \ge \Omega(|D^*|/\log|U|)$.
\end{lemma}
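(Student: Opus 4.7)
The plan is to prove the two claims of the lemma separately; both rest on the same submodularity (``uncrossing'') argument for cuts.

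For the containment $D^i \subseteq D^*$, I would fix $v \in D^i$ and extract the witness $u \in R^i \setminus \{s\}$ for which $v \in S^i_u \cap U$, $|S^i_u \cap U| \le |U|/2$, and the oracle verified $\tilde{\lambda}_u = |\partial S^i_u| = \min\{\mincut(s,u),k\}$. Because $S^i_u$ is a $(u, R^i\setminus u)$-cut and $s \in R^i\setminus u$, we automatically have $|\partial S^i_u| \ge \mincut(s,u)$, so the verified equality forces $|\partial S^i_u| = \mincut(s,u) \le k$; i.e.\ $S^i_u$ is an $(s,u)$-mincut. Now let $A_v$ be the $v$-minimal $(s,v)$-mincut. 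Since neither $A_v$ nor $S^i_u$ contains $s$, the intersection $A_v \cap S^i_u$ (which contains $v$) is an $(s,v)$-cut and the union $A_v \cup S^i_u$ (which contains $u$) is an $(s,u)$-cut, so $|\partial(A_v \cap S^i_u)| \ge \mincut(s,v)$ and $|\partial(A_v \cup S^i_u)| \ge \mincut(s,u)$. Submodularity of the cut function then forces equality throughout, making $A_v \cap S^i_u$ an $(s,v)$-mincut; by $v$-minimality of $A_v$ I conclude $A_v \subseteq S^i_u$, hence $|A_v \cap U| \le |S^i_u \cap U| \le |U|/2$, which is exactly $v \in D^*$.

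For the expectation bound, I would bucket $D^*$ by $s_v := |A_v \cap U|$, setting $B_j = \{v \in D^* : s_v \in (2^{j-1}, 2^j]\}$ for $j = 0, 1, \ldots, \lfloor \lg |U| \rfloor$; there are $O(\log |U|)$ buckets in total. The same uncrossing argument gives the structural fact: if $v \in R^i$ and $A_v \cap R^i = \{v\}$, then the $v$-minimal $(v, R^i\setminus v)$-mincut coincides with $A_v$, so $\tilde{\lambda}_v = \mincut(s,v) \le k$ verifies correctly and $v$ becomes a legitimate ``self-contributor'' at level $i$ with $S^i_v = A_v$, pushing all of $A_v \cap U$ into $D^i$. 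For $v \in B_j$ at level $i = j$, a direct calculation yields $\Pr[v \text{ self-contributes}] = 2^{-j}(1-2^{-j})^{s_v-1} = \Omega(2^{-j})$. I would then combine two lower bounds on $\mathbb{E}[|D^j|]$: (i)~summing self-contribution indicators over $v \in B_j$ gives $\mathbb{E}[|D^j|] \ge \Omega(|B_j|/2^j)$; (ii)~since each self-contributor in $B_j$ injects at least $2^{j-1}$ vertices into $D^j$, a Paley-Zygmund second-moment bound on the count of self-contributors yields $\mathbb{E}[|D^j|] \ge \Omega(\min(|B_j|, 2^j))$. A short arithmetic check shows that if both $|B_j|/2^j$ and $\min(|B_j|, 2^j)$ were below $M$ for every $j$, then $\sum_j |B_j| = O(M\log|U| + M^2)$, contradicting $\sum_j |B_j| = |D^*|$ when $M = c|D^*|/\log|U|$ for a small enough constant $c$. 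Therefore $\mathbb{E}[\max_i |D^i|] \ge \max_i \mathbb{E}[|D^i|] \ge \Omega(|D^*|/\log|U|)$.

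The main obstacle I anticipate is bound~(ii): lower-bounding the probability that at least one vertex in $B_j$ becomes a self-contributor. The events $\{v \text{ self-contributes}\}_{v \in B_j}$ share the random sample $R^j$ and are positively correlated, so naive independence-based bounds are unavailable. The remedy is a second-moment estimate in which the pairwise intersection probability is crudely bounded by $\Pr[v, v' \in R^j] = 2^{-2j}$, giving enough control on the variance for Paley-Zygmund to produce $\Pr[N_j \ge 1] = \Omega(\min(1, \mathbb{E}[N_j]))$. Everything else---the submodularity uncrossings, the bucketing, and the pigeonhole arithmetic---is routine.
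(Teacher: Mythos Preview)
Your argument for $D^i \subseteq D^*$ is correct and in fact cleaner than the paper's: uncrossing the minimal $(s,v)$-mincut $A_v$ against $S^i_u$ is exactly the right way to establish $A_v \subseteq S^i_u$.

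The expectation argument, however, has a real gap. Your closing arithmetic does not produce a contradiction: with $M = c|D^*|/\log|U|$, the bound $\sum_j |B_j| = O(M\log|U| + M^2)$ becomes $O\big(c|D^*| + c^2|D^*|^2/\log^2|U|\big)$, and the $M^2$ term already exceeds $|D^*|$ whenever $|D^*|$ is larger than roughly $\log^2|U|$. Put differently, bounds~(i) and~(ii) together only give $\mathbb{E}[|D^j|] = \Omega\big(\max\{|B_j|/2^j,\,\min(|B_j|,2^j)\}\big)$; when $|B_j| \gg 2^j$ this is $\Omega(2^j + |B_j|/2^j)$, which can be as small as $\Theta(\sqrt{|B_j|})$, whereas you need $\Omega(|B_j|)$ for the pigeonhole to go through.

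The missing ingredient is that the isolating cuts $\{S^j_v : v \in R^j\}$ returned by the Isolating Cuts Lemma are \emph{pairwise disjoint}. Hence distinct self-contributors at level $j$ inject disjoint sets $A_v \cap U$ into $D^j$, so $|D^j|$ is at least the \emph{sum} of $s_v$ over all self-contributing $v \in B_j$, not merely $N_j$ or $2^{j-1}$ times the indicator that $N_j \ge 1$. Linearity of expectation then gives directly
\[
\mathbb{E}[|D^j|] \;\ge\; \sum_{v\in B_j} s_v \cdot \Pr[v\text{ self-contributes}] \;\ge\; \sum_{v\in B_j} 2^{j-1}\cdot\Omega(2^{-j}) \;=\; \Omega(|B_j|),
\]
after which a one-line pigeonhole over the $O(\log|U|)$ buckets yields $\max_j \mathbb{E}[|D^j|] \ge \Omega(|D^*|/\log|U|)$. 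This is exactly what the paper does (its properties~(b) and~(c)); once you use disjointness, the Paley--Zygmund step is unnecessary.
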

\begin{proof}
We first prove that $D^i\subseteq D^*$ for all $i$. Each vertex $u\in D^i$ belongs to some $S^i_v$ satisfying $|\partial S^i_v|=\min\{\mincut(s,v),k\}\le k$ and $|S^i_v\cap U|\le |U|/2$. 
In particular, $\partial S^i_v$ is an $(s,u)$-mincut of size at most $ k$ whose side $S^i_v$ containing $u$ has at most $|U|/2$ vertices in $U$, so $u\in D^*$.

It remains to prove that $\mathbb E[|D^i|]\ge\Omega(|D^*|/\log|U|)$ for the largest set $D^i$.
For each vertex $v\in D^*$, let $S_v$ be the minimal $(v,s)$-mincut, and define $U_v=S_v\cap U$ and $n_v=|U_v|$. We say that a vertex $v\in D^*$ is \emph{active} if $v\in R^i$ for $i=\lfloor\lg n_{v}\rfloor$. In addition, if $U_v\cap R^i=\{v\}$, then we say that $v$ \emph{hits} all of the vertices in $U_v$ (including itself); see Figure~\ref{fig:hits}. In particular, in order for $v$ to hit any other vertex, it must be active. For completeness, we say that any vertex in $U\setminus D^*$ is not active and does not hit any vertex.

\begin{figure}\centering
\includegraphics[scale=1]{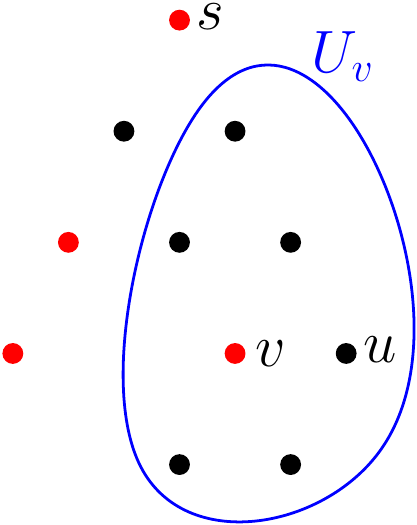}
\caption{Let $i=\lfloor\lg n_{v}\rfloor=\lfloor\lg 7\rfloor=2$, and let the red vertices be those sampled in $R^2$. Vertex $v$ is active and hits $u$ because $v$ is the only vertex in $U_{v}$ that is red.}\label{fig:hits}
\end{figure}

To prove that $\mathbb E[|D^i|] \ge \Omega(|D^*|/\log|U|)$, we will show that
 \begin{enumerate}
 \item[(a)] each vertex $u$ that is hit by some vertex $v$ is in $D^{\lfloor\lg n_v\rfloor}$, 
 \item[(b)] the total number of pairs $(u,v)$ for which $v\in D^*$ hits $u$ is at least $c |D^*|$ in expectation for some small enough constant $c>0$, and
 \item[(c)] each vertex $u$ is hit by at most $\lfloor\lg|U|\rfloor+1$ vertices. 
 \end{enumerate}

For (a), let $v$ be the vertex that hits $u$, and consider $i=\lfloor\lg n_{v}\rfloor$. We have $U_v\cap R^i=\{v\}$ by assumption, so $\partial S_v$ is a $(v,R^i\setminus\{v\})$-cut. On the other hand, we have that $\partial S^i_v$ is a $(v,R^i\setminus\{v\})$-mincut, so in particular, it is a $(v,s)$-cut. It follows that $\partial S_v$ and $\partial S^i_v$ are both $(v,s)$-mincuts and $(v,R^i\setminus v)$-mincuts, and $|\partial S^i_v|=\mincut(s,v)\le k$. Since $S_v$ is the minimal $(v,s)$-mincut and $S^i_v$ is a $(v,s)$-mincut, we must have $S_v \subseteq S^i_v$. Likewise, since $S_v$ is a $(v,R^i\setminus\{v\})$-mincut and $S^i_v$ is the minimal $(v,R^i\setminus\{v\})$-mincut, we also have $S^i_v\subseteq S_v $. It follows that $S_v=S^i_v$. Since $S_v$ is the minimal $(v,s)$-mincut and $v\in D^*$, we must have $|S_v\cap U|\le z$, so in particular, $|S^i_v\cap U|=|S_v\cap U|\le z$. Therefore, the vertex $v$ satisfies all the conditions of line~\ref{line:D}. Moreover, since $u\in U_v\subseteq S_v= S^i_v$, vertex $u$ is added to $D$ in the set $S^i_v\cap U$. 

For (b), for $i=\lfloor\lg n_{v}\rfloor$, we have $v\in R^i$ with probability exactly $1/2^i = \Theta(1/n_{v})$, and with probability $\Omega(1)$, no other vertex in $U_{v}$ joins $R^i$. Therefore, $v$ is active with probability $\Omega(1/n_{v})$. Conditioned on $v$ being active, it hits exactly $n_{v}$ many vertices. It follows that $v$ hits $\Omega(1)$ vertices in expectation. Summing over all $v\in D^*$ and applying linearity of expectation proves~(b).

For (c), since the isolating cuts $S^i_v$ over $v\in R^i$ are disjoint for each $i$, each vertex is hit at most once on each iteration $i$. Since there are $\lfloor\lg|U|\rfloor+1$ many iterations, the property follows.

Finally, we show why properties (a) to (c) imply $\mathbb E[|D^i|] \ge \Omega(|D^*|/\log|U|)$ for the largest $D^i$. By property~(b), the number of times some vertex hits another vertex is $\Omega(|D^*|)$ in expectation. Since there are $\lfloor\lg|U|\rfloor+1$ many distinct values of $\lfloor\lg n_v\rfloor$, there exists an integer $i$ for which the number of times some vertex $v$ with $\lfloor\lg n_v\rfloor=i$ hits another vertex is $\Omega(|D^*|/\log|U|)$ in expectation. Since each vertex is hit at most once on iteration $i$, there must be $\Omega(|D^*|/\log|U|)$ many vertices hit, all of which are included in $D^i$ by property~(a).
\end{proof}

\begin{algorithm}
\mylabel{ghtree}{\textsc{PartialTree}}
\caption{\ref{ghtree}$(G=(V,E),U,k)$ \label{alg:reduction}} 
\begin{enumerate}
\item Compute the Steiner connectivity $\lambda \gets \min_{u,v\in U} \mincut_G(u,v)$ w.r.t. terminals $U$ \\(If $|U|=1$, then $\lambda = \infty$) \\ \Comment{$\tilde O(|E|)$ time plus max-flow calls on graphs totalling $\Otil(|V|)$ vertices and $\Otil(|E|)$ edges~\cite{CQ20}}\label{line:steiner}
\item If $\lambda>k$, then terminate and return the trivial partial tree $(T,\mathcal P)$ with $V(T)=\{v\}$ for an arbitrary $v\in U$ and $\mathcal P=\{U\}$ as the trivial partition of $U$
\item $s\gets$ uniformly random vertex in $U$
\item Call $\ref{step}(G,s,U,k)$ to obtain $D^i$ and the sets $S^i_v$ (so that $D^i=\bigcup_v S^i_v\cap U$)
\ \label{line:max}
\item For each set $S^i_v$ do: \Comment{Construct recursive graphs and apply recursion}
 \begin{enumerate}
 \item Let $G_v$ be the graph $G$ with vertices $V\setminus S^i_v$ contracted to a single vertex $x_v$ \Comment{$S^i_v$ are disjoint}
 \item Let $U_v\gets S^i_v\cap U$
 \item If $|U_v|>1$, then recursively set $(T_v,\mathcal P_v)\gets\ref{ghtree}(G_v,U_v)$
 \end{enumerate}
\item Let $G_\lar$ be the graph $G$ with (disjoint) vertex sets $S^i_v$ contracted to single vertices $y_v$ for all $v\in D^i$
\item Let $U_\lar\gets U\setminus D^i$
\item If $|U_\lar|>1$, then recursively set $(T_\lar,\mathcal P_\lar)\gets\ref{ghtree}(G_\lar,U_\lar)$

\item Combine $(T_\lar,\mathcal P_\lar)$ and $\{(T_v,\mathcal P_v):v\in D^i\}$ into $(T,\mathcal P)$ according to \ref{combine}

\item Return $(T,\mathcal P)$

\end{enumerate}
\end{algorithm}

\begin{algorithm}
\mylabel{combine}{\textsc{Combine}}\caption{\ref{combine}$((T_\lar,\mathcal P_\lar),\{(T_v,\mathcal P_v): v\in R^i\} )$} \begin{enumerate}
    \item Construct $T$ by starting with the disjoint union $T_\lar\cup\bigcup_{v\in R^i}T_v$ and, for each $v\in R^i$, adding an edge $(x,y)$ of weight $| \partial_GS^i_v |$, where $x_v\in V_x$ and $y_v\in V_y$ 
    \label{line:combine-T}
    \item Construct $\mathcal P$ as the disjoint union of partitions $\mathcal P_\lar$ and $\mathcal P_v$ over all $v\in R^i$, restricted to vertices in $V$ 
    \label{line:combine-f}
    \item Return $(T,\mathcal P)$
\end{enumerate}

\end{algorithm}

Now, we state \Cref{alg:reduction} for \Cref{lem:reduction2} and prove its correctness below.

\begin{lemma}[Correctness]\label{lem:correctness}
For any vertex set $U\subseteq V$ and value $k\ge0$, the algorithm $\ref{ghtree}(G,U,k)$ returns a partial tree $(T,\cP)$ in $G$ with the terminal set $V(T) \subseteq U$ that captures all mincuts separating $U$ of size at most $k$ 
and no mincuts of size more than $k$.
\end{lemma}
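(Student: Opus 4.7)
My plan is to prove \Cref{lem:correctness} by strong induction on $|U|$. In the base case $\lambda>k$ (which subsumes $|U|=1$, since then $\lambda=\infty$), the algorithm returns the trivial partial tree with $V(T)=\{v\}$ for some $v\in U$ and $\cP=\{U\}$. This tree has no edges, so no edge of weight exceeding $k$; and it captures all mincuts separating $U$ of size at most $k$ because every pair $u,w\in U$ satisfies $\mincut_G(u,w)\ge\lambda>k$.

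For the inductive step, assume $\lambda\le k$ and $|U|\ge 2$. The call to \ref{step} produces disjoint sets $\{S^i_v\}$ and the set $D^i=\bigcup_v(S^i_v\cap U)$. By the verification step via \sscv, each $\partial_G S^i_v$ is a genuine $(s,v)$-mincut of size at most $k$; by the conditions in line~\ref{line:D}, each $U_v:=S^i_v\cap U$ satisfies $|U_v|\le|U|/2$, and $|U_\lar|=|U|-|D^i|<|U|$ when $D^i\ne\emptyset$, so every recursive call is made on a strictly smaller terminal set. The central technical ingredient is the Gomory-Hu contraction principle: since $\partial_G S^i_v$ is an $(s,v)$-mincut, contracting $V\setminus S^i_v$ to $x_v$ preserves mincuts between pairs in $S^i_v$, i.e., $\mincut_{G_v}(u,w)=\mincut_G(u,w)$ for all $u,w\in S^i_v$. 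Because the $S^i_v$ are pairwise disjoint, applying this fact one contraction at a time, and noting that each remaining $\partial_G S^i_{v'}$ is unaffected by contracting a disjoint $S^i_v$ (and hence stays a mincut), extends to simultaneous contractions and yields $\mincut_{G_\lar}(u,w)=\mincut_G(u,w)$ for all $u,w\in V\setminus\bigcup_v S^i_v\supseteq U_\lar$. The inductive hypothesis then provides valid partial trees $(T_v,\cP_v)$ and $(T_\lar,\cP_\lar)$ on $G_v$ and $G_\lar$.

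It remains to verify that \ref{combine} produces a valid $(T,\cP)$ on $G$ with the required properties. The terminal inclusion $V(T)\subseteq U$ is immediate because $V(T_v)\subseteq U_v$ and $V(T_\lar)\subseteq U_\lar$. Every edge of $T$ has weight at most $k$: the $T_v$ and $T_\lar$ edges by induction, and each added edge $(x,y)$ has weight $|\partial_G S^i_v|\le k$. Each part $P\in\cP$ (after the ``restriction to $V$'' step uncontracts the auxiliary vertices $x_v, y_v$) lies inside a single region --- either some $S^i_v$ or $V\setminus\bigcup_v S^i_v$ --- so any $u,w\in P\cap U$ lies in a single $U_v$ or in $U_\lar$, and the inductive capturing property together with contraction-preservation gives $\mincut_G(u,w)>k$. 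For the partial-tree property, the unique $T$-path between two terminals either stays inside one $T_v$ or $T_\lar$ (handled by induction) or crosses one or more added edges $(x,y)$; in the latter case, the minimum-weight edge on the path is realized in $G$ by the cut $(S^i_v,V\setminus S^i_v)$ for the corresponding $v$. The main obstacle I foresee is the bookkeeping around \ref{combine} and the un-contraction step: one must check that gluing $\cP_\lar$ and the $\cP_v$'s after deleting $x_v$ and $y_v$ yields a legitimate partition of $V$ with exactly one terminal per part, and that the tree-mincut-to-graph-mincut correspondence is preserved across the interface between $T_\lar$ and each $T_v$.
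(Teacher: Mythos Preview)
Your overall strategy matches the paper's: strong induction on $|U|$, the Gomory--Hu contraction principle to show $\mincut_{G_v}(\cdot,\cdot)=\mincut_G(\cdot,\cdot)$ on $S^i_v$ and $\mincut_{G_\lar}(\cdot,\cdot)=\mincut_G(\cdot,\cdot)$ on $V\setminus\bigcup_v S^i_v$, and then checking that \ref{combine} assembles a valid partial tree. Your iterative argument for $G_\lar$ (contract the disjoint $S^i_v$'s one at a time, noting each remaining $\partial S^i_{v'}$ stays a mincut) is a fine alternative to the paper's direct submodularity-plus-minimality argument.

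The real gap is your treatment of the partial-tree property when the $T$-path between two terminals $a,b$ crosses an added edge. You assert that ``the minimum-weight edge on the path is realized in $G$ by the cut $(S^i_v,V\setminus S^i_v)$ for the corresponding $v$.'' This is not true in general. Take $a\in V(T_v)$ and $b\in V(T_\lar)$: the path goes through $T_v$, across the added edge of weight $|\partial_G S^i_v|=\mincut_G(s,v)$, and through $T_\lar$. The minimum-weight edge can perfectly well lie strictly inside $T_v$ (say with weight $\mincut_{G_v}(a,z)$ for the terminal $z$ whose part contains $x_v$) or strictly inside $T_\lar$, and then you must argue that \emph{that} edge corresponds to an $(a,b)$-mincut in $G$, not just to a mincut in $G_v$ or $G_\lar$. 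The added edge is not automatically the bottleneck.

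The paper handles exactly this with a case analysis driven by the chain inequality
\[
\mincut_G(a,b)\ \ge\ \min\{\mincut_G(a,v),\ \mincut_G(v,s),\ \mincut_G(s,b)\},
\]
branching on which of the three (or four, when $a\in U_v$ and $b\in U_{v'}$) terms realizes the minimum. In the case $\mincut_G(a,b)\ge\mincut_G(v,s)$, the added edge is indeed the bottleneck and your claim holds. But in the case $\mincut_G(a,v)\le\mincut_G(a,b)<\mincut_G(v,s)$, the bottleneck edge lies inside $T_v$, and one must further argue it lies on the $a$--$z$ subpath (not the $z$--$v$ subpath), using that an edge on the $z$--$v$ path would give a $(v,x_v)$-cut in $G_v$, hence a $(v,s)$-cut in $G$, of value strictly below $\mincut_G(v,s)$---a contradiction. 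This is the missing idea in your cross-region case; without it the proof does not go through.
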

We give a detailed proof of \Cref{lem:correctness} in \Cref{sec:reduction_correct_proof} because it follows using a standard
argument. Here, we give only a high-level argument: although
in each recursion level of \Cref{alg:reduction}, the algorithm refines the tree into
many parts, these refinements can be simulated by a sequence of the
standard refinement steps in the original Gomory-Hu tree algorithm
that splits only one supernode/part into two. So the resulting $(T,\cP)$
is indeed a partial tree of $G$. Next, $(T,\cP)$ captures no mincut of
size more than $k$ because all edges in $T$ have weight at most $k$ by construction. 
It remains to argue why $(T,\cP)$ captures all
mincuts of size at most $k$ separating terminals $U$, let $x,y\in U$
where $\mincut_{G}(x,y)\le k$. We want to say that $x,y$ are in
different parts of $\cP$. There are two main cases. If both $x,y\in U_{\lar}$
or both $x,y\in U_{v}$ for some $v\in D^{i}$, then, by induction, $x$ are $y$ will
be separated in $\cP_{\lar}$ or in $\cP_{v}$ respectively, and they
remain separated in $\cP$ by how the \textsc{Combine} subroutine works. Otherwise,
we have that $x\in U_{v}$ and $y\notin U\setminus U_{v}$, then they
are in different subproblems in the recursion and remain separated
in $\cP$ by how the \textsc{Combine} subroutine works.

The remaining part of this section is for bounding the running time of \Cref{alg:reduction}.
For any $U\subseteq V$, define $f_k(U)$ as the size of the largest subset of vertices in $U$ whose pairwise mincut values are all greater than $k$. The following lemma is inspired by~\cite{AbboudKT20focs}. (In fact, our statement and proof are identical to theirs in the case $k=\infty$.)
\begin{claim}\label{lem:ghtree:random-s}
Let the vertex $s\in U$ be chosen uniformly at random. Then, $\mathbb E[|D^*|]=(|U|-f_k(U))/2$.
\end{claim}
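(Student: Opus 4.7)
The plan is to bound $\mathbb E[|D^*|]$ from below by relating $D^*$ to the structure of a Gomory-Hu tree on $U$ and applying a short combinatorial argument on the component sizes. The key observation is that $f_k(U)$ has a clean tree-theoretic interpretation in terms of the Gomory-Hu tree restricted to edges of weight $>k$.

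First, I would fix any Gomory-Hu tree $T$ for $G$ on terminal set $U$ and delete every edge of $T$ whose weight exceeds $k$; the resulting forest partitions $U$ into components $C_1,\dots,C_m$. Two vertices $u,v\in U$ lie in the same component iff every edge on their $T$-path has weight $>k$, which holds iff $\mincut_G(u,v)>k$. Therefore $f_k(U)=\max_i|C_i|$, and the number of unordered pairs $\{s,v\}\subseteq U$ with $\mincut_G(s,v)\le k$ equals $\binom{|U|}{2}-\sum_i\binom{|C_i|}{2}$. Using $|C_i|\le f_k(U)$ for every $i$ yields $\sum_i|C_i|^2\le f_k(U)\cdot\sum_i|C_i|=f_k(U)\cdot|U|$, so $\sum_i\binom{|C_i|}{2}\le\tfrac{(f_k(U)-1)|U|}{2}$, and there are at least $\tfrac{|U|(|U|-f_k(U))}{2}$ such unordered pairs.

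Second, for each such unordered pair $\{s,v\}$, I would extract an $(s,v)$-mincut by removing any minimum-weight edge on the $s$-$v$ path in $T$; this gives a partition $(A,B)$ of $U$ with $s\in A$, $v\in B$, and $|A|+|B|=|U|$. Since at least one side has cardinality $\le|U|/2$, either $|B|\le|U|/2$, witnessing $v\in D^*(s)$, or $|A|\le|U|/2$, witnessing $s\in D^*(v)$ via the same cut read in reverse. Each unordered pair with $\mincut\le k$ therefore contributes at least $1$ to $\sum_{s\ne v}\mathbb 1[v\in D^*(s)]$, and dividing by $|U|$ to pass from ordered pairs to an expectation over uniform $s$ gives $\mathbb E[|D^*|]\ge(|U|-f_k(U))/2$.

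The step I anticipate being the most delicate is sharpening ``$\ge$'' to the claimed equality. Two sources of slack appear above: a single pair can contribute $2$ rather than $1$ either when the tree cut has $|A|=|B|=|U|/2$ or when $G$ admits an additional $(s,v)$-mincut not witnessed by the chosen $T$-edge, while the component bound $\sum_i|C_i|^2\le f_k(U)|U|$ is loose unless every $|C_i|$ equals $f_k(U)$. Closing this gap should come by restricting attention to the canonical mincut singled out by $T$, so that every unordered pair is counted exactly once; this matches what \ref{step} actually uses, since $S^i_v$ is the $v$-minimal isolating cut rather than an arbitrary mincut, and the remaining counting identity then lines up with the stated formula.
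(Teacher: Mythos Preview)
Your proof of the lower bound $\mathbb E[|D^*|]\ge(|U|-f_k(U))/2$ is correct and is essentially the same argument as the paper's: the paper phrases it via a digraph on $U$ (an arc $(x,y)$ whenever some $(x,y)$-mincut of size $\le k$ has a small $x$-side), while you phrase it via the component structure of a Gomory-Hu tree, but the counting is identical. One small slip: you want to delete the edges of $T$ of weight \emph{at most} $k$, not those exceeding $k$; as written, your two sentences about which edges are deleted and when $u,v$ land in the same component contradict each other, though the conclusion you draw ($f_k(U)=\max_i|C_i|$) is the intended one.

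Your third paragraph is chasing a phantom. The paper's own proof also ends with ``It follows that $\mathbb E[|D^*|]\ge(|U|-f_k(U))/2$'', and only this inequality is ever used downstream (in the recursion-depth analysis). The ``$=$'' in the statement is a typo; equality is false in general (e.g.\ take $U=\{a,b,c\}$ forming a triangle with all pairwise mincuts $\le k$: then $f_k(U)=1$, so the formula predicts $1$, but each vertex can be isolated by a mincut, giving $|D^*(s)|=2$ for every $s$). So you should simply stop after establishing the lower bound; no sharpening is needed or possible.
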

\begin{proof}
Consider the partial tree $(T,\mathcal P)$ that captures all mincuts separating $U$ of size at most $k$ and no mincuts of size more than $k$. In other words, vertices $x,y\in U$ belonging to different parts of $\mathcal P$ iff $\mincut(x,y)\le k$. By the definition of $f_k(U)$, the maximum size of a part in $\mathcal P$ is $f_k(U)$. Consider a digraph on vertex set $U$ where for each pair of vertices $x,y\in U$, we add a directed edge $(x,y)$ if there exists an $(x,y)$-mincut of size at most $k$ where the side containing $x$ has at most $|U|/2$ vertices in $U$. Clearly, for each $x,y\in U$ with $\mincut(x,y)\le k$, we add either $(x,y)$ or $(y,x)$ (or both) to the digraph. Also, for each vertex $x\in U$, the vertices $y\in U$ with $\mincut(x,y)\le k$ are exactly those not in the same part as $u$ in $\mathcal P$, so there are at least $|U|-f_k(U)$ such vertices $y$. Therefore, the total number of arcs entering or leaving $u$ is at least $|U|-f_k(U)$. The total number of arcs in the digraph is at least $|U|(|U|-f_k(U))/2$, so the average out-degree is at least $(|U|-f_k(U))/2$. Note that for the vertex $s$ chosen uniformly at random, the set $|D^*|$ is exactly the out-neighbors of $s$. It follows that $\mathbb E[|D^*|]\ge(|U|-f_k(U))/2$.
\end{proof}

\begin{lemma}\label{lem:depth}
W.h.p., the algorithm \ref{ghtree} has maximum recursion depth $\polylog(n)$.
\end{lemma}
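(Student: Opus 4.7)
The goal is to show that along any root-to-leaf path in the recursion tree of \ref{ghtree}, the number of recursive calls is $\polylog(n)$ w.h.p. I would bound the number of ``small'' calls (recursing into $U_v = S^i_v \cap U$) and ``large'' calls (recursing into $U_{\lar} = U \setminus D^i$) separately. Each small call immediately halves $|U|$, since line~\ref{line:D} of \ref{step} enforces $|S^i_v \cap U| \le |U|/2$; hence any path contains at most $\log_2 n$ small calls. The main task is to show that any maximal segment of consecutive large calls has length $\polylog(n)$ w.h.p.

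For the segment analysis, I would use the potential $\phi(U) := |U| - f_k(U)$. The algorithm terminates at line~2 as soon as the Steiner connectivity exceeds $k$, i.e., precisely when $\phi(U)=0$, so it suffices to show $\phi$ shrinks multiplicatively in expectation at each large call. Combining \Cref{lem:ghtree:random-s} with \Cref{lem:step} gives $\mathbb{E}[|D^i|] \ge \Omega(\phi(U)/\log n)$. The key observation is that, for any partial tree of $G$ capturing exactly the mincuts of size at most $k$, the relation ``$\mincut\le k$'' on $U$ is an equivalence relation whose classes are precisely the parts restricted to $U$; consequently $f_k(U)$ equals the size of the largest such class $C$. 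I would then split into two cases:
\begin{itemize}
\item If $f_k(U) \le |U|/2$, then $\phi(U) \ge |U|/2$, so $\mathbb{E}[|D^i|] \ge \Omega(|U|/\log n)$ and hence $\mathbb{E}[|U_{\lar}|] \le |U|(1-\Omega(1/\log n))$. Since $\phi(U_{\lar})\le |U_{\lar}|$, we get $\mathbb{E}[\phi(U_{\lar})] \le \phi(U)(1-\Omega(1/\log n))$.
\item If $f_k(U) > |U|/2$, I would condition on the event $s\in C$, which occurs with probability $|C|/|U|>1/2$. Any $(s,v)$-mincut of size $\le k$ cannot split the class $C$ (otherwise two vertices of $C$ would be separated by a cut of size $\le k$, contradicting their pairwise mincut exceeding $k$), so $C$ lies entirely on the $s$-side; since $|C|>|U|/2$, the $v$-side has at most $|U|/2$ vertices in $U$, placing every $v\in U\setminus C$ into $D^*(s)$. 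Thus $|D^*|=\phi(U)$ whenever $s\in C$, and \Cref{lem:step} gives $\mathbb{E}[|D^i|\mid s\in C]=\Omega(\phi(U)/\log n)$. Moreover $s\in C$ forces $D^i\cap C=\emptyset$, so $f_k(U_{\lar})\ge |C|=f_k(U)$ and $\phi(U_{\lar}) \le \phi(U) - |D^i|$. Combining, $\mathbb{E}[\phi(U_{\lar})] \le \phi(U)(1-\Omega(1/\log n))$.
\end{itemize}

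In either case, $\phi$ shrinks by a factor $1-\Omega(1/\log n)$ in expectation at each large call, and $\phi\le n$ initially, so after $t=\Theta(\log^3 n)$ large calls $\mathbb{E}[\phi]\le 1/\mathrm{poly}(n)$; since $\phi$ is a non-negative integer, Markov gives $\Pr[\phi>0]\le 1/\mathrm{poly}(n)$, and hence any such segment terminates within $\polylog(n)$ calls w.h.p. A union bound over the at most $n^{O(1)}$ root-to-leaf paths (each uses fresh randomness for $s$ and for the sub-sampling in \ref{step}) yields $\polylog(n)$ depth along every path w.h.p., which combined with the $\log n$ bound on small calls gives the claimed bound.

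The main obstacle I expect is making the concentration step rigorous: $\phi$ decreases only in expectation, and different recursive nodes share some randomness through the graph $G$, so a naive product-of-independent-trials argument is not immediate. The cleanest remedy is to analyze $\log(1+\phi(U^{(t)}))$ as a supermartingale along each fixed path (the randomness within a single call is independent of the history) and apply an Azuma--Hoeffding / Doob-type bound to control its deviation, then union-bound over paths in the recursion tree. Handling the degenerate case $|D^i|=0$ (where $\phi$ does not strictly decrease) requires additionally arguing that this happens with probability bounded away from $1$ whenever $\phi(U)\ge 1$, which follows from the integrality of $|D^*|$ together with $\mathbb{E}[|D^*|]\ge \phi(U)/2$.
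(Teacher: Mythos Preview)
Your Case~1 inference is invalid. From $\phi(U_\lar)\le|U_\lar|$ and $\mathbb E[|U_\lar|]\le|U|(1-c/\log n)$ you only get $\mathbb E[\phi(U_\lar)]\le|U|(1-c/\log n)$; since $\phi(U)$ can be as small as $|U|/2$, this bound is $\le 2\phi(U)(1-c/\log n)$, which is not a multiplicative decrease in $\phi$. And the conclusion is not merely under-argued, it can actually fail: when $s\notin C$ (the largest class), nothing prevents $D^i$ from being contained in $C$. In that event $f_k(U_\lar)=\max_j|C_j\setminus D^i|=|C|-|D^i|$, so $\phi(U_\lar)=|U|-|D^i|-(|C|-|D^i|)=\phi(U)$ and there is no drop at all. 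Your single-potential scheme therefore does not yield the claimed contraction in Case~1.

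The paper avoids this by tracking \emph{two} quantities along the $U_\lar$ branch rather than collapsing everything into $\phi$. With the threshold set at $f_k(U)\lessgtr\tfrac23|U|$: in the ``small $f_k$'' case it shows $\mathbb E[|U_\lar|]\le(1-\Omega(1/\log|U|))\,|U|$; in the ``large $f_k$'' case it uses the purely structural fact that $|C|>|U|/2$ forces $C\subseteq U_\lar$ (each $\partial S^i_v$ has size $\le k$ and cannot split $C$, and $|U_v|\le|U|/2$), hence $f_k(U_\lar)=f_k(U)$ deterministically and $\phi(U_\lar)=\phi(U)-|D^i|$. Thus at every large step one of $|U|$ or $\phi(U)$ contracts by a $(1-\Omega(1/\log n))$ factor in expectation, and since both are bounded by $n$ and $|U|$ is non-increasing, $\polylog(n)$ depth follows. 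Note also that the paper's Case~2 argument is simpler than yours: it does not condition on $s\in C$; the inclusion $C\subseteq U_\lar$ holds for every choice of $s$, so the unconditional bound $\mathbb E[|D^i|]\ge\Omega(\phi(U)/\log n)$ from \Cref{lem:step} and \Cref{lem:ghtree:random-s} can be used directly. Your separation into small/large calls and your supermartingale plan for concentration are fine; the gap is solely the attempt to make $\phi$ alone contract in Case~1.
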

\begin{proof}
By construction, each recursive instance $(G_v,U_v)$ has $|U_v|\le|U|/2$.

By \Cref{lem:step} and \Cref{lem:ghtree:random-s}, over the randomness of $s$ and \ref{step}, we have
\[ \mathbb E[D^i]\ge \Omega\left(\frac{\mathbb E[|D^*|]}{\log|U|}\right) \ge \Omega\left(\frac{|U|-f_k(U))}{\log|U|}\right) .\]
If $f_k(U)\le\frac23|U|$, then this is $\Omega(|U|/\log|U|)$, so the recursive instance $(G_\lar,U_\lar)$ satisfies $\mathbb E[|U_\lar|]\le(1-1/\log|U|)\cdot|U|$. 
Suppose now that $f_k(U)\ge\frac23|U|$, and let $U'\subseteq U$ be the vertex set of size $f_k(U)$ whose pairwise mincut values exceed $k$. By construction, the entire set $U'$ is contained in either $U_\lar$ or $U_v$ for some $v\in D^i$, and since $|U_v|\le|U|/2$ for all $v\in D^i$, we must have $U'\subseteq U_\lar$. In other words, $f_k(U_\lar)=f_k(U)$. So the recursive instance $(G_\lar,U_\lar)$ satisfies
\begin{align*}
     \mathbb E[|U_\lar|-f_k(U_\lar)] = \mathbb E[|U_\lar|]-f_k(U) &\ge |U|-\Omega\left(\frac{|U|-f_k(U)}{\log|U|}\right)-f_k(U) \\&= \left(1-\Omega\bigg(\frac1{\log|U|}\bigg)\right)(|U|-f_k(U)) .
\end{align*}

Therefore, each recursive branch has either $|U|-f_k(U)$ dropped by factor $(1-\Omega(1/\log|U|))$ in expectation, or $|U|$ dropped by factor $(1-\Omega(1/\log|U|))$ in expectation (and $|U|-f_k(U)$ can potentially increase\footnote{Actually $|U|-f_k(U)$ cannot increase, but we do not prove it since it is unnecessary.}). It follows that w.h.p., all branches reach $|U|=f_k(U)$ or $|U|=1$ by $\polylog(n)$ recursive calls. In both cases, the value $\lambda$ in line~\ref{line:steiner} has $\lambda>k$, so the algorithm terminates.
\end{proof}

\begin{lemma}[Running time]\label{lem:ghtree:runtime}
For an unweighted (respectively, weighted) graph $G=(V,E)$, and terminals $U\subseteq V$, and value $k\ge0$, $\ref{ghtree}(G,U,k)$ takes time $\tilde{O}(m)$ plus calls to \sscv and max-flow on unweighted (respecively, weighted) instances with a total of $\tilde{O}(n)$ vertices and $\tilde{O}(m)$ edges.
\end{lemma}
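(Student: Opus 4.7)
The plan is a three-part analysis: bound the work done in one invocation of \ref{ghtree} (ignoring recursion), invoke \Cref{lem:depth} to bound the recursion depth by $\polylog(n)$, and then amortize the total vertex and edge counts across all subproblems.

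For a single call to \ref{ghtree}$(G'=(V',E'),U',k)$ with $n'=|V'|$ and $m'=|E'|$, the Steiner connectivity computation in line~\ref{line:steiner} runs in $\tilde O(m')$ time plus max-flow calls on graphs totalling $\tilde O(n')$ vertices and $\tilde O(m')$ edges by~\cite{CQ20}. The subroutine \ref{step} performs $O(\log|U'|)$ iterations; each iteration invokes the Isolating Cuts Lemma (\Cref{lem:isolating}) on $G'$, costing $\tilde O(m')$ time plus $O(\log n)$ max-flow calls on graphs of size $(n',m')$, and makes one \sscv query on an $(n',m')$-graph. Constructing the recursive graphs $G_v,G_\lar$ and running \ref{combine} take $O(n'+m')$ time. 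So, outside of recursive calls, each invocation of \ref{ghtree} costs $\tilde O(m')$ time plus $\tilde O(1)$ max-flow and \sscv calls on graphs of size $(\tilde O(n'),\tilde O(m'))$ that are unweighted whenever $G$ is.

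To amortize, let $N_\ell$ and $M_\ell$ denote the sums of $n'$ and $m'$ over all subproblems at recursion depth $\ell$. Since \Cref{lem:depth} gives depth $L=\polylog(n)$ w.h.p., the total running time is $\tilde O(\sum_\ell M_\ell)$ plus oracle calls on graphs totalling $\tilde O(\sum_\ell N_\ell)$ vertices and $\tilde O(\sum_\ell M_\ell)$ edges. To bound $\sum_\ell N_\ell$, I would observe that at every level the \emph{real} (non-contracted) vertices across subproblems form a partition of $V$ and thus contribute at most $n$, while the contracted vertices $x_v,y_v$ produced at one recursion step are in bijection with $D^{i,j}\subseteq U^{(j)}$. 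Since terminal sets $U^{(j)}$ partition across recursive calls, $\sum_j|U^{(j)}|\le|U|$ at every level, giving $N_\ell\le n+O(|U|)$ and hence $\sum_\ell N_\ell=\tilde O(n)$.

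The main obstacle is bounding $\sum_\ell M_\ell=\tilde O(m)$. The plan is to set up the recurrence $M_{\ell+1}=M_\ell+B_\ell$, where $B_\ell:=\sum_{j,v}|\partial_{G^{(j)}}S^{i,j}_v|$ is the total boundary-edge count at level $\ell$; this follows from a direct edge count, since contraction duplicates each edge crossing some $\partial S^{i,j}_v$ (once into $G_v$ as an edge incident to $x_v$, once into $G_\lar$ as an edge incident to $y_v$), while non-boundary edges contribute once. The key telescoping is $\sum_\ell B_\ell=W(T):=\sum_{e\in E(T)}w_T(e)$, because each boundary cut computed during the algorithm becomes the weight of a tree edge added by line~\ref{line:combine-T} of \ref{combine}. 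Since $T$ has at most $|U|-1\le n-1$ edges each of weight at most $k$, we obtain $W(T)\le k|U|$; in the regime where \ref{ghtree} is actually invoked---namely after pre-sparsifying to a $k$-connectivity certificate with $O(nk)$ edges---this yields $W(T)=O(m)$, so $M_\ell\le m+W(T)=\tilde O(m)$ at every level and $\sum_\ell M_\ell=\tilde O(m)$. For the weighted case we instead combine parallel edges upon contraction, and the separate charging argument alluded to in the remark following the lemma statement is required to bound the total number of combined edges by $\tilde O(m)$; the overall structure of the amortization is otherwise identical.
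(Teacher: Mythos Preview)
Your overall structure matches the paper's: bound the per-call cost, invoke \Cref{lem:depth} for $\polylog(n)$ depth, and show that each recursion level has $O(n)$ vertices and $O(m)$ edges in total. Your vertex count and the recurrence $M_{\ell+1}=M_\ell+B_\ell$ with $\sum_\ell B_\ell = W(T)$ are both correct and agree with the paper's accounting.

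The gap is in how you bound $W(T)$ in the unweighted case. You use $W(T)\le k|U|$ and then appeal to an assumption that $G$ has already been sparsified to $O(nk)$ edges. But the lemma is stated for an arbitrary unweighted graph with no such hypothesis, so this does not prove the claim as written; if $m\ll nk$ your bound fails. The paper instead bounds $W(T)\le 2m$ unconditionally: root $T$ at any vertex, and for each non-root $v$ with parent $u$, the singleton cut $\partial\{v\}$ is a $(u,v)$-cut, so $w_T(u,v)=\mincut_G(u,v)\le\deg_G(v)$; summing gives $W(T)\le\sum_v\deg_G(v)\le 2m$. Plugging this into your recurrence yields $M_\ell\le m+W(T)\le 3m$ at every level with no extra assumption, and the rest of your argument goes through unchanged.
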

\begin{proof}
For a given recursion level, consider the instances $\{ (G_i,U_i,k)\}_i$ across that level. By construction, the terminals $U_i$ partition $U$. Moreover, the total number of vertices over all $G_i$ is at most $n+2(|U|-1)=O(n)$ since each branch creates $2$ new vertices and there are at most $|U|-1$ branches. 

To bound the total number of edges, we consider the unweighted and weighted cases separately, starting with the unweighted case. The total number of new edges created is at most the sum of weights of the edges in the final Gomory-Hu Steiner tree. For an unweighted graph, this is $O(m)$ by the following well-known argument. Root the Gomory-Hu Steiner tree $T$ at any vertex $r\in U$; for any $v\in U\setminus r$ with parent $u$, the cut $\partial\{v\}$ in $G$ is a $(u,v)$-cut of value $\deg(v)$, so $w_T(u,v)\le\lambda_G(u,v)\le\deg(v)$. Overall, the sum of the edge weights in $T$ is at most $\sum_{v\in U}\deg(v)\le2m$.

For the weighted case, define a \emph{parent} vertex in an instance as a vertex resulting from either (1)~contracting $V\setminus S_v$ in some previous recursive $G_v$ call, or (2)~contracting a component containing a parent vertex in some previous recursive call. There are at most $O(\log n)$ parent vertices: at most $O(\log n)$ can be created by~(1) since each $G_v$ call decreases $|U|$ by a constant factor, and (2)~cannot increase the number of parent vertices. Therefore, the total number of edges adjacent to parent vertices is at most $O(\log n)$ times the number of vertices. Since there are $O(n)$ vertices in a given recursion level, the total number of edges adjacent to parent vertices is $O(n\log n)$ in this level. Next, we bound the number of edges not adjacent to a parent vertex by $m$. To do so, we first show that on each instance, the total number of these edges over all recursive calls produced by this instance is at most the total number of such edges in this instance. Let $P\subseteq V$ be the parent vertices; then, each $G_v$ call has exactly $|E(G[S_v\setminus P])|$ edges not adjacent to parent vertices (in the recursive instance), and the $G_\lar$ call has at most $|E(G[V\setminus P]) \setminus \bigcup_{v\in R}E(G[S_v\setminus P])|$, and these sum to $|E(G[V\setminus P])|$, as promised. This implies that the total number of edges not adjacent to a parent vertex at the next level is at most the total number at the previous level. Since the total number at the first level is $m$, the bound follows.

Therefore, there are $O(n)$ vertices and $\tO(m)$ edges in each recursion level. By \Cref{lem:depth}, there are $\text{polylog}(n)$ levels, for a total of $\tilde{O}(n)$ vertices and $\tilde{O}(m)$ edges. In particular, the instances to the max-flow calls have $\tilde{O}(n)$ vertices and $\tilde{O}(m)$ edges in total.
\end{proof}

\subsection{Proof of \Cref{lem:correctness}}
\label{sec:reduction_correct_proof}

To prove \Cref{lem:correctness}, we first introduce a helper proposition, which follows from the standard argument on non-crossing cuts used in the original Gomory-Hu algorithm. We include the proof only for completeness.

\begin{prop}\label{lem:ghtree:exact}
For any distinct vertices $p,q\in U_\lar$, we have $\mincut_{G_\lar}(p,q) = \mincut_G(p,q)$. The same holds with $U_\lar$ and $G_\lar$ replaced by $U_v$ and $G_v$ for any $v\in D^i$.
\end{prop}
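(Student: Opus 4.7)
The plan is to derive both halves of the proposition from the classical Gomory--Hu non-crossing (uncrossing) argument. The crucial preliminary observation is that for every $v\in D^{i}$, the set $S_{v}^{i}$ is the $v$-side of an $(s,v)$-mincut of $G$. Indeed, \Cref{lem:isolating} guarantees that $S_{v}^{i}$ is the $v$-side of a $(v,R^{i}\setminus v)$-mincut, so $|\partial_{G}S_{v}^{i}|=\mincut_{G}(v,R^{i}\setminus v)\le\mincut_{G}(s,v)$ because $s\in R^{i}\setminus v$. On the other hand, the admission rule in line~\ref{line:D} of \ref{step}, together with the \sscv oracle, forces $|\partial_{G}S_{v}^{i}|=\tilde\lambda_{v}=\min\{\mincut_{G}(s,v),k\}$. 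Combining, we get $\mincut_{G}(s,v)\le k$ and $|\partial_{G}S_{v}^{i}|=\mincut_{G}(s,v)$; in other words, $(S_{v}^{i},V\setminus S_{v}^{i})$ is an $(s,v)$-mincut of $G$.

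Next I would invoke the classical uncrossing lemma: if $(X,\bar{X})$ is any $(s,t)$-mincut of $G$ and $a,b\in X$ lie on the same side, then $\mincut_{G/\bar{X}}(a,b)=\mincut_{G}(a,b)$. The direction $\ge$ is immediate since contraction never decreases mincuts. For the reverse, take any $(a,b)$-mincut $(A,B)$ of $G$ with $a\in A$, $b\in B$, and uncross $A$ with $X$. Using submodularity $f(A)+f(X)\ge f(A\cap X)+f(A\cup X)$ and its posimodular counterpart $f(A)+f(X)\ge f(A\setminus X)+f(X\setminus A)$ of the undirected cut function, together with a small case split on whether $s,t$ lie in $A$ or $B$ (and the fact that every $(s,t)$-cut has value at least $|\partial X|$), one concludes that either $A\cap X$ or $X\setminus A$ is a $(a,b)$-mincut contained entirely inside $X$; this cut projects to an equal-value $(a,b)$-cut in $G/\bar{X}$.

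The two halves of the proposition now follow cleanly. For $G_{v}$: apply the lemma with $(X,\bar{X})=(S_{v}^{i},V\setminus S_{v}^{i})$ and $(s,t)=(v,s)$, noting that $p,q\in U_{v}\subseteq S_{v}^{i}$ both lie on the $v$-side. For $G_{\lar}$: since the sets $\{S_{v}^{i}\}_{v\in D^{i}}$ are pairwise disjoint by \Cref{lem:isolating} and every $p,q\in U_{\lar}$ sits outside all of them, I would apply the lemma iteratively, contracting the $S_{v}^{i}$'s one at a time. The only thing to check between iterations is that each unprocessed $S_{v'}^{i}$ remains an $(s,v')$-mincut in the partially contracted graph. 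This is automatic: disjointness means $\partial S_{v'}^{i}$ is unchanged under contracting any other $S_{v}^{i}$, while $\mincut(s,v')$ cannot decrease under contraction yet cannot exceed the size of the surviving $(s,v')$-cut $\partial S_{v'}^{i}$, so equality is preserved. The only mildly fiddly step is the case split in the uncrossing lemma itself; everything else is mechanical.
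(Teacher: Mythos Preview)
Your proof is correct, but it takes a somewhat different route from the paper's.

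The paper never uses that $S^i_v$ is an $(s,v)$-mincut (your preliminary observation). Instead it works directly with the property that $S^i_v$ is the \emph{$v$-minimal} $(v,R^i\setminus v)$-mincut, coming straight from \Cref{lem:isolating}. Concretely, for $G_\lar$ the paper fixes a $(p,q)$-mincut side $S$ and shows in one stroke that every $S^i_v$ lies entirely inside $S$ or entirely outside it: assuming $v\in S$, submodularity applied to $S$ and $S^i_v$ forces $S^i_v\cap S$ to be a $(v,R^i\setminus v)$-mincut (here the paper uses that $p,q\in U_\lar$ implies $p,q\notin S^i_v$, so $S^i_v\cup S$ is still a $(p,q)$-cut), and then minimality of $S^i_v$ gives $S^i_v\subseteq S$. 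This handles all the sets simultaneously with no iteration.

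Your approach instead (i) first certifies that each $S^i_v$ is an $(s,v)$-mincut via the admission condition $\tilde\lambda_v=\min\{\mincut(s,v),k\}$, (ii) invokes the classical Gomory--Hu uncrossing lemma as a black box, and (iii) for $G_\lar$ contracts the $S^i_v$'s one at a time, checking that each remaining $S^i_{v'}$ stays an $(s,v')$-mincut after earlier contractions. Each step is fine; the iterative check is exactly the right thing to verify. What you gain is modularity (you can cite the GH lemma and move on); what the paper gains is a shorter, non-iterative argument that avoids appealing to the \sscv condition at all---minimality alone suffices.
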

\begin{proof}
Since $G_\lar$ is a contraction of $G$, we have $\mincut_{G_\lar}(p,q) \ge \mincut_G(p,q)$. To show the reverse inequality, fix any $(p,q)$-mincut in $G$, and let $S$ be one side of the mincut. We show that for each $v\in  D^i$, either $S^i_v\subseteq S$ or $S^i_v\subseteq V\setminus S$. Assuming this, the cut $\partial_G S$ stays intact when the sets $S^i_v$ are contracted to form $G_\lar$, so $\mincut_{G_\lar}(p,q) \le |\partial_G S| = \mincut_G(p,q)$.

Consider any $v\in D^i$, and suppose first that $v\in S$. Then, $S^i_v\cap S$ is still a $(v,R^i\setminus v)$-cut, and $S^i_v\cup S$ is still a $(p,q)$-cut. By the submodularity of cuts,
\[ |\partial_GS^i_v| + |\partial_GS| \ge |\partial_G(S^i_v\cup S)| + |\partial_G(S^i_v\cap S)|. \]
In particular, $S^i_v\cap S$ must be a $(v,R^i\setminus v)$-mincut. Since $S^i_v$ is the $v$-minimal $(v,R^i\setminus v)$-mincut by \Cref{lem:isolating} called in subprocedure \ref{step}, it follows that $S^i_v\cap S = S^i_v$, or equivalently, $S^i_v\subseteq S$.

Suppose now that $v\notin S$. In this case, we can swap $p$ and $q$, and swap $S$ and $V\setminus S$, and repeat the above argument to get $S^i_v\subseteq V\setminus S$.

The argument for $U_v$ and $G_v$ is identical, and we skip the details.
\end{proof}

\begin{proof}[Proof (\Cref{lem:correctness}).]
By construction, all edges in $T$ have weight at most $k$ (i.e.,~it captures no mincut of size more than $k$). It remains to show that it captures all mincuts separating $U$ of size at most $k$. That is, for all $x,y\in U$ with $\mincut(x,y)\le k$, there is an edge on the $x$-$y$ path in $T$ of weight $\mincut(x,y)$.

We apply induction on $|U|$. 
By induction, the recursive outputs $(T_\lar,\mathcal P_\lar)$ and $(T_v,\mathcal P_v)$ are partial trees capturing all mincuts separating $U_\lar$ in $G_\lar$ and $U_v$ in $G_v$, respectively, of size at most $k$.

First, consider $x,y\in U_\lar$ with $\mincut(x,y)\le k$, so that the partition $\mathcal P_\lar$ separates $x$ and $y$. Let $(u,u')$ be the minimum-weight edge on the $x$-$y$ path in $T_\lar$, and let $U'_\lar\subseteq U_\lar$ be the vertices of the connected component of $T_\lar \setminus (u,u')$ containing $x$, so that $V_{U'_\lar}$ is an $(s,t)$-mincut in $G_\lar$ with value $w_T(u,u')$. Define $U'\subseteq U$ as the vertices of the connected component of $T \setminus (u,u')$ containing $x$. By construction of $(T,\mathcal P)$,  
$G_\lar$ is simply $G$ with all vertex sets $S^i_v$ contracted to $y_v$ for all $v\in D^i$. 
Similarly, 
$V_{U'_\lar}$ (union of parts $V_z:z\in U'_\lar$ from $\mathcal P_\lar$) is simply
the set $V_{U'}$ (union of parts $V_z:z\in U'$ from $\mathcal P$) where all vertex sets $S^i_v$ are contracted to $y_v$ for all $v\in D^i$.
So we conclude that $w_{G_\lar}(V_{U'_\lar}) = w_G(V_{U'})$. By \Cref{lem:ghtree:exact}, we have $\mincut_G(x,y)=\mincut_{G_\lar}(x,y)$ are equal, so $w_G(V_{U'})$ is an $(x,y)$-mincut in $G$. In other words, the partial tree condition for $(T,\mathcal P)$ is satisfied for all $x,y\in U_\lar$ with $\mincut(x,y)\le k$. A similar argument handles the case $x,y\in U_v$ with $\mincut(x,y)\le k$ for some $v\in D^i$.

Consider now $x,y\in U$ with $\mincut(x,y)\le k$, and either $x\in U_v$ and $y\in U_\lar$, or $x\in U_v$ and $y\in U_{v'}$ for distinct $v,v'\in D^i$. Suppose first that $x\in U_v$ and $y\in U_\lar$. By considering which sides $v$ and $s$ lie on the $(x,y)$-mincut $(S,V\setminus S)$, we have
\[ |\partial_GS|=\mincut(x,y)\ge\min\{\mincut(x,v),\mincut(v,s),\mincut(s,y)\} .\]
We case on which of the three mincut values $\mincut(x,y)$ is greater than or equal to. 

\begin{enumerate}
\item If $\mincut(x,y)\ge\mincut(v,s)$, then since $S^i_v$ is a $(v,s)$-mincut that is also an $(x,y)$-cut, we have $\mincut(x,y)=\mincut(v,s)\le k$. By construction, there is an edge $e$ of weight $|\partial_G S^i_v|=|\partial_G S|$ on the $x$-$y$ path in $T$. There cannot be edges on the $x$-$y$ path in $T$ of smaller weight, since each edge corresponds to a $(x,y)$-cut in $G$ of the same weight. Therefore, $e$ is the minimum-weight edge on the $x$-$y$ path in $T$.\label{ghtree:case1}
\item Suppose now that $\mincut(x,v)\le \mincut(x,y)<\mincut(v,s)$. Let $z\in U_v$ be the vertex with $x_v\in V_z$ (for partition $\mathcal P_v$). Since $\mincut(x,v)\le\mincut(x,y)\le k$, the vertices $x,v$ are separated by the partition $\mathcal P_v$, and the minimum-weight edge $e$ on the $x-v$ path in $T_v$ has weight $\mincut(x,v)$. This edge $e$ cannot be on the $v-z$ path in $T_v$, since otherwise, we would obtain a $(v,x_v)$-cut of value $\mincut(x,v)$ in $G_v$, which becomes a $(v,s)$-cut in $G$ after expanding the contracted vertex $x_v$; this contradicts our assumption that $\mincut(x,v)<\mincut(v,s)$. It follows that $e$ is on the $x-z$ path in $T_v$ which, by construction, is also on the $x-y$ path in $T$. Once again, the $x-y$ path cannot contain an edge of smaller weight. \label{ghtree:case2}
\item The final case $\mincut(s,y)\le\mincut(x,y)<\mincut(v,s)$ is symmetric to case~\ref{ghtree:case2}, except we argue on $T_\lar$ and $G_\lar$ instead of $T_v$ and $G_v$.
\end{enumerate}

Suppose now that $x\in U_v$ and $y\in U_{v'}$ for distinct $v,v'\in D^i$. By considering which sides $v,v',s$ lie on the $(x,y)$-mincut, we have
\[ |\partial_GS|=\mincut(x,y)\ge\min\{\mincut(x,v),\mincut(v,s),\mincut(s,v'),\mincut(v',y)\} .\]
We now case on which of the four mincut values $\mincut(x,y)$ is greater than or equal to. 
\begin{enumerate}
\item If $\mincut(x,y)\ge\mincut(v,s)$ or $\mincut(x,y)\ge\mincut(s,v')$, then the argument is the same as case~\ref{ghtree:case1} above.
\item If $\mincut(x,v)\le \mincut(x,y)<\mincut(v,s)$ or $\mincut(y,v')\le\mincut(x,y)<\mincut(v',s)$, then the argument is the same as case~\ref{ghtree:case2} above.
\end{enumerate}
This concludes all cases, and hence the proof.
\end{proof}

\end{document}